\documentclass[aos]{imsart}

\RequirePackage{amsthm,amsmath,amsfonts,amssymb}
\RequirePackage[numbers]{natbib}
\RequirePackage[colorlinks,citecolor=blue,urlcolor=blue]{hyperref}
\RequirePackage{graphicx}
\RequirePackage{booktabs}
\RequirePackage[flushleft]{threeparttable}
\usepackage{float}
\usepackage{tikz}
\newcommand*\circled[1]{\tikz[baseline=(char.base)]{\node[shape=circle,draw,inner sep=2pt] (char) {#1};}}

\usepackage{multirow}
\usepackage{xcolor}
\usepackage[linesnumbered,ruled,vlined]{algorithm2e}
\usepackage{multicol,caption}
\usepackage{subcaption}

\usepackage{array, ltablex, multirow}
\usepackage{booktabs}
\newcolumntype{Y}{>{\centering\arraybackslash}X}

\usepackage{mathtools}
\usepackage[utf8]{inputenc}

\newcommand{\E}{\mathbb{E}}

\newcommand{\boldB}{\boldsymbol{B}}

\newcommand{\boldX}{\boldsymbol{X}}

\newcommand{\boldf}{\boldsymbol{f}}

\newcommand{\boldu}{\boldsymbol{u}}
\newcommand{\boldv}{\boldsymbol{v}}
\newcommand{\boldw}{\boldsymbol{w}}
\newcommand{\boldx}{\boldsymbol{x}}

\newcommand{\boldc}{\boldsymbol{c}}
\newcommand{\boldp}{\boldsymbol{p}}

\newcommand{\bmu}{\boldsymbol{\mu}}

\newcommand{\bSigma}{\boldsymbol{\Sigma}}

\startlocaldefs
\newtheorem{assumption}{Assumption}

\newtheorem{theorem}{Theorem}[section]
\newtheorem{lemma}[theorem]{Lemma}
\newtheorem{corollary}[theorem]{Corollary}
\theoremstyle{remark}
\newtheorem{definition}[theorem]{Definition}


\endlocaldefs

\begin{document}
	
	\begin{frontmatter}
		\title{Variational Bayes Algorithm and Posterior Consistency of Ising Model Parameter Estimation}
		\runtitle{Variational Bayes for Ising Models}
		
		\begin{aug}
			\author[]{\fnms{Minwoo} \snm{Kim}\ead[label=e1,mark]{kimminw3@msu.edu}},
			\author[]{\fnms{Shrijita} \snm{Bhattacharya}\ead[label=e2,mark]{bhatta61@msu.edu}}
			\and
			\author[]{\fnms{Tapabrata} \snm{Maiti}\ead[label=e3,mark]{maiti@msu.edu}}
			\address[]{Department of Statistics and Probability,
				Michigan State University,
				\printead{e1,e2,e3}}
			
		\end{aug}
		
		\begin{abstract}
			Ising models originated in statistical physics and are widely used in modeling spatial data and computer vision problems. However, statistical inference of this model remains challenging due to intractable nature of the normalizing constant in the likelihood. Here, we use a pseudo-likelihood instead to study the Bayesian estimation of two-parameter, inverse temperature, and magnetization, Ising model with a fully specified coupling matrix.  We develop a computationally efficient variational Bayes procedure for model estimation.  Under the Gaussian mean-field variational family, we derive posterior contraction rates of the variational posterior obtained under the pseudo-likelihood. We also discuss the loss incurred due to variational posterior over true posterior for the pseudo-likelihood approach. Extensive simulation studies validate the efficacy of mean-field Gaussian and bivariate Gaussian families as the possible choices of the variational family for inference of Ising model parameters.
		\end{abstract}
		
		\begin{keyword}[class=MSC2020]
			\kwd[Primary ]{62F15}
			\kwd[; secondary ]{62E20}
		\end{keyword}
		
		\begin{keyword}
			\kwd{Ising model}
			\kwd{Variational Bayes}
			\kwd{Pseudo-likelihood}
		\end{keyword}
		
	\end{frontmatter}
	
	\section{Introduction}
	A popular way of modeling a spatially dependent binary vector $\boldx = (x_1, \dots, x_n)^\top $ is to take advantage of Ising model named after the physicist Ernst Ising \cite{ising1925beitrag} which has been used in a wide range of applications including spatial data analysis and computer vision. Many different versions of Ising model have emerged in the literature. In this paper, we focus on two-parameter Ising model, which has an inverse temperature parameter $\beta > 0$ and a magnetization parameter $B \neq 0$, with a symmetric coupling matrix $A_n \in \mathbb{R}^{n \times n}$. An Ising model is often represented by an undirected graph in which each vertex represents $x_i \in \{-1, 1\}$ and the connections between $x_i$'s are determined by $A_n$. Here, $\beta$ characterizes the strength of interactions among $x_i$'s and $B$ represents external influence on $\boldx$. In the first place, Ising model has been introduced for the relations between atom spins \cite{brush1967history} with the domain $\{-1,1\}^n$. While we work with the domain $\{-1,1\}^n$, in many current applications, Ising model has been defined with different domain $\{0,1\}^n$. One can read \cite{haslbeck2020interpreting} for more details on two different domains.  
	
	Estimation of Ising model parameters has received considerable attention in statistics and computer science literature. The existing literature can be broadly divided into two groups. Some  literature assume that i.i.d. (independently and identically distributed) copies of data ($\boldx$ vector) are available for inference, \cite{anandkumar2012high}, \cite{bresler2015efficiently}, \cite{lokhov2018optimal}, \cite{ravikumar2010high}, and \cite{xue2012nonconcave}. Another category of literature assumes that only one sample is observable,  \cite{bhattacharya2018inference}, \cite{chatterjee2007estimation}, \cite{comets1992consistency}, \cite{comets1991asymptotics},
	\cite{Ghosal:2020},
	\cite{gidas1988consistency}, and \cite{guyon1992asymptotic}.  Under the assumption of only one observation, \cite{comets1991asymptotics} showed that the MLE of $\beta > 0$ for Curie-Weiss model is consistent if $B \neq 0$ is known, and vice versa. They also proved that the joint MLE does not exist when neither $\beta$ nor $B$ is given. In this regard, \cite{Ghosal:2020} addressed joint estimation of $(\beta, B)$ using pseudo-likelihood and showed that the pseudo-likelihood estimator is consistent under some conditions on coupling matrix $A_n$. 
	
	Some previous studies which utilized Bayesian methodology along with Ising model include those of \cite{li2010bayesian} which explored Bayesian variable selection with Ising prior to capture structural information. \cite{li2015spatial} proposed a joint Ising and Dirichlet Process (Ising-DP) prior for simultaneous selection and clustering in Bayesian framework. In their PhD thesis, \cite{zhou2009bayesian} dealt with estimation of parameters of Ising and Potts models using Bayesian approach. All these  works assume replicated data and do not provide theoretical validation. In this article, we provide a Bayesian estimation methodology for  model parameters in an Ising model when one observes the data only once.
	
	{\em Methodological Contribution:}
	One of the main challenges in the Bayesian estimation of Ising models lies in the intractable nature of the normalizing constant in the likelihood. Following the works of \cite{Ghosal:2020}, \cite{bhattacharya2018inference} and \cite{okabayashi2011extending}, we replace the true likelihood of the Ising model by a pseudo-likelihood. As a first contribution, we establish that the posterior based on the pseudo-likelihood is consistent for a suitable choice of the prior distribution. Further, we use variational Bayes (VB) approach which has recently become a popular and computationally powerful alternative to MCMC. In order to approximate the unknown posterior distribution using VB, we need to choose an appropriate variational family. We propose a Gaussian mean field family and general bivariate normal family with transformation of the parameters to $(\log\beta, B)$. For implementation of VB, we employ a black box variational inference (BBVI), \cite{Ranganath:2014}. In BBVI, we need  to evaluate the likelihood to compute the gradient estimate. But the existence of an unknown normalizing constant in likelihood of Ising model prevents us using BBVI directly. So, we use pseudo-likelihood as in \cite{Ghosal:2020}. Replacing the true likelihood of Ising model with pseudo-likelihood, we are able to compute all the quantities needed for implementing BBVI. 
	
	{\em Theoretical Contribution:} 
	The main theoretical contribution of this work lies in establishing the  consistency of the variational posterior for the Ising model with the true likelihood replaced by the pseudo-likelihood. In this direction, we first establish the rates at which the true posterior based on the pseudo-likelihood concentrates around the $\varepsilon_n$- shrinking neighborhoods of the true parameter. With a suitable bound on the Kulback-Leibler distance between the true and the variational posterior, we next establish the rate of contraction for the variational posterior and demonstrate that the variational posterior also concentrates around $\varepsilon_n$-shrinking neighborhoods of the true parameter. These results have been derived under three set of assumptions on the coupling matrix $A_n$ (see section \ref{section3} for more details). Indeed, we demonstrate that the variational posterior consistency holds for the same set of assumptions on $A_n$ as those needed for the convergence of the maximum likelihood estimates based on the pseudo-likelihood. One of the main caveats in establishing the posterior contraction rates under the pseudo-likelihood structure is in ensuring that the concentration of the variational posterior occurs in $\mathbb{P}_0^{(n)}$ probability where $\mathbb{P}_0^{(n)}$ is the distribution induced by the true likelihood and not the pseudo-likelihood. Indeed, we could show that in $\mathbb{P}_0^{(n)}$ probability, the contraction of variational posterior happens at the rate $1-1/M_n$ in contrast to the faster  rate $1-\exp(-Cn\varepsilon_n^2), C>0$ for the true posterior. As a final theoretical contribution, we establish that the variational Bayes estimator convergences to the true parameters at the rate $1/\varepsilon_n$ where $\varepsilon_n$ can be chosen $n^{-\delta}$, $0<\delta<1/2$ provided the $A_n$ matrix satisfies certain regularity assumptions.
	
	The rest of the paper is organized as follows: Section \ref{section2} defines the likelihood and pseudo-likelihood of Ising model with two parameters $(\beta, B)$ and provides the details of our Bayesian estimation using variational inference approach. In Section \ref{section3} we discuss our main theoretical developments and sketch of the proof. The details of the proof are deferred to the Appendix. The numerical studies are provided in Section \ref{section4}. We give a comparison of our variational Bayes estimates to existing maximum likelihood estimators based on pseudo-likelihood.

	\section{Model and methods}
	\label{section2}
	\subsection{Ising model}
	For a representation of an Ising model with two parameters $\beta > 0$ and  $B \neq 0$, consider an undirected graph which has $n$ vertices $x_i$, $i=1,\dots,n$. Each vertex of the graph takes a value either -1 or 1, i.e., $x_i \in \{-1,1\}$. Then, we define a likelihood of Ising model as the probability of the vector $\boldx = (x_1, \dots, x_n)^\top \in \{-1,1\}^n$:
	\begin{align}
		\label{likel}
		\mathbb{P}_{\beta,B}^{(n)}(\boldX = \boldx) = \frac{1}{Z_n(\beta, B)} \exp \left(\frac{\beta}{2} \boldx^\top A_n \boldx + B\sum_{i=1}^n x_i \right),
	\end{align}
	where $Z_n(\beta,B)$ is a normalizing constant which makes sum of \eqref{likel} over the support $\{-1,1\}^n$ equal to 1. $A_n$ is a coupling matrix of size $n \times n$ which determines the connections between the coordinates of $\boldx$.  More precisely, let $\mathcal{E} = \{(i, j) \mid i \sim j,~ 1 \leq i,j \leq n \}$ be the set of edges in the graph where $i \sim j$ denote that the vertices $i$ and $j$ are connected. Then $A_n$ is a symmetric matrix with $A_n(i,j) = 0$ for all $(i,j) \notin \mathcal{E}$ and $A_n(i,j) > 0$ for all $(i,j) \in \mathcal{E}$.
	
	\noindent {\it For the purpose of this paper, we work with a scaled adjacency matrix $A_n$ whose all diagonal elements are  zeros and the other elements are non-negative. }
	\begin{definition}[Scaled  adjacency  matrix] A scaled adjacency matrix for a graph $G_n$ with $n$ vertices is defined as:
		\begin{align}
			A_n(i,j) := \begin{cases} \frac{n}{2|G_n|} & \mbox{if } (i,j) \in \mathcal{E}  \\ 0 & \mbox{otherwise.} \end{cases}, \nonumber
		\end{align}
		where $|G_n|$ denotes the number of edges.
	\end{definition}
	In our study, we assume that only one $\boldx \in \{-1, 1\}^n$ is observed. In this regard, estimating all the elements of $A_n$ is impossible because $A_n$ has $n(n-1)/2$ distinct values. In this work, we primarily focus on the problem of estimation of the  parameters $(\beta, B)$ under the assumption of a fully known coupling matrix $A_n$. The same set up considered by \cite{bhattacharya2018inference}, \cite{Ghosal:2020}, and \cite{okabayashi2011extending}.
	
	\subsection{Pseudo-likelihood}
	It is challenging to use the likelihood \eqref{likel} directly because of the unknown normalizing constant $Z_n(\beta, B)$. Due to the intractable nature of the likelihood, the standard Bayesian implementation is computationally intractable. We thereby propose the use of the conditional probability of $x_i$ given others. It is easily calculated because $x_i$ is binary:
	\begin{align*}
		\mathbb{P}_{\beta,B}^{(n)}\big(X_i = 1 | X_j, j\neq i \big) = \frac{e^{\beta m_i(\boldx) + B}}{e^{\beta m_i(\boldx) + B} + e^{-\beta m_i(\boldx) - B}},
	\end{align*}
	where $m_i(\boldx) = \sum_{j=1}^n A_n(i,j)x_j$. 
	
	The pseudo-likelihood of Ising model corresponding to the likelihood in \eqref{likel}, is defined as the product of  one dimensional conditional distributions (see \cite{Ghosal:2020}, for further details):
	\begin{align}
		\label{pseudo}
		\nonumber   & \prod_{i=1}^n \mathbb{P}_{\beta,B}^{(n)}\left(X_i = x_i \mid X_j, j\neq i \right)
		\\
		&= 2^{-n}\exp  \left(\sum_{i=1}^n \left(\beta x_i m_i(\boldx) + Bx_i 
		- \log\cosh(\beta m_i(\boldx)+B) \right)\right).
	\end{align}
	Our subsequent Bayesian development will make use of the pseudo-likelihood  \eqref{pseudo} instead of the true likelihood  \eqref{likel}. We shall establish that the posterior obtained by the use of the pseudo-likelihood allows for  consistent estimation of the model parameters $\beta$ and $B$.
	
	\subsection{Bayesian formulation}
	Let $\theta = (\beta,B)$ be the parameter set of interest. We consider the following  independent prior distribution $p(\theta) = p_\beta(\beta)p_B(B)$, with $p_\beta(\beta)$ as a log-normal prior for $\beta$ and $p_B(B)$ as a normal prior for $B$ as follows:
	\begin{align}
		\label{e:prior}
		p_\beta(\beta) &= \frac{1}{\beta\sqrt{2\pi}} e^{-\frac{(\log\beta)^2}{2}}, \hspace{5mm}
		p_B(B) = \frac{1}{\sqrt{2\pi}}e^{-\frac{B^2}{2}}.
	\end{align}
	The assumption of log-normal prior on $\beta$ is to ensure the positivity of $\beta$. Let  $L(\theta)$ be the pseudo-likelihood function given by \eqref{pseudo}, then the above prior structure  leads to the following posterior distribution
	\begin{align}
		\label{true_po}
		\Pi(\mathcal{A} \mid X^{(n)}) =   \frac{\int_{\mathcal{A}}\pi(\theta,X^{(n)})d\theta}{m(X^{(n)})}=\frac{\int_{\mathcal{A}}L(\theta)p(\theta)d\theta}{\int L(\theta)p(\theta)d\theta} ,
	\end{align}
	for any set $\mathcal{A} \subseteq \Theta$ where $\Theta$ denotes the parameter space of $\theta$. Note,  $\pi(\theta,X^{(n)})$ is the  joint density of $\theta$ and the data $X^{(n)}$ and  $m(X^{(n)})$ is the marginal density of $X^{(n)}$ which is free from the parameter set $\theta$.
	
	\subsection{Variational inference}
	\label{sec:var-inf}
	
	Next, we need to provide a variational approximation to the posterior distribution  \eqref{true_po}. In this direction, we consider two choices of the variational family to obtain approximated posterior distribution. One candidate of our variational family, for the virtue of simplicity, is a mean-field (MF) Gaussian family:
	\begin{align}
		\label{mf:family}
		\mathcal{Q}^{\bf MF} = \bigg\{q(\theta)\mid~ q(\theta) = q_\beta(\beta)q_B(B), \log \beta \sim N(\mu_1, \sigma_1^2), B \sim N(\mu_2, \sigma_2^2)\bigg\}, 
	\end{align}
	The above variational family is the same as a lognormal distribution on $\beta$ and normal distribution on $B$. Also, $\beta$ and $B$ are independent in $\mathcal{Q}^{\bf MF}$ and each $q(\theta) \in \mathcal{Q}^{\bf MF}$ is governed by its own parameter set $\nu^{\bf MF} = (\mu_1, \mu_2, \sigma_1^2, \sigma_2^2)^\top$. $\nu^{\bf MF}$ denote the set of variational parameters which will be updated to find the optimal variational distribution closest to the true posterior. 
	
	Beyond the mean field family, we suggest a  bivariate normal (BN) family to exploit the interdependence among the parameters $(\beta, B)$:
	\begin{align}
		\label{bn:family}
		\mathcal{Q}^{\bf BN} = \Bigg\{ q(\theta)\mid~ q(\theta)=q(\beta,B), (\log \beta,B) \sim MVN(\bmu,\bSigma),   \boldsymbol{\mu} = \begin{pmatrix}
			\mu_1  \\ \mu_2 
		\end{pmatrix}, \bSigma =\begin{pmatrix}
			\sigma_{11}&\sigma_{12}  \\ \sigma_{12}&\sigma_{22} 
		\end{pmatrix}\Bigg\}
	\end{align}
	$\mathcal{Q}^{\bf MF}$ can also be represented as (independent) bivariate normal family. The variational parameters of BN family are $\nu^{\bf BN} = (\mu_1, \mu_2, \sigma_{11}, \sigma_{22}, \sigma_{12})^\top$.
	
	Once a variational family is chosen, one can  find the variational posterior by minimizing the Kullback-Leibler (KL) divergence between a variational distribution $q \in \mathcal{Q}$
	and the true posterior  \eqref{true_po}. The variational posterior is thus given by
	\begin{align}
		\label{e:var-posterior}
		Q^* = \underset{Q \in \mathcal{Q}}{\arg\min}~ {\rm KL}(Q, \Pi( \mid X^{(n)})),
	\end{align}
	where ${\rm KL}(Q, \Pi( \mid X^{(n)}))$ is the KL divergence given by
	\begin{align*}
		{\rm KL}(Q, \Pi( \mid X^{(n)})) = \int \log ( q(\theta) / \pi(\theta \mid X^{(n)})) q(\theta) d\theta,
	\end{align*}
	where $q$ and $\pi(\mid X^{(n)})$ are the densities corresponding to $Q$ and $\Pi(\mid X^{(n)})$ respectively. Based on $\eqref{true_po}$, we re-write the KL divergence as:
	\begin{align*}
		{\rm KL}(Q, \Pi(\mid X^{(n)})) 
		&= \int (\log q(\theta) - \log \pi(\theta, X^{(n)}) ) q(\theta) d\theta + \log m(X^{(n)}) \\
		&= -{\rm ELBO}(Q, \Pi(, X^{(n)})) + \log m(X^{(n)}).
	\end{align*}
	The first term is the negative Evidence Lower Bound (ELBO) and observe that the second term does not depend on $q$. Therefore, minimizing KL divergence is equivalent to maximizing the ELBO. So, we search  for an optimal $q$ by maximizing the ELBO:
	\begin{align*}
		Q^* = \underset{Q \in \mathcal{Q}}{\arg\max}~ {\rm ELBO}(Q, \Pi( ,X^{(n)})).
	\end{align*}
	To optimize the ELBO, we consider the ELBO as a function of variational parameters $\nu$:
	\begin{align*}
		\mathcal{L}(\nu) := \E_Q( \log \pi(\theta, X^{(n)}) - \log q(\theta; \nu) ).
	\end{align*}
	\cite{Ranganath:2014} suggested black box variational inference (BBVI) to optimize $\mathcal{L}(\nu)$ using gradient descent method. The gradient of $\mathcal{L}(\nu)$ is:
	\begin{align}
		\label{grad_elbo}
		\nabla_{\nu}\mathcal{L} &= \nabla_{\nu} \E_Q ( \log \pi(\theta, X^{(n)})  - \log q(\theta; \nu) )  \nonumber\\
		&= \int q(\theta;\nu) \nabla_{\nu} \log q(\theta;\nu) ( \log \pi(\theta, X^{(n)}) - \log q(\theta;\nu) ) d\theta \nonumber\\
		&\enskip + \int q(\theta;\nu) \nabla_{\nu} ( \log \pi(\theta, X^{(n)}) - \log q(\theta;\nu) ) d\theta  \nonumber\\
		&= \E_Q (\nabla_{\nu} \log q(\theta;\nu) (\log \pi(\theta, X^{(n)}) - \log q(\theta;\nu))).
	\end{align}
	The last equality holds because $\E_Q(\nabla_{\nu} \log q(\theta;\nu)) = 0$ and $ \nabla_{\nu}  \log \pi(\theta, X^{(n)}) =0$. We cannot exactly compute the expectation form \eqref{grad_elbo} of the gradient $\nabla_{\nu}\mathcal{L}$, which lead us to employ Monte Carlo estimates:
	\begin{align}
		\label{mc_grad}
		\widehat{\nabla}_{\nu}\mathcal{L} = \frac{1}{S}\sum_{s=1}^S \nabla_{\nu} \log q(\theta^{(s)}; \nu) (\log \pi(\theta^{(s)}, X^{(n)}) - \log q(\theta^{(s)}; \nu)),
	\end{align}
	where $\theta^{(1)}, \dots, \theta^{(S)}$ are samples generated from $q(\theta;\nu)$. Using the estimate $\eqref{mc_grad}$, we iteratively update $\nu$ in the direction of increasing the objective function $\mathcal{L}(\nu)$. The summary of BBVI algorithm is shown in Algorithm \ref{algo1}:
	
	\begin{algorithm}[H]
		\label{algo1}
		\SetAlgoLined
		\KwResult{Optimal variational parameters $\nu^*$ }
		Initialize $p(\theta)$, $q(\theta; \nu^1)$ and learning rate sequence $\rho_t$. \\
		\While{{\rm ELBO} increases}{
			Draw $\theta^{(s)} \sim q(\theta; \nu^t)$, $s=1,\dots, S$\;
			Get $\widehat{\nabla}_{\nu} \mathcal{L}$ based on the $S$ sample points \;
			Update $\nu^{t+1} \leftarrow \nu^t + \rho_t \widehat{\nabla}_{\nu} \mathcal{L}$ \; }
		\caption{BBVI}
	\end{algorithm}
	In Algorithm \ref{algo1},  $\rho_t$, $t=1, 2, \cdots$ denotes a sequence of learning rates which satisfy the Robbin-Monro conditions \cite{Robbins:1951}, i.e. 
	$    \sum_{t=1}^{\infty}\rho_t = \infty$ and $\quad \sum_{t=1}^{\infty}\rho_t^2 < \infty$.
	Also, let $\sigma \in \nu$ be a variational parameter which must be positive. During the updating procedure, it may occur that $\sigma$ takes a negative value. In order to preclude this issue, we consider a reparametrization $\sigma = \log(1+e^\eta)$ and update the quantity $\eta$, as a free parameter,  instead of updating $\sigma$. We address more details of BBVI algorithm implementation in Supplement  Section A.1.
	
	\section{Main Theoretical Results}
	\label{section3}
	
	In this section, we establish the posterior consistency of the variational posterior  \eqref{e:var-posterior}. In this direction, we establish the variational posterior contraction rates to evaluate how well the posterior distribution of $\beta$ and $B$ under the variational approximation concentrates around the true values $\beta_0$ and $B_0$. Towards the proof, we make the following assumptions
	\begin{assumption}[Bounded row sums of $A_n$]
		\label{assumption_rs}
		The row sums of $A_n$ are bounded above
		$$   \max_{i \in [n]} \sum_{j=1}^n A_n(i,j) \leq \gamma, $$
	\end{assumption}
	for a constant $\gamma$ independent of $n$. Assumption \ref{assumption_rs} is the same as (1.2) in \cite{Ghosal:2020}. As a consequence of assumption 1, it can be shown $m_i(\boldx) \leq \gamma$, $i = 1 \dots, n$.
	\begin{assumption}[Mean field assumption on $A_n$]
		\label{assumption_mf}
		Let $\epsilon_n \rightarrow 0$ and $n\epsilon_n^2 \rightarrow \infty$ such that
		$$(i) \:\:\sum_{i=1}^n \sum_{j=1}^n A_n(i,j)=O(n\epsilon_n^2), \hspace{10mm} (ii)\:\: \sum_{i=1}^n \sum_{j=1}^n A_n(i,j)^2 =o(n\epsilon_n^2). $$
	\end{assumption}
	Assumption \ref{assumption_mf}-$(i)$ is the same as  condition (1.4) in \cite{Ghosal:2020} on $A_n$ for $\epsilon_n=1$.  Assumption \ref{assumption_mf}-$(ii)$ is the same as (1.6) in \cite{Ghosal:2020} with $\epsilon_n=1$. For more details on the mean field assumption, we refer to Definition 1.3 in  \cite{Basak:2017}.
	\begin{assumption}[Bounded variance of $A_n$]
		\label{assumption_vb} Let $\bar{A}_n=(1/n)\sum _{i=1}^n \sum_{j=1}^n A_n(i,j)$,
		$$\liminf_{n \to \infty}    \frac{1}{n} \sum_{i=1}^n (\sum_{j=1}^n A_n(i,j)-\bar{A}_n)^2>0.$$
	\end{assumption}
	Finally, the Assumption \ref{assumption_vb} corresponds to (1.7) in \cite{Ghosal:2020}. The validity of Assumption \ref{assumption_vb} ensures that $T_n(\boldx)=(1/n) \sum_{i=1}^n(m_i(\boldx)-\bar{m}(\boldx))^2$ is bounded below and above in probability, an essential requirement towards the proof of contraction rates of the variational posterior.
	
	Let $\theta=(\beta,B)$ be the model parameter and $\theta_0=(\beta_0,B_0)$ be the true parameter from which the data $X^{(n)}$ is generated. Let $L(\theta)$ and $L(\theta_0)$ denote the pseudo-likelihood as in \eqref{pseudo} under the model parameter and true parameter respectively. Further, let $L_0$ denote the true probability mass function from which $X^{(n)}$ is generated. Thus, $L_0$ is as in \eqref{likel} with $\theta=\theta_0$. We shall  use the notations  $\E_0^{(n)}$ and  $\mathbb{P}_0^{(n)}$ to denote expectation and probability mass function with respect to $L_0$.
	
	We next present the main theorem which establishes the  contraction rate for the variational posterior. Following the proof, we next establish the contraction rate of the variational Bayes estimator as  a corollary.  We shall use the term with dominating probability to imply that under $\mathbb{P}_0^{(n)}$, the probability of the event goes to 1 as $n \to \infty$.  
	
	\begin{theorem}[Posterior Contraction]
		\label{thm:post-convergence}
		Let $\mathcal{U}_{\varepsilon_n}=\{\theta:||\theta-\theta_0||_2\leq \varepsilon_n\}$ be neighborhood of the true parameters. Suppose $\epsilon_n$ satisfies assumption 2., then in $\mathbb{P}_0^{(n)}$ probability
		$$Q^*(\mathcal{U}_{\varepsilon_n}^c) \to 0, \: \: n \to \infty,$$
		where $\varepsilon_n=\epsilon_n\sqrt{M_n \log n}$ for any slowly increasing sequence $M_n \to \infty$ satisfying $\varepsilon_n \to 0$.
	\end{theorem}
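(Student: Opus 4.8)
The plan is to deduce the result from three ingredients: (i) the true pseudo-posterior $\Pi(\cdot\mid X^{(n)})$ of \eqref{true_po} satisfies $\Pi(\mathcal{U}_{\varepsilon_n}^c\mid X^{(n)})\le e^{-Cn\varepsilon_n^2}$ with dominating probability; (ii) ${\rm KL}\big(Q^*,\Pi(\cdot\mid X^{(n)})\big)$ has strictly smaller order than $n\varepsilon_n^2$; (iii) an information‑theoretic inequality converts (i)+(ii) into the conclusion. The structural fact I would use throughout is that the pseudo‑log‑likelihood $\ell(\theta):=\log L(\theta)$ is \emph{concave} in $\theta=(\beta,B)$, with
\[
\nabla^2\ell(\theta)=-\sum_{i=1}^{n}\operatorname{sech}^2\!\big(\beta m_i(\boldx)+B\big)\begin{pmatrix} m_i(\boldx)^2 & m_i(\boldx)\\ m_i(\boldx) & 1\end{pmatrix},
\]
so $\|\nabla^2\ell\|_{\mathrm{op}}\le\sum_i m_i(\boldx)^2+n=O(n)$ globally (Assumption~\ref{assumption_rs}), while $\sum_i\big(\begin{smallmatrix}m_i^2 & m_i\\ m_i & 1\end{smallmatrix}\big)$ has determinant $n^2T_n(\boldx)$, so Assumption~\ref{assumption_vb} makes $\ell$ locally strongly concave with dominating probability.

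\emph{Step 1 (true posterior).} Writing $\Pi(\mathcal{U}_{\varepsilon_n}^c\mid X^{(n)})=\int_{\mathcal{U}_{\varepsilon_n}^c}e^{\ell(\theta)-\ell(\theta_0)}p(\theta)\,d\theta\big/\int e^{\ell(\theta)-\ell(\theta_0)}p(\theta)\,d\theta$, I would bound the denominator from below by restricting to a ball of radius $\asymp\epsilon_n$ about $\theta_0$, on which the prior \eqref{e:prior} is bounded below and, by a second‑order expansion using $\|\nabla\ell(\theta_0)\|=O_{\mathbb{P}_0^{(n)}}(\sqrt n)$ and $\|\nabla^2\ell\|_{\mathrm{op}}=O(n)$, $\ell(\theta)-\ell(\theta_0)\ge-o(n\varepsilon_n^2)$; this yields a denominator $\ge e^{-o(n\varepsilon_n^2)}$. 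For the numerator, local strong concavity and the score bound give $\ell(\theta)-\ell(\theta_0)\le-Cn\varepsilon_n^2$ uniformly over $\mathcal{U}_{\varepsilon_n}^c$ intersected with a fixed compact set (using $\sqrt n\,\varepsilon_n=o(n\varepsilon_n^2)$), while on the complement the coercivity of $\ell$ under $\mathbb{P}_0^{(n)}$ — e.g. $\ell(\theta)-\ell(\theta_0)\lesssim-\beta n$ as $\beta\to\infty$ and $\lesssim-|B|n$ as $|B|\to\infty$, using $\sum_i m_i(\boldx)^2\gtrsim n$ from Assumption~\ref{assumption_vb} and the non‑degeneracy of the conditionals at $\theta_0$ — makes the prior‑weighted contribution $\ll e^{-Cn\varepsilon_n^2}$, mirroring the MLE analysis of \cite{Ghosal:2020}. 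Hence $\Pi(\mathcal{U}_{\varepsilon_n}^c\mid X^{(n)})\le e^{-Cn\varepsilon_n^2}$ with dominating probability; it is precisely the factor $\sqrt{M_n\log n}$ in $\varepsilon_n$ that allows $n\varepsilon_n^2$ to absorb the $O(\log n)$ prior‑mass deficit and the $O_{\mathbb{P}_0^{(n)}}(\sqrt n\,\varepsilon_n)$ score fluctuation.

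\emph{Step 2 (variational posterior) and conclusion.} Since $Q^*$ minimises ${\rm KL}(Q,\Pi(\cdot\mid X^{(n)}))$ over the chosen variational family, which contains $\mathcal{Q}^{\bf MF}$, it suffices to exhibit one favourable $Q_n$; take $Q_n\in\mathcal{Q}^{\bf MF}$ with $\log\beta\sim N(\log\beta_0,\epsilon_n^2)$ and $B\sim N(B_0,\epsilon_n^2)$ independently, so $\E_{Q_n}\|\theta-\theta_0\|^2=O(\epsilon_n^2)$ and $\|\E_{Q_n}(\theta-\theta_0)\|=O(\epsilon_n^2)$. By \eqref{true_po},
\[
{\rm KL}\big(Q_n,\Pi(\cdot\mid X^{(n)})\big)={\rm KL}(Q_n,p)-\E_{Q_n}\!\big[\ell(\theta)-\ell(\theta_0)\big]+\log\frac{m(X^{(n)})}{L(\theta_0)},
\]
where the first term is an explicit bivariate Gaussian divergence, $O(\log(1/\epsilon_n))=O(\log n)$; the second, via a second‑order expansion of $\ell$ about $\theta_0$, has linear part $O_{\mathbb{P}_0^{(n)}}(\sqrt n\,\epsilon_n^2)$ and quadratic part $O(n\epsilon_n^2)$; and the third is $\le\ell(\widehat\theta)-\ell(\theta_0)=O_{\mathbb{P}_0^{(n)}}(1)$ by concavity and the consistency of the maximum pseudo‑likelihood estimator \cite{Ghosal:2020}. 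Because $\varepsilon_n^2=\epsilon_n^2M_n\log n$, each summand is $o(n\varepsilon_n^2)$, so ${\rm KL}(Q^*,\Pi(\cdot\mid X^{(n)}))\le{\rm KL}(Q_n,\Pi(\cdot\mid X^{(n)}))=o(n\varepsilon_n^2)$ with dominating probability. Applying the data‑processing inequality for ${\rm KL}$ to the event $\mathcal{U}_{\varepsilon_n}^c$ gives ${\rm KL}(Q^*,\Pi(\cdot\mid X^{(n)}))\ge Q^*(\mathcal{U}_{\varepsilon_n}^c)\log\big(1/\Pi(\mathcal{U}_{\varepsilon_n}^c\mid X^{(n)})\big)-\log 2$, hence on the dominating event of Steps 1--2,
\[
Q^*(\mathcal{U}_{\varepsilon_n}^c)\le\frac{{\rm KL}(Q^*,\Pi(\cdot\mid X^{(n)}))+\log 2}{-\log\Pi(\mathcal{U}_{\varepsilon_n}^c\mid X^{(n)})}\le\frac{o(n\varepsilon_n^2)+\log 2}{Cn\varepsilon_n^2}\longrightarrow 0,
\]
and since the right‑hand side is deterministic and vanishes, $Q^*(\mathcal{U}_{\varepsilon_n}^c)\to0$ in $\mathbb{P}_0^{(n)}$ probability.

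\emph{Main obstacle.} The variational bookkeeping — the ${\rm KL}$‑projection identity, the expansion of $\E_{Q_n}[\ell]$, and the data‑processing step — is routine. The genuine difficulty, already flagged in the introduction, is that \emph{every} high‑probability estimate must hold under the true Ising law $\mathbb{P}_0^{(n)}$ rather than under the product pseudo‑likelihood: the evidence lower bound on $m(X^{(n)})$, the rate $\|\nabla\ell(\theta_0)\|=O_{\mathbb{P}_0^{(n)}}(\sqrt n)$ for the pseudo‑score $\sum_i\big(x_i-\tanh(\beta_0 m_i(\boldx)+B_0)\big)\big(m_i(\boldx),1\big)^\top$ — a sum of conditionally centred but globally dependent increments — the coercivity of $\ell$ at extreme $\theta$, and the two‑sided bound on $T_n(\boldx)$. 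Independence is unavailable, so these must be extracted from the bounded‑row‑sum and mean‑field structure of Assumptions~\ref{assumption_rs}--\ref{assumption_vb} via concentration inequalities for functions of Ising spins (bounded‑difference or exchangeable‑pair arguments), exactly as in \cite{Ghosal:2020}; squeezing out the $\sqrt n$ rate for the dependent pseudo‑score is the crux.
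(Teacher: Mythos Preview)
Your three–step skeleton—(i) exponential contraction of the pseudo–posterior, (ii) an $o(n\varepsilon_n^2)$ bound on ${\rm KL}(Q^*,\Pi(\cdot\mid X^{(n)}))$ via a well–chosen $Q_n\in\mathcal Q^{\bf MF}$, (iii) an information inequality to combine them—is exactly the paper's strategy. Your data–processing step is the binary–KL specialisation of the Donsker–Varadhan inequality the paper uses with $f=(C_0/2)n\varepsilon_n^2\,1_{\mathcal U_{\varepsilon_n}^c}$; the two are equivalent. Your Step~1 numerator argument (local strong concavity, score of order $\sqrt n$, reduction of the unbounded region by concavity) matches the paper's Lemma~\ref{lem:E1}.

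The one substantive difference is how you control the evidence $\log\big(m(X^{(n)})/L(\theta_0)\big)$. The paper (Lemma~\ref{lem:E2}, via Lemmas~\ref{dif_true_true}--\ref{prior}) routes this through the \emph{true} Ising likelihood $L_0$: it bounds $\E_0^{(n)}|\log(m/L(\theta_0))|$ by ${\rm KL}(L_0,\tilde L^*)+{\rm KL}(L_0,\tilde L(\theta_0))$ plus normalizing–constant terms, and then invokes the Basak–Mukherjee mean–field approximation and an $\epsilon_n$–net eigenvalue argument for $A_n$ (Lemma~\ref{constant-order}) to show each piece is $o(n\epsilon_n^2\log n)$; Markov's inequality then yields the high–probability bound. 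You instead bound the evidence directly in probability: from below by the standard ball–restriction prior–mass argument, and from above by $m(X^{(n)})\le L(\widehat\theta)$ together with $\ell(\widehat\theta)-\ell(\theta_0)=O_{\mathbb P_0^{(n)}}(1)$, citing the $\sqrt n$–consistency of the pseudo–MLE from \cite{Ghosal:2020}. Your route is shorter and gives the sharper bound $O_{\mathbb P_0^{(n)}}(n\epsilon_n^2+\log n)$, but it imports Ghosal–Mukherjee's MLE theorem (and implicitly the existence of $\widehat\theta$) as a black box; the paper's route is self–contained at the level of moments and is where Assumption~\ref{assumption_mf} actually does work, but pays with the heavier machinery of \cite{Basak:2017}. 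Either way the KL is $o(n\varepsilon_n^2)$ and the conclusion follows. One small wording fix: in your last display the bound on the right is not ``deterministic'', it is a deterministic bound that holds \emph{on the dominating event}; the convergence in $\mathbb P_0^{(n)}$ probability then follows.
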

	The above result establishes that the posterior distribution of $\beta$ and $B$ concentrates around the true value $\beta_0$ and $B_0$ at a rate slight larger than $\epsilon_n$. The proof of the above theorem rests on following lemmas, whose proofs have been deferred to the appendix \ref{append-A}.
	\begin{lemma}
		\label{lem:E1}
		There exists a constant $C_0>0$, such that for any $\epsilon_n \rightarrow 0$, $n\epsilon_n^2 \rightarrow \infty$,  
		\begin{align*}
			\mathbb{P}_0^{(n)} \left( \log  \int_{\mathcal{U}_{\epsilon_n}^c} \frac{L(\theta)}{L(\theta_0)}p(\theta)d\theta  \leq  -C_0 n\epsilon_n^2 \right) \rightarrow 1, \:\: n \to \infty.
		\end{align*}
	\end{lemma}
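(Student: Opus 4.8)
The plan is to bound the integrand of $\int_{\mathcal{U}_{\epsilon_n}^c}\{L(\theta)/L(\theta_0)\}p(\theta)\,d\theta$ pointwise by $e^{-C_0 n\epsilon_n^2}$, uniformly in $\theta\in\mathcal{U}_{\epsilon_n}^c$, on an event of $\mathbb{P}_0^{(n)}$-probability tending to one; since $p(\theta)$ integrates to one, taking logarithms then yields the claim. The starting point is an exact decomposition of the log pseudo-likelihood ratio. Write $u_i=\beta m_i(\boldx)+B$, $u_{0,i}=\beta_0 m_i(\boldx)+B_0$ and $v_i=u_i-u_{0,i}=(\beta-\beta_0)m_i(\boldx)+(B-B_0)$. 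Adding and subtracting $\sum_i v_i\tanh u_{0,i}$ in $\log L(\theta)-\log L(\theta_0)$ gives
\begin{align*}
\log\frac{L(\theta)}{L(\theta_0)}=-\sum_{i=1}^n D(u_i,u_{0,i})+(\beta-\beta_0)W_1+(B-B_0)W_2,
\end{align*}
where $D(u,u_0)=\log\cosh u-\log\cosh u_0-(u-u_0)\tanh u_0\ge 0$ is the Bregman divergence of the convex map $u\mapsto\log\cosh u$, $W_1=\sum_{i=1}^n m_i(\boldx)(x_i-\tanh u_{0,i})$ and $W_2=\sum_{i=1}^n(x_i-\tanh u_{0,i})$.

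For the ``population'' part I would establish a uniform lower bound $\sum_i D(u_i,u_{0,i})\ge c_1\,n\min(\|\theta-\theta_0\|_2^2,\|\theta-\theta_0\|_2)$ holding with dominating probability. Since $|u_{0,i}|\le\beta_0\gamma+|B_0|$ by Assumption \ref{assumption_rs}, elementary estimates on $\log\cosh$ (quadratic behaviour near the origin, linear growth at infinity, $\mathrm{sech}^2$ bounded below on compacts) give $D(u_{0,i}+v_i,u_{0,i})\ge c_2\min(v_i^2,|v_i|)=c_2 v_i^2/\max(1,|v_i|)$ for a constant $c_2>0$ depending only on $\gamma,\beta_0,B_0$. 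Because $|v_i|\le\sqrt{\gamma^2+1}\,\|\theta-\theta_0\|_2$, summing gives $\sum_i D(u_i,u_{0,i})\ge c_2\,\|v\|_2^2/\max(1,\sqrt{\gamma^2+1}\|\theta-\theta_0\|_2)$, and $\|v\|_2^2=(\theta-\theta_0)^\top M_n(\theta-\theta_0)$ with $M_n=\bigl(\begin{smallmatrix}\sum_i m_i^2 & \sum_i m_i\\ \sum_i m_i & n\end{smallmatrix}\bigr)$. Using $\det M_n=n^2 T_n(\boldx)$, $\mathrm{tr}\,M_n\le n(\gamma^2+1)$ (again Assumption \ref{assumption_rs}), $\lambda_{\min}(M_n)\ge\det M_n/\mathrm{tr}\,M_n$, and $T_n(\boldx)\ge\tau>0$ with dominating probability (Assumption \ref{assumption_vb}, as noted in the text), one gets $\lambda_{\min}(M_n)\ge n\tau/(\gamma^2+1)$ and hence the displayed bound, the quadratic regime appearing when $\|\theta-\theta_0\|_2\lesssim 1$ and the linear one otherwise.

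The linear term is controlled crudely. Because $m_i(\boldx)$ depends only on $\{x_j:j\ne i\}$ and $\E_0^{(n)}[x_i\mid x_j,j\ne i]=\tanh u_{0,i}$, the summands of $W_1,W_2$ are conditionally centred, so $\E_0^{(n)}W_1=\E_0^{(n)}W_2=0$; a covariance computation in which the dependence of the $j$-th summand on $x_i$ is $O(A_n(i,j))$ (by $1$-Lipschitzness of $\tanh$), together with $\sum_{i,j}A_n(i,j)\le n\gamma$, gives $\mathrm{Var}_0^{(n)}(W_1),\mathrm{Var}_0^{(n)}(W_2)=O(n)$; hence by Chebyshev $\|(W_1,W_2)\|_2\le K_n\sqrt n$ with dominating probability for any $K_n\to\infty$. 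On the intersection of these events, with $r=\|\theta-\theta_0\|_2\ge\epsilon_n$,
\begin{align*}
\log\frac{L(\theta)}{L(\theta_0)}\le -c_1 n\min(r^2,r)+rK_n\sqrt n .
\end{align*}
For $r>1$ the right side is $\le r(K_n\sqrt n-c_1 n)\le-\tfrac{c_1}{2}nr\le-\tfrac{c_1}{2}n\epsilon_n^2$; for $\epsilon_n\le r\le 1$ it is a concave quadratic in $r$ whose maximum over $[\epsilon_n,1]$ is attained at $r=\epsilon_n$ and equals $-c_1 n\epsilon_n^2+\epsilon_n K_n\sqrt n\le-\tfrac{c_1}{2}n\epsilon_n^2$, both bounds valid for all large $n$ once $K_n$ grows slowly enough that $K_n=o(\epsilon_n\sqrt n)$ — possible since $n\epsilon_n^2\to\infty$. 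Thus, uniformly over $\mathcal{U}_{\epsilon_n}^c$, the integrand is $\le e^{-(c_1/2)n\epsilon_n^2}$, and integrating against $p(\theta)$ and taking logarithms proves the lemma with any $C_0<c_1/2$ (not depending on the sequence $\epsilon_n$).

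The main obstacle I anticipate is the unboundedness of $\Theta$: a quadratic Taylor expansion of $\log\cosh$ controls $D(u_i,u_{0,i})$ only near $\theta_0$, so one must exploit the merely linear growth of the Bregman divergence at large deviations and marry it to the spectral lower bound on $M_n$. It is precisely here that Assumption \ref{assumption_vb} is indispensable: without a lower bound on $T_n(\boldx)$ the matrix $M_n$ can be (nearly) singular --- the Curie--Weiss non-identifiability of $(\beta,B)$ --- and the pseudo-posterior need not contract along the offending direction. A secondary technical point is that $W_1,W_2$ must be shown to concentrate under the true Ising law $\mathbb{P}_0^{(n)}$ rather than under the pseudo-likelihood, which is exactly what the conditional-mean-zero structure and the $A_n(i,j)$-control of the off-diagonal covariances deliver.
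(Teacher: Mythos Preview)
Your proof is correct and takes a genuinely different route from the paper's.

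The paper's argument works on the event $\{W_{1n}^2+W_{2n}^2\le n^{2/3}\}\cap\{T_n\ge M\}$, splits $\mathcal V_{\epsilon_n}^c$ into eight orthant-like pieces $V_{1n},\ldots,V_{8n}$, and for each piece uses a monotonicity trick: introducing $g(t)=\log L(\theta_t)-\log L(\theta_0')-\Delta_n(\theta_0')^\top(\theta_t-\theta_0')$ along the segment from a ``corner'' point $\theta_0'$ (such as $(\beta_0+\epsilon,B_0)$) to $\theta$, they check $g''\le 0$ from positive-definiteness of $H_n$, so the supremum over the unbounded orthant is attained on a compact box of side $\epsilon$. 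On that box a second-order Taylor expansion with the Hessian eigenvalue bound \eqref{eigen_lowerb} (essentially your $\lambda_{\min}(M_n)\ge nT_n/(1+\gamma^2)$) finishes the job.

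Your argument replaces all of this machinery by the exact identity $\log(L(\theta)/L(\theta_0))=-\sum_i D(u_i,u_{0,i})+(\theta-\theta_0)^\top(W_1,W_2)$ with $D$ the Bregman divergence of $\log\cosh$. The point is that the ``population'' piece is now \emph{globally} bounded below by $c_1 n\min(r^2,r)$: quadratic near $\theta_0$ (where the paper's Hessian bound lives) and linear for large $r$ (which substitutes for the paper's orthant reduction). The score term is handled the same way in both proofs --- Chebyshev/Markov using $\E_0^{(n)}W_j^2=O(n)$, i.e.\ the paper's Lemma~\ref{qn:rn:bounded} --- and the matrix $M_n$ you introduce is the paper's $H_n$ with the $\mathrm{sech}^2$ weights stripped out, sharing the same determinant--trace lower bound via $T_n$.

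What your approach buys is economy: no case split, no third-order remainder, and the unbounded parameter space is handled intrinsically by the linear growth of $D$. What the paper's approach buys is that it stays entirely within Taylor-expansion technology, and the reduction-to-compact-box step it isolates is reused later (e.g.\ in the argument around \eqref{e:C0-val}) where a bound scaling like $rn\varepsilon_n^2$ rather than $n\varepsilon_n^2$ is needed for Corollary~\ref{thm: bayes-convergence}; your Bregman bound already delivers that linear-in-$r$ scaling for free.
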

	\begin{lemma}
		\label{lem:E2}
		Let $\epsilon_n$ be the sequence satisfying the Assumption 2, then for any $C>0$,
		\begin{align*}
			\mathbb{P}_0^{(n)} \left(\Big|\log \int \frac{L(\theta)}{L(\theta_0)}p(\theta)d\theta \Big| \leq C n\epsilon_n^2 \log n \right) \rightarrow 1.
		\end{align*}
	\end{lemma}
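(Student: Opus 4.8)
The plan is to bound $\log\int\frac{L(\theta)}{L(\theta_0)}p(\theta)\intd\theta$ above and below separately, in each case reducing the problem to controlling a single pair of $\theta$-free random quantities through a second-moment estimate that exploits the Markov structure of the Ising measure $L_0$.

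I would begin by recording the exact form of the integrand. With $t_i=t_i(\theta)=\beta m_i(\boldx)+B$, $t_{0,i}=\beta_0 m_i(\boldx)+B_0$ and $g(t)=\log\cosh t$ (so $g'=\tanh$ and $g''(t)=\operatorname{sech}^2 t\in(0,1]$), a second-order Taylor expansion of $g$ about $t_{0,i}$ gives
\begin{align*}
\log\frac{L(\theta)}{L(\theta_0)}=\sum_{i=1}^{n}\Big[(t_i-t_{0,i})Y_i-\tfrac12 g''(\xi_i)(t_i-t_{0,i})^2\Big],\qquad Y_i:=x_i-\tanh(t_{0,i}),
\end{align*}
for some $\xi_i$ between $t_i$ and $t_{0,i}$. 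Since $A_n$ has zero diagonal, $m_i(\boldx)$ (hence $t_{0,i}$) depends on $X$ only through $\{X_j:j\neq i\}$, and the Ising conditional of Section \ref{section2} gives $\E_0[X_i\mid X_j,j\neq i]=\tanh(t_{0,i})$; therefore $\E_0[Y_i\mid X_j,j\neq i]=0$ and $\E_0[Y_i^2\mid X_j,j\neq i]=\operatorname{sech}^2(t_{0,i})\le1$. Using $t_i-t_{0,i}=(\beta-\beta_0)m_i(\boldx)+(B-B_0)$ and $|m_i(\boldx)|\le\gamma$ (Assumption \ref{assumption_rs}), the quadratic term is deterministically at most $\tfrac12(\gamma+1)^2n\epsilon_n^2$ on $\mathcal U_{\epsilon_n}$, while the linear term equals $(\beta-\beta_0)S_2+(B-B_0)S_1$ with $S_1:=\sum_i Y_i$ and $S_2:=\sum_i m_i(\boldx)Y_i$, both free of $\theta$.

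The key step is to show $\E_0[S_1^2]=O(n)$ and $\E_0[S_2^2]=O(n)$. Here I would use that $Y_j$ (and $m_j(\boldx)Y_j$) is a function of $X_j$ and the neighbours of $j$ only: for $i\neq j$ with $i\not\sim j$ the cross term vanishes because $\E_0[Y_iY_j\mid X_k,k\neq i]=Y_j\,\E_0[Y_i\mid X_k,k\neq i]=0$; for $i\sim j$ a two-point conditional-covariance identity together with the $1$-Lipschitz property of $\tanh$ gives $|\E_0[Y_iY_j]|\le\beta_0A_n(i,j)$ and, using in addition $|m_j(\boldx)|\le\gamma$ and $|Y_j|\le2$, $|\E_0[m_i(\boldx)Y_i\,m_j(\boldx)Y_j]|\le C(\beta_0,\gamma)A_n(i,j)$. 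Summing the diagonal terms ($\le n$, resp.\ $\le\gamma^2 n$) and the off-diagonal terms ($\lesssim\sum_{i,j}A_n(i,j)\le\gamma n$ by Assumption \ref{assumption_rs}) yields the $O(n)$ bounds, and Chebyshev's inequality then gives $\max(|S_1|,|S_2|)\le\sqrt{n\log n}$ with dominating probability.

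Finally I would assemble the two directions. For the upper bound, split $\Theta=\mathcal U_{\epsilon_n}\cup\mathcal U_{\epsilon_n}^c$: on $\mathcal U_{\epsilon_n}$ drop the nonpositive quadratic term and use $|\beta-\beta_0|,|B-B_0|\le\epsilon_n$ to get, uniformly in $\theta$, $\log\frac{L(\theta)}{L(\theta_0)}\le\epsilon_n(|S_1|+|S_2|)\le2\epsilon_n\sqrt{n\log n}$, hence $\int_{\mathcal U_{\epsilon_n}}\frac{L(\theta)}{L(\theta_0)}p(\theta)\intd\theta\le e^{2\epsilon_n\sqrt{n\log n}}$; on $\mathcal U_{\epsilon_n}^c$, Lemma \ref{lem:E1} gives $\int_{\mathcal U_{\epsilon_n}^c}\frac{L(\theta)}{L(\theta_0)}p(\theta)\intd\theta\le e^{-C_0n\epsilon_n^2}\le1$ with dominating probability; so $\log\int\le2\epsilon_n\sqrt{n\log n}+\log2$. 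For the lower bound, restrict the integral to $B_n=\{\|\theta-\theta_0\|_2\le\epsilon_n\}$, on which $p(\theta)\ge c_0>0$ for large $n$ (as $\epsilon_n\to0$ and $\beta_0>0$), so $\int_{B_n}p\ge c_0\pi\epsilon_n^2$, while uniformly on $B_n$, $\log\frac{L(\theta)}{L(\theta_0)}\ge-2\epsilon_n\sqrt{n\log n}-\tfrac12(\gamma+1)^2n\epsilon_n^2$; hence $\log\int\ge-2\epsilon_n\sqrt{n\log n}-\tfrac12(\gamma+1)^2n\epsilon_n^2+\log(c_0\pi)+2\log\epsilon_n$. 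Since $n\epsilon_n^2\to\infty$ forces $\epsilon_n\ge n^{-1/2}$ eventually (so $2\log\epsilon_n\ge-\log n$) and $\epsilon_n\sqrt{n\log n}=\sqrt{(n\epsilon_n^2)\log n}=o(n\epsilon_n^2\log n)$ while $\log n=o(n\epsilon_n^2\log n)$, both one-sided bounds are $O(n\epsilon_n^2)+o(n\epsilon_n^2\log n)$ and therefore at most $Cn\epsilon_n^2\log n$ eventually for every fixed $C>0$; intersecting the finitely many dominating-probability events completes the proof. I expect the delicate point to be the second-moment bounds $\E_0[S_1^2],\E_0[S_2^2]=O(n)$: one must correctly see that the conditional-covariance cross terms are supported on the edges of the graph and are $O(A_n(i,j))$ there, and then control their sum via Assumption \ref{assumption_rs}; the remaining work is Taylor expansion, elementary bookkeeping, and the routine but essential appeal to Lemma \ref{lem:E1} to tame the region $\mathcal U_{\epsilon_n}^c$ in the upper bound.
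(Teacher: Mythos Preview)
Your proof is correct and takes a genuinely different route from the paper's. The paper proceeds by Markov's inequality on $\E_0^{(n)}\big|\log\int(L(\theta)/L(\theta_0))p(\theta)\,d\theta\big|$, then introduces the \emph{true} likelihood $L_0$ and the normalized pseudo-likelihoods $\tilde L(\theta)=L(\theta)/J_n(\theta)$, applies Lee's inequality to convert to KL divergences, and finally bounds $\log J_n(\theta)$ via a fairly technical spectral/net argument (Lemma~\ref{constant-order}) together with Lemma~\ref{dif_true_true} comparing $\log L_0$ and $\log L(\theta_0)$; the mean-field Assumption~\ref{assumption_mf} enters precisely through those two lemmas. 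Your argument bypasses all of this: you Taylor-expand $\log(L(\theta)/L(\theta_0))$ directly, recognize the linear part as $(\beta-\beta_0)W_{1n}(\theta_0)+(B-B_0)W_{2n}(\theta_0)$ (your $S_2,S_1$ are exactly the paper's $W_{1n},W_{2n}$, so the $O(n)$ second-moment bound you need is the paper's Lemma~\ref{qn:rn:bounded} and you need not reprove it), control the quadratic remainder deterministically on $\mathcal U_{\epsilon_n}$, and invoke Lemma~\ref{lem:E1} on $\mathcal U_{\epsilon_n}^c$; the prior-mass lower bound is the paper's Lemma~\ref{prior}. The payoff of your route is that it is shorter and never touches the true likelihood $L_0$ or the pseudo-normalizing constants $J_n(\theta)$; Assumption~\ref{assumption_mf} enters only indirectly through your appeal to Lemma~\ref{lem:E1}. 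The paper's route, on the other hand, isolates the role of $L_0$ and $J_n(\theta)$ explicitly, which is re-used in the proof of Lemma~\ref{lem:E3}, so there is some economy in their overall structure.
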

	\begin{lemma}
		\label{lem:E3}
		Let $\epsilon_n$ be the sequence satisfying Assumption 2, then for  some $Q \in \mathcal{Q}^{\bf MF}$ and any $C>0$,
		\begin{align*}
			\mathbb{P}_0^{(n)} \left(\int \log  \frac{L(\theta_0)}{L(\theta)}q(\theta)d\theta \leq C n\epsilon_n^2 \log n \right) \rightarrow 1.
		\end{align*}
	\end{lemma}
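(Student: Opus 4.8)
\noindent\textbf{Sketch of proof of Lemma \ref{lem:E3}.} The plan is to exhibit an explicit member of $\mathcal{Q}^{\bf MF}$ concentrated tightly at the true parameter, and then to observe that the pseudo-log-likelihood is Lipschitz in $\theta$ with a coefficient controlled by the row sums of $A_n$, so that the averaged log-ratio is $o(n\epsilon_n^2\log n)$ \emph{deterministically}; the event in the statement is then of probability $1$ for all large $n$. Concretely, take $Q=Q_n\in\mathcal{Q}^{\bf MF}$ with $\mu_1=\log\beta_0$, $\mu_2=B_0$ and $\sigma_1^2=\sigma_2^2=s_n^2$, where $s_n\to 0$ is any sequence with $ns_n=o(n\epsilon_n^2\log n)$; the choice $s_n=\epsilon_n^2$ does the job and is moreover not so small as to spoil the companion bound ${\rm KL}(Q_n,p)=O\!\big(1+\log(1/\epsilon_n)\big)=o(n\epsilon_n^2\log n)$ needed elsewhere in the proof of Theorem \ref{thm:post-convergence}, so a single $Q_n$ serves throughout.

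The first step is to expand the integrand from the pseudo-likelihood \eqref{pseudo}; the $-n\log 2$ terms cancel, leaving
\begin{align*}
\log\frac{L(\theta_0)}{L(\theta)}=\sum_{i=1}^n\Big[(\beta_0-\beta)x_i m_i(\boldx)+(B_0-B)x_i+\log\cosh(\beta m_i(\boldx)+B)-\log\cosh(\beta_0 m_i(\boldx)+B_0)\Big].
\end{align*}
Since $(\log\cosh x)'=\tanh x\in(-1,1)$, the map $\log\cosh$ is $1$-Lipschitz, so each difference of $\log\cosh$ terms is at most $|(\beta-\beta_0)m_i(\boldx)+(B-B_0)|\le\gamma|\beta-\beta_0|+|B-B_0|$, using $|m_i(\boldx)|\le\gamma$ from Assumption \ref{assumption_rs}. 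Bounding the linear terms in the same way (via $|x_i|=1$ and the same bound on $|m_i(\boldx)|$) yields the deterministic pointwise estimate $\big|\log(L(\theta_0)/L(\theta))\big|\le 2n\big(\gamma|\beta-\beta_0|+|B-B_0|\big)$.

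The second step is to integrate against $q$. Under $Q_n$, $B-B_0\sim N(0,s_n^2)$, so $\E_Q|B-B_0|=s_n\sqrt{2/\pi}$; and $\beta=\beta_0 e^{s_n\xi}$ with $\xi\sim N(0,1)$, so from $|e^{s_n\xi}-1|\le s_n|\xi|e^{s_n|\xi|}$ one gets $\E_Q|\beta-\beta_0|\le\beta_0 s_n\,\E[|\xi|e^{s_n|\xi|}]\le c_0\beta_0 s_n$ for all $n$ large enough that $s_n\le 1/2$, with $c_0=\E[|\xi|e^{|\xi|/2}]<\infty$. Hence
\begin{align*}
\int\log\frac{L(\theta_0)}{L(\theta)}\,q(\theta)\,d\theta\le 2n\big(\gamma\,\E_Q|\beta-\beta_0|+\E_Q|B-B_0|\big)\le C'\,ns_n=o(n\epsilon_n^2\log n),
\end{align*}
so for every fixed $C>0$ the left-hand side is $\le Cn\epsilon_n^2\log n$ once $n$ is large. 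As this bound uses $X^{(n)}$ only through the (standing) row-sum Assumption \ref{assumption_rs}, the probability in Lemma \ref{lem:E3} equals $1$ for all large $n$.

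I do not foresee a serious analytic obstacle: the crude $1$-Lipschitz bound on $\log\cosh$ already suffices, precisely because the variational variances may be taken arbitrarily small. What needs care is only the bookkeeping — the elementary check that the lognormal first absolute moment $\E_Q|\beta-\beta_0|$ is genuinely $O(s_n)$ uniformly in $n$, and choosing $s_n$ small enough here while keeping it large enough that ${\rm KL}(Q_n,p)$ stays $o(n\epsilon_n^2\log n)$ in the companion estimates behind Theorem \ref{thm:post-convergence}; $s_n=\epsilon_n^2$ meets both requirements.
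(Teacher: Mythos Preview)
Your argument is correct and is considerably more direct than the route taken in the paper. You exploit the elementary fact that the pseudo-log-likelihood is globally $1$-Lipschitz in $(\beta m_i+B)$ (via $|\tanh|\le 1$), giving the deterministic bound $|\log(L(\theta_0)/L(\theta))|\le 2n(\gamma|\beta-\beta_0|+|B-B_0|)$ uniformly in $\boldx$; integrating against a mean-field Gaussian with variance $s_n^2=\epsilon_n^4$ then yields $O(n\epsilon_n^2)=o(n\epsilon_n^2\log n)$ with no randomness to control. The paper instead fixes $\sigma_1^2=\sigma_2^2=1/n$, applies Markov's inequality, and bounds $\E_0^{(n)}|\log(L(\theta_0)/L(\theta))|$ by routing through the true likelihood $L_0$ and the normalized pseudo-likelihood $\tilde L(\theta)=L(\theta)/J_n(\theta)$: it invokes Lee's inequality $\E|\log(p_1/p_2)|\le {\rm KL}(p_1,p_2)+2/e$, then controls ${\rm KL}(L_0,\tilde L(\theta))$ via a third-order Taylor expansion of $\log L(\theta)$ around $\theta_0$ (Lemma~\ref{q_exists}) together with separate estimates on $\log J_n(\theta)$ (Lemma~\ref{constant-order}) and on $\E_0^{(n)}(\log L_0-\log L(\theta_0))$ (Lemma~\ref{dif_true_true}). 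Your approach avoids all of this machinery; the cost is only that your bound is the cruder $O(ns_n)$ rather than the $o(n\epsilon_n^2)$ the paper's Taylor argument produces, but since only $o(n\epsilon_n^2\log n)$ is required this is harmless. Both choices of variational variance keep ${\rm KL}(Q_n,p)=O(\log n)=o(n\epsilon_n^2\log n)$, so your $Q_n$ does indeed serve in the companion estimate as claimed.
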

	Lemma \ref{lem:E1} and Lemma \ref{lem:E2} taken together suffice to establish the posterior consistency of the true posterior based on the likelihood $L(\theta)$ as in \eqref{true_po}. Lemma \ref{lem:E3} on the other hand is the additional condition which needs to hold to ensure the consistency of the variational posterior. We next state an important result which relates the variational posterior to the true posterior.
	
	\noindent {\bf Formula for KL divergence:} By Corollary 4.15 in \cite{boucheron2013concentration}, 
	$${\rm KL}(P_1, P_2) =\sup_f \left[\int fdP_1-\log \int e^{f} dP_2\right].$$
	Using the above formula in the context of variational distributions, we get 
	\begin{equation}
		\label{e:kl-rel}
		\int fdQ^* \leq {\rm KL}(Q^*, \Pi(\mid X^{(n)}))+\log \int e^{f}d\Pi(\mid X^{(n)}).
	\end{equation}
	The above relation serves as an important tool towards the proof of Theorem \ref{thm:post-convergence}. 
	Next, we provide a brief sketch of the proof.  Further details on the proof have been deferred to appendix \ref{append-B}.
	
	\noindent {\bf Sketch of proof of Theorem \ref{thm:post-convergence}:} 
	
	\noindent  Let $f= (C_0/2)n  \varepsilon_n^2   1[\theta \in \mathcal{U}_{\varepsilon_n}^c]$, then
	\begin{align*}
		& (C_0/2)n \varepsilon_n^2   Q^*( \mathcal{U}_{\varepsilon_n}^c) \leq {\rm KL}(Q^*, \Pi(\mid X^{(n)}))+\log (e^{(C_0/2)n\varepsilon_n^2} \Pi( \mathcal{U}_{\varepsilon_n}^c \mid X^{(n)})+\Pi( \mathcal{U}_{\varepsilon_n} \mid X^{(n)}))\\
		&\implies Q^*( \mathcal{U}_{{\varepsilon_n}}^c) \leq \frac{2}{C_0n\varepsilon_n^2  }{\rm KL}(Q^*, \Pi(\mid X^{(n)}))+ \frac{2}{C_0 n\varepsilon_n^2  }\log (1+e^{(C_0/2)n \varepsilon_n^2}\Pi( \mathcal{U}_{{\varepsilon_n}}^c \mid X^{(n)})).
	\end{align*}
	By Lemma \ref{lem:E2} and \ref{lem:E3}, it can be established with dominating probability for any $C>0$, as $n \to \infty$ 
	$${\rm KL}(Q^*, \Pi(\mid X^{(n)})) \leq Cn\epsilon_n^2 \log n.$$ 
	By Lemma \ref{lem:E1} and \ref{lem:E2}, it can be established with dominating probability, as $n \to \infty$
	\begin{align}
		\label{e:true-post-conv}
		\Pi( \mathcal{U}_{{\varepsilon_n}}^c \mid X^{(n)}) \leq e^{-C_1 n  \varepsilon_n^2},
	\end{align}
	for any $C_1>C_0/2$. Therefore, with dominating probability
	\begin{align}
		\label{e:var-post-conv}
		Q^*( \mathcal{U}_{{\varepsilon_n}}^c) \leq \frac{2C}{C_0 M_n}+\frac{2}{  C_0 n\varepsilon_n^2}\log(1+ e^{-(C_1-C_0/2)n\varepsilon_n^2 }) \sim \frac{2C}{C_0 M_n}+\frac{e^{-(C_1-C_0/2)n\varepsilon_n^2}}{  C_0 n\varepsilon_n^2} \to 0.
	\end{align}
	This completes the proof.
	
	Note that \eqref{e:true-post-conv} gives the statement for the contraction of the true posterior. Similarly the contraction rate for the variational posterior follows as a consequence of \eqref{e:var-post-conv}. An important difference to note is that $Q^*(\mathcal{U}_{\varepsilon_n}^c)$ goes to 0 at the rate $1/M_n$ in contrast to the faster rate $e^{-C_1n\varepsilon_n^2}$ for the true posterior.
	
	Note, Theorem \ref{thm:post-convergence} gives the contraction rate of the variational posterior. However, the convergence of the of variational Bayes estimator to the true values of $\beta_0$ and $B_0$ is not immediate. The following corollary gives the convergence rate for the variational Bayes estimate as long as assumptions \ref{assumption_rs}, \ref{assumption_mf} and \ref{assumption_vb} hold. 
	
	\begin{corollary}[Variational Bayes Estimator Convergence]
		\label{thm: bayes-convergence}
		Let  $\varepsilon_n$ be as in Theorem \ref{thm:post-convergence}, then in $\mathbb{P}_0^{(n)}$ probability,
		$$ \frac{1}{\varepsilon_n} \E_{Q^*}(||\theta-\theta_0||_2) \to 0, \:\: n \to \infty.$$
	\end{corollary}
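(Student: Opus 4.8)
\emph{Sketch of the intended argument.} The variational Bayes estimate is the variational posterior mean $\widehat\theta=\E_{Q^*}(\theta)$, and by Jensen's inequality $\|\widehat\theta-\theta_0\|_2\le \E_{Q^*}(\|\theta-\theta_0\|_2)$, so the plan is to establish directly that $\E_{Q^*}(\|\theta-\theta_0\|_2)=o(\varepsilon_n)$ with dominating probability, which is exactly the assertion of the corollary. Bounding $\|\theta-\theta_0\|_2\le\varepsilon_n$ on $\mathcal{U}_{\varepsilon_n}$ only gives $\E_{Q^*}(\|\theta-\theta_0\|_2\,\mathbf{1}[\theta\in\mathcal{U}_{\varepsilon_n}])\le\varepsilon_n$, which is not $o(\varepsilon_n)$; the first step is therefore to work with a strictly smaller radius $\widetilde\varepsilon_n:=\epsilon_n\sqrt{\widetilde M_n\log n}$, where $\widetilde M_n\to\infty$ is chosen to grow slower than $M_n$ (e.g.\ $\widetilde M_n=\sqrt{M_n}$), so that $\widetilde\varepsilon_n=o(\varepsilon_n)$ while $\widetilde\varepsilon_n/(\epsilon_n\sqrt{\log n})=\sqrt{\widetilde M_n}\to\infty$. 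Splitting
\begin{align*}
\E_{Q^*}(\|\theta-\theta_0\|_2)=\E_{Q^*}\big(\|\theta-\theta_0\|_2\,\mathbf{1}[\theta\in\mathcal{U}_{\widetilde\varepsilon_n}]\big)+\E_{Q^*}\big(\|\theta-\theta_0\|_2\,\mathbf{1}[\theta\in\mathcal{U}_{\widetilde\varepsilon_n}^c]\big),
\end{align*}
the first term is at most $\widetilde\varepsilon_n=o(\varepsilon_n)$, so everything reduces to the tail term.

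For the tail term I would use the layer-cake identity
\begin{align*}
\E_{Q^*}\big(\|\theta-\theta_0\|_2\,\mathbf{1}[\theta\in\mathcal{U}_{\widetilde\varepsilon_n}^c]\big)=\widetilde\varepsilon_n\,Q^*(\mathcal{U}_{\widetilde\varepsilon_n}^c)+\int_{\widetilde\varepsilon_n}^{\infty}Q^*(\mathcal{U}_t^c)\,dt,
\end{align*}
and the crux is the tail estimate $Q^*(\mathcal{U}_t^c)\lesssim \epsilon_n^2\log n/t^2$ for $\widetilde\varepsilon_n\le t\le 1$ (the range $t>1$ is discussed below). This would follow by re-running, at radius $t$, the Donsker--Varadhan step behind \eqref{e:var-post-conv}: applying \eqref{e:kl-rel} with $f=(C_0/2)nt^2\,\mathbf{1}[\theta\in\mathcal{U}_t^c]$ yields
\begin{align*}
Q^*(\mathcal{U}_t^c)\le\frac{2}{C_0nt^2}\,{\rm KL}\big(Q^*,\Pi(\cdot\mid X^{(n)})\big)+\frac{2}{C_0nt^2}\log\Big(1+e^{(C_0/2)nt^2}\,\Pi(\mathcal{U}_t^c\mid X^{(n)})\Big),
\end{align*}
where Lemmas~\ref{lem:E2}--\ref{lem:E3} bound the first term by $2C\epsilon_n^2\log n/(C_0t^2)$ and Lemmas~\ref{lem:E1}--\ref{lem:E2} give $\Pi(\mathcal{U}_t^c\mid X^{(n)})\le e^{-C_1nt^2}$ with $C_1>C_0/2$ (exactly as in \eqref{e:true-post-conv}), which makes the second term negligible relative to the first once $t\ge\widetilde\varepsilon_n$. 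Inserting this, $\int_{\widetilde\varepsilon_n}^{1}Q^*(\mathcal{U}_t^c)\,dt\lesssim\epsilon_n^2\log n/\widetilde\varepsilon_n=\epsilon_n\sqrt{\log n}/\sqrt{\widetilde M_n}$, which divided by $\varepsilon_n=\epsilon_n\sqrt{M_n\log n}$ is of order $1/\sqrt{M_n\widetilde M_n}\to0$; likewise $\widetilde\varepsilon_n Q^*(\mathcal{U}_{\widetilde\varepsilon_n}^c)/\varepsilon_n\to0$.

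Two technical points make up the real work, and I expect the second to be the main obstacle. First, the estimate $Q^*(\mathcal{U}_t^c)\lesssim\epsilon_n^2\log n/t^2$ must hold simultaneously for \emph{all} $t\in[\widetilde\varepsilon_n,1]$ on a single event of $\mathbb{P}_0^{(n)}$-probability tending to one, whereas Lemma~\ref{lem:E1} is phrased for one radius sequence at a time; I would handle this by discretizing $t$ on a geometric grid down to $\widetilde\varepsilon_n$, using monotonicity of $r\mapsto\int_{\mathcal{U}_r^c}\big(L(\theta)/L(\theta_0)\big)p(\theta)\,d\theta$ inside each dyadic block, and union-bounding over the $O(\log(1/\widetilde\varepsilon_n))$ grid points, which needs the rate at which the probability in Lemma~\ref{lem:E1} tends to one (extractable from its proof). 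Second, the far tail $\int_{1}^{\infty}Q^*(\mathcal{U}_t^c)\,dt=\E_{Q^*}\big((\|\theta-\theta_0\|_2-1)_+\big)$ lies outside the regime covered by Lemma~\ref{lem:E1} ($t\to0$) and must instead be controlled directly from the explicit $N(\bmu^*,\bSigma^*)$ form of $Q^*$ in the $(\log\beta,B)$ coordinates --- i.e.\ from the lognormal tail in $\beta$ and the Gaussian tail in $B$ --- which requires ruling out optimal variational parameters $(\bmu^*,\bSigma^*)$ that blow up. I would obtain this from the decomposition
\begin{align*}
{\rm KL}\big(Q^*,\Pi(\cdot\mid X^{(n)})\big)={\rm KL}(Q^*,p)+\E_{Q^*}\!\big(\log(L(\theta_0)/L(\theta))\big)+\log\!\int\!\big(L(\theta)/L(\theta_0)\big)p(\theta)\,d\theta,
\end{align*}
combining Lemmas~\ref{lem:E2}--\ref{lem:E3} with a lower bound on $\E_{Q^*}\!\big(\log(L(\theta_0)/L(\theta))\big)$ coming from $|m_i(\boldx)|\le\gamma$ (Assumption~\ref{assumption_rs}), so that the available KL budget controls $\|\bmu^*\|$ and $\operatorname{tr}(\bSigma^*)$; sharpening this into genuine boundedness of the variance parameters, so that $Q^*$ has a sufficiently light lognormal tail in $\beta$, is where I anticipate the real difficulty.
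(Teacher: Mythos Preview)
Your approach is genuinely different from the paper's, and the comparison is instructive.

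The paper applies the Donsker--Varadhan inequality \eqref{e:kl-rel} \emph{once}, with $f=(C_2/2)\,n\varepsilon_n\|\theta-\theta_0\|_2$ linear in the norm (not an indicator). This yields directly
\[
(C_2/2)\,n\varepsilon_n\,\E_{Q^*}\|\theta-\theta_0\|_2\le {\rm KL}(Q^*,\Pi(\cdot\mid X^{(n)}))+\log\!\int e^{(C_2/2)n\varepsilon_n\|\theta-\theta_0\|_2}\,d\Pi(\theta\mid X^{(n)}),
\]
and the work shifts to bounding an \emph{exponential moment of the norm under the true posterior}. That bound is obtained via a layer-cake formula for $\Pi(\mathcal{U}_{r\varepsilon_n}^c\mid X^{(n)})$ over $r>0$, which in turn rests on the relation $C_0(r)\ge C_2/r$ (equation \eqref{e:C0-val}) --- stated as an assumption and supported only by numerical evidence in the supplement. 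Granted this, the argument is short: no splitting into near and far tails, no direct analysis of $(\bmu^*,\bSigma^*)$, and no uniformity-in-$t$ issue, because the exponential moment integrates all radii at once under $\Pi$ rather than under $Q^*$.

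Your route --- re-running the indicator version of Donsker--Varadhan at every radius to get $Q^*(\mathcal{U}_t^c)\lesssim \epsilon_n^2\log n/t^2$ and then integrating --- is more elementary and does not need \eqref{e:C0-val}; it only needs the constant $C_0$ in Lemma~\ref{lem:E1} to be bounded below uniformly over shrinking radii, which is what the proof of that lemma actually delivers. The price is exactly the two obstacles you flag: the uniformity over the grid of radii, and especially the far tail $\int_1^\infty Q^*(\mathcal{U}_t^c)\,dt$, which forces you to control the optimal variational parameters. The paper never confronts this last point because its exponential moment lives under $\Pi$, not $Q^*$. So each approach trades one unproven ingredient for another: the paper leans on \eqref{e:C0-val}, you lean on tightness of $(\bmu^*,\bSigma^*)$.
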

	
	Next, we provide a brief sketch of the proof. Further details of the proof have been deferred to appendix \ref{append-B}.
	
	\noindent {\bf Sketch of proof of Corollary \ref{thm: bayes-convergence}:} Let $f=(C_2/2)n\varepsilon_n||\theta-\theta_0||_2$, then
	\begin{align*}
		(C_2/2)&n\varepsilon_n \int ||\theta-\theta_0||_2 dQ^*(\theta) \\
		&\leq {\rm KL}(Q^*,\Pi(|X^{(n)}))+\log (\int e^{C_2n \varepsilon_n||\theta-\theta_0||_2/2} d\Pi(\theta|X^{(n)})).
	\end{align*}
	By Lemma \ref{lem:E2} and \ref{lem:E3}, it can be established with dominating probability, for any $C>0$
	$${\rm KL}(Q^*,\Pi(|X^{(n)})) \leq Cn\epsilon_n^2 \log n.$$
	By Lemma \ref{lem:E1},  and \ref{lem:E2}, it can be established with dominating probability, for some $C_2>0$
	\begin{align}
		\label{e:true-bayes-post}
		\int e^{(C_2/2)n\varepsilon_n||\theta-\theta_0||_2} d\Pi(\theta|X^{(n)}) \leq \frac{1}{(C_2/2)n\varepsilon_n^2}e^{C n\epsilon_n^2 \log n }.
	\end{align}
	Therefore, with dominating probability
	\begin{align*}
		\int ||\theta-\theta_0||_2 dQ^*(\theta)  \leq \frac{2C\varepsilon_n}{C_2M_n }-\frac{2\log(C_2/2)}{C_2n\varepsilon_n}-\frac {2\varepsilon_n \log (n\varepsilon_n^2)}{C_2n\varepsilon_n^2}+\frac{2C\varepsilon_n}{C_2M_n} \leq \varepsilon_n o(1).
	\end{align*}
	This completes the proof.
	
	\eqref{e:true-bayes-post} follows as a consequence of convergence of the true posterior. An important thing to note that if $\varepsilon_n$ can be made arbitrarily close to $n^{-\delta}$ for $0<\delta<1/2$, it guarantees close to $\sqrt{n}$ convergence.
	
	\section{Simulation Results}
	\label{section4}
	\subsection{Generating observed data}
	\label{section_generating_data}
	For numerical implementation, we need a coupling matrix $A_n$ and an observed vector $\boldx$ from \eqref{likel}. First, for generating a random $d$-regular graph and its scaled adjacency matrix, we used a python package \verb+NetworkX+. Using the scaled adjacency matrix as our coupling matrix $A_n$, we facilitate $Metropolis$--$Hastings$ algorithm to generate an observed vector $\boldx$ with true parameters $(\beta_0, B_0)$ as follows:
	\begin{itemize}
		\item[0.] Define $H(\boldx) = \frac{\beta_0}{2} \boldx^\top A_n \boldx + B_0\sum_{i=1}^n x_i$ and start with a random binary vector $\boldx = (x_1, \dots, x_n)^\top$.
		\item[1.] Randomly choose a spin $x_i$, $i \in \{1,\dots,n\}$.
		\item[2.] Flip the chosen spin, i.e. $x_i = -x_i$, and calculate $\Delta H = H(\boldx_{new}) - H(\boldx_{old})$ due to this flip.
		\item[3.] The probability that we accept $\boldx_{new}$ is:
		\begin{align*}
			\mathbb{P}( \text{accept}~ \boldx_{new}) = 
			\begin{cases}
				1, & \text{if}\, \Delta H > 0, \\
				\exp \left( \Delta H \right), & \text{otherwise}.
			\end{cases}
		\end{align*}
		\item[4.] If rejected, put the spin back, i.e. $x_i = -x_i$. 
		\item[5.] Go to 1 until the maximum number of iterations ($L$) is reached.
		\item[6.] After $L = 1,000,000$ iterations, the last result is a sample $\boldx$ we use.
	\end{itemize}
	One can read \cite{izenman2021sampling} for more details of sampling from Ising model.

	\subsection{Performance Comparison}
	We compare the performance of the parameter estimation methods for two-parameter Ising model \eqref{likel} under various combinations of $(d,n)$ and $(\beta_0,B_0)$. $Mean~squared~error (MSE)$ is used as the measurement for assessing the performances. With the given coupling matrix $A_n$ for each scenario, we repeat following steps $R$ times:
	\begin{itemize}
		\item Generate an observed vector $\boldx$ from \eqref{likel} with true parameters $(\beta_0, B_0)$.
		\item Using the proposed BBVI algorithm with MF family or BN family, obtain the optimal variational distribution $q^*$ .
		\item Get $\hat{\theta} = (\hat{\beta}, \hat{B})^\top$ as the sample means based on the samples drawn from $q^*$.
	\end{itemize}  
	We use $S=200$ or $S=2000$ as the Monte Carlo sample size. Figure 1. in the Supplement Section A.2 describes ELBO convergence for the two different sample sizes with MF family and BN family. The figure indicates that the ELBO converges well with a moderate choice of $S$. Further, for faster convergence, one might choose BN family over mean-field family for variational distributions.

	After we get $R=100$ pairs of estimates $(\hat{\beta}_1, \hat{B}_1), \dots, (\hat{\beta}_R, \hat{B}_R)$, $MSE$ is calculated as:
	\begin{align}
		MSE = \frac{1}{R} \sum_{r=1}^R \left(\left(\hat{\beta}_r - \beta_0 \right)^2 + \big(\hat{B}_r - B_0 \big)^2\right). 
	\end{align}
	
	First, we choose moderate values of $\beta_0 = 0.2, 0.7$ with $B_0 = \pm 0.2$, $\pm 0.5, \pm 0.8$. For each pair of $(\beta_0, B_0)$ we take $d =10, 50$. The simulation results show that the $MSE$ values of our BBVI algorithm are smaller than  PMLE from \cite{Ghosal:2020} for most cases. The two numbers in each cell of Table~\ref{perf1}, \ref{perf2}, and \ref{perf3} represent $MSE$ values when $n=100$ and $n=500$ respectively. 
	
	\begin{table}[h]
		\def~{\hphantom{0}}
		\caption{Mean squared errors and computation times  for each pair of $(\beta_0, B_0)$ when $n=100$ (left numbers) and $n=500$ (right numbers)  given the degree of underlying graph $(d)$.}
		\resizebox{\columnwidth}{!}{%
			\begin{tabular}{cccccccc}
				Degree of  & Method & Monte Carlo  & $(0.2, 0.2)$ &  $(0.2, -0.2)$ & $(0.7, 0.2)$ & $(0.7, -0.2)$ & Convergence \\ graph ($d$) &  &  samples ($S$) &  &   &  &  & time (sec)  \\ \hline
				10 & PMLE & - & 0.116 / 0.051 & 0.090 / 0.022 & 0.512 / 0.074 & 0.555 / 0.083 & 1.5 / 3.3  \\
				& MF family & 200  &  0.121 / 0.060 &  0.105 / 0.030 & 0.180 / 0.073 & 0.326 / 0.083 & 30.4 / 60.0 \\
				&  & 2000 & 0.107 / 0.052 &  0.092 / 0.021 & 0.137 / 0.076 & 0.267 / 0.076 & 120.3 / 245.9 \\ 
				& BN family & 200 & 0.100 / 0.047 & 0.084 / 0.019 & 0.204 / 0.075 & 0.324 / 0.084 & 32.8 / 68.0 \\
				&  & 2000 & 0.095 / 0.045 &  0.077 / 0.016 & 0.202 / 0.071 & 0.305 / 0.073 &  122.9 / 251.2 \\ \hline
				50 & PMLE & -  & 0.555 / 0.101 & 0.414 / 0.163 & 1.158 / 0.333 & 0.938 / 0.386 & 1.5 / 3.6 \\
				& MF family & 200 & 0.187 / 0.079 &  0.168 / 0.161 & 0.065 / 0.183  & 0.075 / 0.182 & 30.7 / 60.2 \\
				&  & 2000 &  0.155 / 0.072 &  0.144 / 0.107 & 0.070 / 0.144 & 0.080 / 0.155 & 120.1 / 246.6 \\ 
				& BN family & 200 & 0.293 / 0.065  & 0.236 / 0.148 & 0.102 / 0.174 & 0.085 / 0.178 & 31.5 / 68.1 \\
				&  & 2000 & 0.213 / 0.090 & 0.178 / 0.143 & 0.081 / 0.176 & 0.063 / 0.185 & 123.1 / 250.8
		\end{tabular}}
		\label{perf1}
		PMLE, pseudo maximum likelihood estimate \cite{Ghosal:2020}; MF, mean-field; BN, bivariate normal.
	\end{table}
	
	\begin{table}[h]
		\def~{\hphantom{0}}
		\caption{Mean squared errors and computation times  for each pair of $(\beta_0, B_0)$ when $n=100$ (left numbers) and $n=500$ (right numbers)  given the degree of underlying graph $(d)$.}
		\resizebox{\columnwidth}{!}{%
			\begin{tabular}{cccccccc}
				Degree of  & Method & Monte Carlo & $(0.2, 0.5)$ &  $(0.2, -0.5)$ & $(0.7, 0.5)$ & $(0.7, -0.5)$ & Convergence \\ graph ($d$) &  &  samples ($S$) &  &   &  &  & time (sec)  \\ \hline
				10 & PMLE & -  & 0.272 / 0.053 &  0.126 / 0.067 & 1.048 / 0.232 & 1.240 / 0.261 & 1.5 / 3.2 \\
				& MF family & 200  & 0.194 / 0.055 &  0.116 / 0.072 & 0.130 / 0.150 & 0.141 / 0.146 & 30.1 / 60.1 \\
				&  & 2000 &  0.144 / 0.041 &  0.098 / 0.051 &  0.126 / 0.151 &  0.122 / 0.132 & 120.5 / 244.3\\ 
				& BN family & 200 & 0.231 / 0.042 & 0.136 / 0.046 & 0.172 / 0.144  & 0.239 / 0.136 & 31.9 / 67.5 \\
				&  & 2000 & 0.217 / 0.047 &  0.130 / 0.058 & 0.156 / 0.140 & 0.220 / 0.133 & 122.1 / 250.3 \\ \hline
				50 & PMLE & -  & 1.282 / 0.196 & 0.777 / 0.239  & 1.900 / 0.765 & 1.923 / 1.216 & 1.4 / 3.1\\
				& MF family & 200 &  0.144 / 0.113 &  0.128 / 0.234 &  0.089 / 0.162 & 0.086 / 0.254 & 30.3 / 60.0 \\
				&  & 2000 & 0.120 / 0.048 &  0.108 / 0.097 & 0.099 / 0.197 & 0.100 / 0.157 & 120.4 / 245.1 \\ 
				& BN family & 200 & 0.320 / 0.105 & 0.296 / 0.138 & 0.053 / 0.138 & 0.070 / 0.194 & 32.0 / 67.9\\
				&  & 2000 & 0.296 / 0.126 & 0.274 / 0.183 & 0.053 / 0.107 & 0.065 / 0.135 & 123.4 / 251.7
		\end{tabular}}
		\label{perf2}
		PMLE, pseudo maximum likelihood estimate \cite{Ghosal:2020}; MF, mean-field; BN, bivariate normal.
	\end{table}
	
	\begin{table}[h]
		\def~{\hphantom{0}}
		\caption{Mean squared errors and computation times  for each pair of $(\beta_0, B_0)$ when $n=100$ (left numbers) and $n=500$ (right numbers)  given the degree of underlying graph $(d)$.}
		\resizebox{\columnwidth}{!}{%
			\begin{tabular}{cccccccc}
				Degree of  & Method & Monte Carlo & $(0.2, 0.8)$ &  $(0.2, -0.8)$ & $(0.7, 0.8)$ & $(0.7, -0.8)$ & Convergence \\ graph ($d$) &  &  samples ($S$) &  &   &  &  & time (sec)  \\ \hline
				10 & PMLE & -  & 0.240 / 0.136 & 0.123 / 0.124 & 1.502 / 0.667 & 1.125 / 0.828 & 1.5 / 3.0 \\
				& MF family & 200  &  0.151 / 0.110 & 0.126 / 0.104 & 0.128 / 0.142 & 0.105 / 0.215  & 30.2 / 60.2 \\
				&  & 2000 &  0.127 / 0.080 & 0.111 / 0.067 & 0.083 / 0.121 & 0.079 / 0.171 & 120.0 / 242.9 \\ 
				& BN family & 200 & 0.274 / 0.120 & 0.210 / 0.106 & 0.337 / 0.211 & 0.244 / 0.303 & 32.0 / 67.1 \\
				&  & 2000 & 0.264 / 0.123 & 0.213 / 0.111 & 0.294 / 0.134 & 0.225 / 0.219 & 121.8 / 251.0 \\ \hline
				50 & PMLE & -  & 0.922 / 0.494 & 1.756 / 0.517 & 2.592 / 2.773 & 2.510 / 1.933 & 1.5 / 3.1 \\
				& MF family & 200 & 0.157 / 0.142 & 0.170 / 0.120 & 0.275 / 0.125 & 0.283 / 0.119 & 30.3 / 60.0 \\
				&  & 2000 &  0.138 / 0.049 &  0.140 / 0.044 & 0.288 / 0.138 & 0.292 / 0.156 & 120.5 / 242.5 \\ 
				& BN family & 200 & 0.431 / 0.160 & 0.496 / 0.223 & 0.318 / 0.130 & 0.358 / 0.086 & 32.1 / 67.0 \\
				&  & 2000 & 0.417 / 0.183 & 0.474 / 0.196 & 0.305 / 0.094 & 0.347 / 0.074 & 121.5 / 251.2
		\end{tabular}}
		\label{perf3}
		PMLE, pseudo maximum likelihood estimate \cite{Ghosal:2020}; MF, mean-field; BN, bivariate normal.
	\end{table}
	For moderate values of $\beta_0$ (Table \ref{perf1}, \ref{perf2}, and \ref{perf3}), there is no significant difference between MF family and BN family in terms of $MSE$. 
	When the interaction parameter $\beta_0$ is large, BN family seems to perform better. Table~\ref{perf4} shows all results for $\beta_0 = 1.2$. 
	
	\begin{table}[h]
		\def~{\hphantom{0}}
		\caption{Mean squared errors and computation times  for each pair of $(\beta_0, B_0)$ when $n=100$ (left numbers) and $n=500$ (right numbers)  given the degree of underlying graph $(d)$.}
		\resizebox{\columnwidth}{!}{%
			\begin{tabular}{cccccccc}
				Degree of  & Method & Monte Carlo & $(1.2, 0.2)$ &  $(1.2, -0.2)$ & $(1.2, 0.5)$ & $(1.2, -0.5)$ & Convergence \\ graph ($d$) &  &  samples ($S$) &  &   &  &  & time (sec)  \\ \hline
				10 & PMLE & -  &  1.681 / 0.379 & 1.677 / 0.496 & 2.687 / 1.483 & 2.716 / 1.598 & 1.6 / 3.0 \\
				& MF family & 200  &  0.544 / 0.315 & 0.578 / 0.428 & 0.501 / 0.627 & 0.498 / 0.737 & 30.1 / 60.2 \\
				&  & 2000 &  0.573 / 0.417 & 0.625 / 0.534 & 0.532 / 0.700 & 0.514 / 0.814 & 120.8 / 241.8 \\ 
				& BN family & 200 &  0.404 / 0.309 & 0.409 / 0.374 & 0.247 / 0.488 & 0.248 / 0.479 & 32.2 / 66.9 \\
				&  & 2000 &  0.396 / 0.295 & 0.405 / 0.368 & 0.235 / 0.411 & 0.234 / 0.449 & 121.3 / 250.7
				\\ \hline
				50 & PMLE & - &  2.941 / 2.302 & 3.252 / 1.509 & 2.830 / 3.190 & 5.631 / 3.526 & 1.5 / 3.1 \\
				& MF family & 200 &  0.775 / 0.998 & 0.771 / 0.775 & 0.832 / 0.836 & 0.810 / 0.792 & 30.0 / 60.3 \\
				&  & 2000 &  0.812 / 1.022 & 0.818 / 1.025 & 0.868 / 0.972 & 0.862 / 0.947 & 120.5 / 241.4 \\ 
				& BN family &  200 & 0.308 / 0.580 & 0.286 / 0.518 & 0.376 / 0.336 & 0.369 / 0.294 & 31.9 / 67.0 \\
				&  & 2000 &  0.321 / 0.644 & 0.285 / 0.496 & 0.380 / 0.272 & 0.376 / 0.208 & 121.0 / 250.1\\
		\end{tabular}}
		\label{perf4}
		PMLE, pseudo maximum likelihood estimate \cite{Ghosal:2020}; MF, mean-field; BN, bivariate normal.
	\end{table}
	The numerical studies validate the superiority of our proposed variational Bayes based method. For more practical applications, we used our algorithm to regenerate an image matrix in the next subsection.
	
	\subsection{Data Reconstruction}
	The Ising model can be used for constructing an image in computer vision problem. In particular, the Bayesian procedure facilitate the reconstruction easily by using the posterior predictive distribution \cite{halim2007modified}. Consider an image in which each pixel represents either $-1 (white)$ or $1 (black)$. For choice of coupling matrix $A_n$, we use four-nearest neighbor structure and construct corresponding scaled adjacency matrix \cite{hurn2003tutorial}. Then, we can generate such images following the steps in the subsection \ref{section_generating_data} with a true parameter pair $(\beta_0, B_0)$ and use it as our given data $\boldx$. With the generated image $\boldx$ and coupling matrix $A_n$, we obtain $(\hat{\beta}, \hat{B})$ after implementing the parameter estimation procedure based on BN family. The estimates $(\hat{\beta}, \hat{B})$ are used for data regeneration following the steps in the subsection \ref{section_generating_data} again. In Figure~\ref{img_recon}, we plot two original images in left column. The first original image was generated with $\beta_0 = 1.2, B_0 = 0.2$ and we use $\beta_0 = 1.2, B_0 = -0.2$ for the second one. Also, in the right column, there are two corresponding images regenerated. It seems that, using our two-parameter Ising model and VI method with BN family, we can reconstruct the overall tendency of black and white images fairly well. For more precise pixel-by-pixel reconstruction, one can utilize multiple external parameters, $\boldB = (B_1, \dots, B_n)^\top$.
	
	\begin{figure}[htbp]
		\centering
		\begin{subfigure}{.5\textwidth}
			\centering
			\includegraphics[width=.7\linewidth]{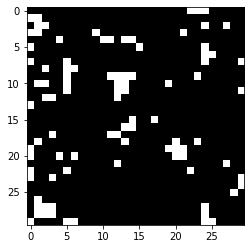}
		\end{subfigure}%
		\begin{subfigure}{.5\textwidth}
			\centering
			\includegraphics[width=.7\linewidth]{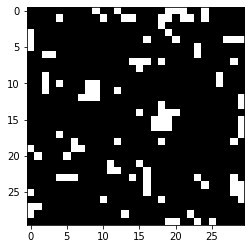}
		\end{subfigure}
		\centering
		\begin{subfigure}{.5\textwidth}
			\centering
			\includegraphics[width=.7\linewidth]{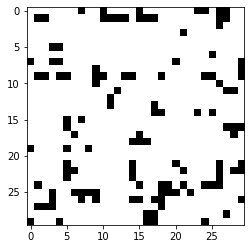}
		\end{subfigure}%
		\begin{subfigure}{.5\textwidth}
			\centering
			\includegraphics[width=.7\linewidth]{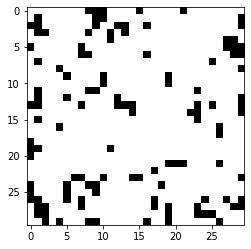}
		\end{subfigure}
		\caption{Figures showing that ELBO convergence.}
		\label{img_recon}
	\end{figure}
	
	\section{Conclusion}
	In this article, we consider a two-parameter Ising model and proposed a variational Bayes estimation technique. The use of pseudo-likelihood helps to avoid the computation of normalizing constants. The VI procedure facilitates the computation further. We established the theoretical properties of the proposed procedure, which is quite challenging yet needed for statistical validation. The numerical investigation indicates the procedure can work well in applications. 
	There are variations of the Ising models depending on applications. The framework developed in this work will facilitate further developments of this important statistical tool.

	\begin{appendix}
		\section{}\label{append-A}
		\subsection{Preliminary notations and Lemmas}
		
		\noindent Let $\theta=(\beta,B)$. Define $W_n=(W_{1n},W_{2n})$
		\begin{align}
			\label{e:first-order}
			\nonumber    W_{1n}(\theta \mid \boldx) &= \sum_{i=1}^n m_i(\boldx) \left( x_i - \tanh (\beta m_i(\boldx) +B) \right), \\
			W_{2n}(\theta \mid \boldx) &= \sum_{i=1}^n \left( x_i - \tanh (\beta m_i(\boldx) +B) \right), \end{align}
		Define     \begin{align}
			\label{e:second-order}
			H_n(\theta \mid \boldx) &= \begin{bmatrix}
				\sum_{i=1}^n m_i(\boldx)^2 S_i(\theta \mid \boldx) & \sum_{i=1}^n m_i(\boldx)S_i(\theta \mid \boldx) \vspace{2mm}\\
				\sum_{i=1}^n m_i(\boldx)S_i(\theta \mid \boldx) & \sum_{i=1}^n S_i(\theta \mid \boldx)
			\end{bmatrix}
		\end{align}
		where  $S_i(\theta \mid \boldx) = \text{sech}^2 \left(\beta m_i(\boldx) +B \right)$.
		
		\noindent Define
		
		\begin{align}
			\label{e:Cn-def}
			R_{1n}(\theta \mid \boldx) &:= \begin{bmatrix}
				\sum_{i=1}^n m_i(\boldx)^3 \left(h_i(\theta \mid \boldx) - h_i^3( \theta \mid \boldx)\right) & \sum_{i=1}^n m_i(\boldx)^2 \left(h_i(\theta \mid \boldx) - h_i^3( \theta \mid \boldx)\right)\vspace{2mm}\\
				\sum_{i=1}^n m_i(\boldx)^2 \left(h_i(\theta \mid \boldx) - h_i^3( \theta \mid \boldx)\right) & \sum_{i=1}^n m_i(\boldx) \left(h_i(\theta \mid \boldx) - h_i^3( \theta \mid \boldx)\right)
			\end{bmatrix},
		\end{align}
		and 
		\begin{align}
			\label{e:Dn-def}
			R_{2n}(\theta \mid  \boldx) &:= \begin{bmatrix}
				\sum_{i=1}^n m_i(\boldx)^2 \left(h_i(\theta \mid \boldx) - h_i^3( \theta \mid \boldx)\right) & \sum_{i=1}^n m_i(\boldx) \left(h_i(\theta \mid \boldx) - h_i^3( \theta \mid \boldx)\right)\vspace{2mm}\\
				\sum_{i=1}^n m_i(\boldx) \left(h_i(\theta \mid \boldx) - h_i^3( \theta \mid \boldsymbol{x})\right) & \sum_{i=1}^n \left(h_i(\theta \mid \boldx) - h_i^3( \theta \mid \boldx)\right)
			\end{bmatrix}.
		\end{align}
		where $h_i(\theta \mid \boldx) = \tanh\left( \beta m_i(\boldx) + B\right)$.
		
		\begin{lemma}
			\label{qn:rn:bounded} 
			Let $W_{1n}$ and $W_{2n}$ be as in \eqref{e:first-order}, then
			\begin{align*}
				\frac{1}{n}\E_0^{(n)}\left( W_{1n}(\theta_0 \mid \boldx)\right)^2 < \infty \hspace{10mm}
				\frac{1}{n}\E_0^{(n)}\left( W_{2n}(\theta_0 \mid \boldx)\right)^2 < \infty
			\end{align*}
			\begin{proof}
				See the lemma 2.1 in \cite{Ghosal:2020}.
			\end{proof}
		\end{lemma}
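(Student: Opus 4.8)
The plan is to avoid expanding the squares $W_{1n}^2$, $W_{2n}^2$ into sums of covariances and instead use a discrete integration-by-parts (Stein-type) identity for the Ising measure. The starting observation is that under $\mathbb{P}_0^{(n)}$ the conditional law of $x_i$ given the other coordinates $\boldx_{-i}:=(x_j)_{j\neq i}$ is $\mathbb{P}_0^{(n)}(x_i=1\mid\boldx_{-i})=\tfrac12\big(1+\tanh(\beta_0 m_i(\boldx)+B_0)\big)$. Because $A_n$ has zero diagonal, $m_i(\boldx)=\sum_j A_n(i,j)x_j$ does not depend on $x_i$, so $h_i:=\tanh(\beta_0 m_i(\boldx)+B_0)$ and $S_i:=\mathrm{sech}^2(\beta_0 m_i(\boldx)+B_0)=1-h_i^2$ are $\boldx_{-i}$-measurable; a one-line computation then gives, for every $f:\{-1,1\}^n\to\R$,
\begin{align}
\E_0^{(n)}\!\left[(x_i-h_i)\,f(\boldx)\mid\boldx_{-i}\right]=\tfrac12\,S_i\big(f(\boldx^{i+})-f(\boldx^{i-})\big),\nonumber
\end{align}
where $\boldx^{i\pm}$ is $\boldx$ with its $i$-th coordinate reset to $\pm1$.

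Next I would apply the identity with $f=W_{2n}(\theta_0\mid\cdot)=\sum_i(x_i-h_i)$ and with $f=W_{1n}(\theta_0\mid\cdot)=\sum_i m_i(\boldx)(x_i-h_i)$. For the former, $\E_0^{(n)}\big[W_{2n}^2\big]=\sum_i\E_0^{(n)}\big[(x_i-h_i)W_{2n}\big]=\sum_i\E_0^{(n)}\big[\tfrac12 S_i\,\Delta_i W_{2n}\big]$, where $\Delta_i W_{2n}:=W_{2n}(\boldx^{i+})-W_{2n}(\boldx^{i-})$; for the latter, since $m_i(\boldx)$ is $\boldx_{-i}$-measurable it may be pulled out of the conditional expectation, yielding $\E_0^{(n)}\big[W_{1n}^2\big]=\sum_i\E_0^{(n)}\big[\tfrac12 m_i(\boldx)S_i\,\Delta_i W_{1n}\big]$. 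The zero diagonal of $A_n$ is precisely what legitimizes treating $m_i(\boldx)$ as a constant under conditioning on $\boldx_{-i}$.

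It then remains to bound the single-flip increments $\Delta_i W_{1n}$ and $\Delta_i W_{2n}$ uniformly in $n$. Flipping $x_i$ changes $x_i$ by $\pm2$, leaves $m_i(\boldx)$, $h_i$ and $S_i$ untouched, and perturbs each $m_j(\boldx)$ for $j\neq i$ by $2A_n(i,j)$, hence each $h_j$ by at most $2\beta_0 A_n(i,j)$ since $\tanh$ is $1$-Lipschitz. Feeding these into the definitions of $W_{1n}$, $W_{2n}$ together with $|m_j(\boldx)|\le\gamma$, $|h_j|\le1$ and $\sum_j A_n(i,j)\le\gamma$ (Assumption \ref{assumption_rs}) gives $|\Delta_i W_{2n}|\le 2+2\beta_0\gamma$ and $|\Delta_i W_{1n}|\le c(\gamma,\beta_0)$ with $c(\gamma,\beta_0)$ free of $n$. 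Using $0\le S_i\le1$ and $|m_i(\boldx)|\le\gamma$, this produces $\E_0^{(n)}\big[W_{2n}^2\big]\le (1+\beta_0\gamma)\,n$ and $\E_0^{(n)}\big[W_{1n}^2\big]\le\tfrac12\gamma\, c(\gamma,\beta_0)\,n$; dividing by $n$ completes the argument.

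I expect the only delicate point to be the bookkeeping in this last step — keeping track of which of $m_i$, $h_i$, $S_i$ are unaffected by the flip of $x_i$ (which is where the zero diagonal of $A_n$ enters) and summing the $O(A_n(i,j))$ perturbations over $j$ against the row-sum bound. Everything else is immediate from the Gibbs conditional structure. The more pedestrian alternative — expanding the squares and noting that $\E_0^{(n)}[(x_i-h_i)(x_j-h_j)]=0$ unless $i=j$ or $i\sim j$ — I would avoid, since each surviving edge covariance is only $O(A_n(i,j))$ small while the number of edges can far exceed $n$, so that route would need an extra estimate on the edge covariances that the integration-by-parts argument renders unnecessary.
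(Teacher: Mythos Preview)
Your argument is correct. The paper does not actually prove this lemma; it simply cites Lemma~2.1 of \cite{Ghosal:2020}, so there is no ``paper's proof'' to compare against line by line. That said, the Stein-type identity you use --- conditioning on $\boldx_{-i}$ to obtain
\[
\E_0^{(n)}\!\big[(x_i-h_i)f(\boldx)\big]=\tfrac12\,\E_0^{(n)}\!\big[S_i\big(f(\boldx^{i+})-f(\boldx^{i-})\big)\big],
\]
and then bounding the single-flip increments $\Delta_i W_{kn}$ via the row-sum condition $\sum_j A_n(i,j)\le\gamma$ --- is exactly the mechanism behind the cited result (and goes back to Chatterjee's exchangeable-pairs technique for spin systems). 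Your bookkeeping is correct: the zero diagonal of $A_n$ is what makes $m_i$, $h_i$, $S_i$ all $\boldx_{-i}$-measurable, and the $1$-Lipschitz property of $\tanh$ together with $|m_j|\le\gamma$ turns the perturbation of each off-diagonal term into $O(A_n(i,j))$, which sums to $O(\gamma)$. Your final remark is also apt: the naive covariance expansion would require controlling $\sum_{i\sim j}\E_0^{(n)}[(x_i-h_i)(x_j-h_j)]$ separately, which is precisely what the integration-by-parts route sidesteps.
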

		
		\begin{lemma}
			\label{Lee}
			Let $p_1$ and $p_2$ be any two density functions. Then,
			\begin{align*}
				\E_{P_1}\left(\left|\log \frac{p_1}{p_2} \right|\right) \leq {\rm KL}(P_1, P_2) + \frac{2}{e}
			\end{align*}
			\begin{proof}
				See the lemma 4 in \cite{lee2000consistency}.
			\end{proof}
		\end{lemma}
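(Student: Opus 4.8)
I would prove this by reducing it to the elementary inequality $\log u \le u/e$, valid for every $u>0$. Set $f = \log(p_1/p_2)$, so that by definition $\E_{P_1}(f) = {\rm KL}(P_1,P_2) \ge 0$. The first step is the pointwise identity $|f| = f + 2f^{-}$, where $f^{-} := \max(-f,0)$ denotes the negative part of $f$. Taking expectations under $P_1$ (the term $\E_{P_1}(f^{-})$ will turn out to be finite, so this identity is legitimate in $[0,\infty]$),
\begin{align*}
\E_{P_1}\left(\left|\log\frac{p_1}{p_2}\right|\right) = {\rm KL}(P_1,P_2) + 2\,\E_{P_1}\left(f^{-}\right),
\end{align*}
so it is enough to show $\E_{P_1}(f^{-}) \le 1/e$.

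Next I would note that $f^{-}$ vanishes outside the set $A := \{p_2 > p_1\}$, and on $A$ one has $f^{-} = \log(p_2/p_1)$. Writing $\mu$ for a common dominating measure,
\begin{align*}
\E_{P_1}(f^{-}) = \int_{A} p_1 \log\frac{p_2}{p_1}\, d\mu \le \int_{A} \frac{p_2}{e}\, d\mu \le \frac{1}{e}\int p_2\, d\mu = \frac{1}{e},
\end{align*}
where the middle inequality applies $\log u \le u/e$ with $u = p_2/p_1$ pointwise on $A$. Substituting into the previous display yields $\E_{P_1}(|\log(p_1/p_2)|) \le {\rm KL}(P_1,P_2) + 2/e$, as claimed.

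The supporting inequality $\log u \le u/e$ is the only auxiliary fact needed: for $g(u) = u/e - \log u$ one computes $g'(u) = 1/e - 1/u$, which vanishes only at $u = e$, and $g''(u) = 1/u^2 > 0$, so $u=e$ is the global minimum with $g(e) = 0$. I do not expect a genuine obstacle here; the only point requiring a moment's thought is the split $|f| = f + 2f^{-}$ together with the observation that the integrand of the KL divergence can only be negative on $\{p_2 > p_1\}$, which is exactly the region on which the extra $2/e$ must be spent.
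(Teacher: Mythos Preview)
Your proof is correct and self-contained. The paper itself does not give an argument for this lemma at all; it simply cites Lemma~4 of \cite{lee2000consistency}, so your elementary derivation via the decomposition $|f| = f + 2f^{-}$ and the pointwise bound $\log u \le u/e$ is strictly more informative than what appears in the paper (and is in fact essentially the argument one finds in the cited source).
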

		
		\begin{lemma}
			\label{lem:}  
			Let $T_n(\boldx) = \frac{1}{n}\sum_{i=1}^n \left( m_i(\boldx) - \frac{1}{n}\sum_{i=1}^n m_i(\boldx)\right)^2$. Suppose Assumptions \ref{assumption_rs}, \ref{assumption_mf} and \ref{assumption_vb}  hold, then
			$$T_n(\boldx)=O_p(1), \hspace{5mm} 1/T_n(\boldx)=O_p(1)$$
			\begin{proof}
				See the theorem 1.4 in \cite{Ghosal:2020}.
			\end{proof}
		\end{lemma}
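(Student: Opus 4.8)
The upper bound is immediate and deterministic: by Assumption~\ref{assumption_rs}, $|m_i(\boldx)| = |\sum_{j} A_n(i,j) x_j| \le \sum_j A_n(i,j) \le \gamma$ for \emph{every} configuration $\boldx \in \{-1,1\}^n$, so $T_n(\boldx) \le \frac1n \sum_{i} m_i(\boldx)^2 \le \gamma^2$ surely; in particular $T_n(\boldx) = O_p(1)$. The content of the lemma is therefore the lower bound $1/T_n(\boldx) = O_p(1)$, i.e.\ that $T_n(\boldx)$ stays bounded away from $0$ with $\mathbb{P}_0^{(n)}$-probability tending to $1$.

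\noindent For the lower bound I would first write $T_n$ as a centred quadratic form. Let $r = (r_1,\dots,r_n)^\top$ collect the row sums $r_i = \sum_j A_n(i,j)$, which by symmetry of $A_n$ are also the column sums; then $\bar m(\boldx) = \frac1n r^\top \boldx$, hence
$$ T_n(\boldx) = \tfrac1n\, \boldx^\top B_n\, \boldx, \qquad B_n := A_n^2 - \tfrac1n\, r r^\top \succeq 0 . $$
The plan is then: (i) show $T_n(\boldx) - \E_0^{(n)}[T_n(\boldx)] = o_p(1)$; and (ii) show $\liminf_{n} \E_0^{(n)}[T_n(\boldx)] > 0$. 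Granting both, $T_n(\boldx) \ge \tfrac12\liminf_n \E_0^{(n)}[T_n(\boldx)] > 0$ on an event of probability $\to 1$, whence $1/T_n(\boldx) = O_p(1)$. (Note the crude bound $T_n \le \gamma^2$ plus a Paley--Zygmund argument would only give $\mathbb{P}_0^{(n)}(T_n > c) \ge \mathrm{const}$, not $\to 1$, so step (i) is genuinely needed.)

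\noindent For (i): flipping a single coordinate $x_k$ changes $\boldx^\top B_n \boldx$ by at most $4\sum_{\ell}|B_n(k,\ell)| \le 8\gamma^2$, since $\sum_\ell (A_n^2)_{k\ell} = (A_n r)_k \le \gamma^2$ and $\sum_\ell \tfrac1n r_k r_\ell = r_k\bar r \le \gamma^2$; thus $T_n$ has coordinatewise oscillation of order $\gamma^2/n$. Under the mean-field conditions of Assumption~\ref{assumption_mf} the Ising measure obeys a bounded-differences/McDiarmid concentration inequality (via Chatterjee's exchangeable-pairs method \cite{chatterjee2007estimation}, or because under Assumption~\ref{assumption_mf}(ii) the Gibbs measure is quantitatively close to a product measure, cf.\ \cite{Basak:2017}), giving $\mathrm{Var}_0^{(n)}(T_n) = O(\gamma^4/n) \to 0$ and hence (i) by Chebyshev. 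For (ii), expand
$$ \E_0^{(n)}[T_n] \;=\; \tfrac1n\sum_i \mathrm{Var}_0^{(n)}(m_i) \;-\; \mathrm{Var}_0^{(n)}(\bar m) \;+\; \tfrac1n\sum_i (\mu_i - \bar\mu)^2 , $$
where $\mu_i = \E_0^{(n)}[m_i] = \sum_j A_n(i,j)\rho_j$, $\rho_j = \E_0^{(n)}[x_j]$ and $\bar\mu = \tfrac1n\sum_i\mu_i$. Under Assumption~\ref{assumption_mf} the pairwise covariances $\mathrm{Cov}_0^{(n)}(x_j,x_k)$ are negligible on average, so the first two terms are $O(\tfrac1n\sum_{i,j}A_n(i,j)^2) + O(\gamma^2/n) = o(1)$, leaving $\E_0^{(n)}[T_n] = \tfrac1n\|\widetilde A_n\rho\|^2 + o(1)$ with $\widetilde A_n(i,j) = A_n(i,j) - r_j/n$. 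Because $m_j$ concentrates around $\mu_j$, the marginals satisfy the mean-field fixed point $\rho_j \approx \tanh(\beta_0 \mu_j + B_0)$; combining this with $\beta_0 > 0$, $B_0 \neq 0$, and Assumption~\ref{assumption_vb} — which is exactly what rules out the degenerate Curie--Weiss-type situation in which all row sums (hence all $\mu_i$) coincide and $T_n \to 0$ — yields $\liminf_n \tfrac1n\|\widetilde A_n\rho\|^2 > 0$.

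\noindent The main obstacle is step (ii), and inside it the passage from the approximation $\E_0^{(n)}[T_n] \approx \tfrac1n\|\widetilde A_n\rho\|^2$ to a strictly positive lower bound: one must control the mean-field errors (the deviation of the pairwise correlations from $0$, and of $\rho_j$ from $\tanh(\beta_0\mu_j + B_0)$) finely enough that they cannot cancel the $\Theta(1)$ contribution supplied by Assumption~\ref{assumption_vb}. This is precisely the content of Theorem~1.4 of \cite{Ghosal:2020}, which I would invoke for this step.
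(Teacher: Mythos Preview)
The paper's proof is literally one line: it cites Theorem~1.4 of \cite{Ghosal:2020} and nothing else. Your proposal ends up in the same place — after a helpful heuristic outline (deterministic upper bound from Assumption~\ref{assumption_rs}; for the lower bound, write $T_n$ as a centred quadratic form, show concentration, then show $\liminf_n \E_0^{(n)}[T_n]>0$), you explicitly invoke Theorem~1.4 of \cite{Ghosal:2020} for the hard positivity step, which is the entire content of the lemma. So in substance your approach and the paper's coincide: both defer to \cite{Ghosal:2020}. Your sketch is a reasonable explanation of \emph{why} that theorem should hold, but note that step~(i) (concentration under the Ising measure rather than a product measure) would itself require nontrivial input — Chatterjee's exchangeable-pairs bound or the mean-field approximation you allude to — so the sketch is not a self-contained alternative proof, just a roadmap that terminates at the same citation.
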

		
		\subsection{Taylor expansion for $\log L(\theta)$}
		
		\begin{lemma}
			\label{hessian}
			Consider the term $(\theta - \theta_0)^\top H_n(\theta_0 \mid \boldx) (\theta - \theta_0)$ where $H_n$ is the same as in \eqref{e:second-order}. Then, for some $C_1,C_2>0$, we have
			\begin{align*}
				\mathbb{P}_0^{(n)}\left(   C_1 n\lVert\theta - \theta_0\rVert_2^2 \leq (\theta - \theta_0)^\top H_n(\theta_0 \mid \boldx)(\theta - \theta_0) \leq C_2n \lVert\theta - \theta_0\rVert_2^2 \right) \to 1, n \to \infty 
			\end{align*}
			
			\begin{proof}
				For some $M_1,M_2>0$, let $\mathcal{A}_{1n}=\{\boldx: T_n(\boldx) \leq M_1\}$, $\mathcal{A}_{2n}=\{\boldx: T_n(\boldx)\geq M_2\}$. 
				
				\noindent Let $\mathcal{A}_n =\mathcal{A}_{1n} \cap \mathcal{A}_{2n}$, then  $\mathbb{P}_0^{(n)}(\mathcal{A}_n) \to 1$. 
				
				\noindent This is because  by Lemma \ref{lem:}, there exists $M_1$ and $M_2$ such that 
				\begin{align*}\mathbb{P}_0^{(n)}(T_n>M_1)&=\mathbb{P}_0^{(n)}(1/T_n<1/M_1)\to 0\\
					\mathbb{P}_0^{(n)}(T_n<M_2)&=\mathbb{P}_0^{(n)}(1/T_n>1/M_2) \to 0
				\end{align*}
				
				\noindent The remaining part of the proof works with only $\boldx \in \mathcal{A}_n$. Let $e^H_1 \geq e^H_2$ be the eigenvalues of $H_n(\theta_0 \mid \boldx)$. The trace of $H_n(\theta_0 \mid \boldx)$, is
				\begin{align*}
					tr\left(H_n(\theta_0 \mid \boldx)\right) &= e^H_1 + e^H_2 = \sum_{i=1}^n \text{sech}^2\left(\beta_0m_i(\boldsymbol{x}) + B_0 \right)\left(m_i^2(\boldsymbol{x}) + 1 \right)  \leq n(1+\gamma^2) \end{align*}
				where we used $|m_i(\boldx)| \leq \gamma$ based on Assumption \ref{assumption_rs}.
				Note (2.7) in \cite{Ghosal:2020} gives a lower bound of $e^H_2$:
				\begin{align}
					\label{eigen_lowerb}
					e^H_2 \geq \frac{\text{sech}^4(\beta_0 \gamma + |B_0|)}{1+\gamma^2}nT_n(\boldx),
				\end{align}
				where $T_n(\boldx)$ is as in the Lemma \ref{lem:}. By spectral decomposition of $H_n(\theta_0 \mid \boldx)$, 
				\begin{align}
					\label{e:hess-upp}
					\nonumber    (\theta - \theta_0)^\top H_n(\theta_0 \mid \boldx)(\theta - \theta_0) &\leq e^H_1\left\{(\beta - \beta_0)^2 + (B - B_0)^2\right\} \\
					\nonumber   &  \hspace{-15mm}\leq \left(n(1+\gamma^2) - e^H_2 \right) \left\{(\beta - \beta_0)^2 + (B - B_0)^2\right\} \\
					&  \hspace{-15mm} \leq n\left((1+\gamma^2) - \frac{\text{sech}^4(\beta_0 \gamma + |B_0|)}{1+\gamma^2}T_n(\boldx) \right) \left\{(\beta - \beta_0)^2 + (B - B_0)^2\right\}
				\end{align}
				Also, 
				\begin{align}
					\label{e:h-pos}
					\nonumber   (\theta - \theta_0)^\top H_n(\theta_0 \mid \boldx)(\theta - \theta_0) &\geq e^H_2\left\{(\beta - \beta_0)^2 + (B - B_0)^2\right\} \\
					&\geq  \frac{\text{sech}^4(\beta_0 \gamma + |B_0|)}{1+\gamma^2}nT_n(\boldx)  \left\{(\beta - \beta_0)^2 + (B - B_0)^2\right\}
				\end{align}
				Since $M_2\leq T_n(\boldx)\leq M_1$ for every $\boldx \in \mathcal{A}_n$, the proof follows.
			\end{proof}
		\end{lemma}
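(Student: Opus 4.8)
The plan is to show that, on an event of $\mathbb{P}_0^{(n)}$-probability tending to one, the $2\times 2$ symmetric positive semidefinite matrix $H_n(\theta_0\mid\boldx)$ has both eigenvalues of exact order $n$; the claimed two-sided bound on the quadratic form then drops out of a one-line spectral-decomposition argument. To set up the favourable event, note that by Lemma \ref{lem:} we have $T_n(\boldx)=O_p(1)$ and $1/T_n(\boldx)=O_p(1)$, so there are constants $0<M_2<M_1$ with $\mathbb{P}_0^{(n)}\!\left(M_2\le T_n(\boldx)\le M_1\right)\to 1$; all the remaining estimates are then carried out deterministically for $\boldx$ in this event.

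Next I would bound the eigenvalues $e_1^H\ge e_2^H\ge 0$ of $H_n(\theta_0\mid\boldx)$. For the upper bound, using $\text{sech}^2\le 1$ and $|m_i(\boldx)|\le\gamma$ (Assumption \ref{assumption_rs}), the trace satisfies $e_1^H+e_2^H=\sum_{i=1}^n \text{sech}^2(\beta_0 m_i(\boldx)+B_0)(m_i(\boldx)^2+1)\le n(1+\gamma^2)$, hence $e_1^H\le n(1+\gamma^2)$. For the lower bound on the smaller eigenvalue I would borrow the estimate from \cite{Ghosal:2020} that makes the dependence on the dispersion statistic explicit, namely $e_2^H\ge \frac{\text{sech}^4(\beta_0\gamma+|B_0|)}{1+\gamma^2}\,n\,T_n(\boldx)$, which on the favourable event is at least $c\,n\,M_2$ for a positive constant $c$ depending only on $\beta_0,B_0,\gamma$. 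Writing $H_n(\theta_0\mid\boldx)$ in spectral form and using $\|\theta-\theta_0\|_2^2=(\beta-\beta_0)^2+(B-B_0)^2$, one gets $e_2^H\|\theta-\theta_0\|_2^2\le(\theta-\theta_0)^\top H_n(\theta_0\mid\boldx)(\theta-\theta_0)\le e_1^H\|\theta-\theta_0\|_2^2$; substituting the two eigenvalue bounds together with $M_2\le T_n(\boldx)\le M_1$ yields constants $C_1,C_2>0$ for which the stated inequality holds on the favourable event, hence with probability $\to 1$.

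The main obstacle is the lower bound $e_2^H\gtrsim n$: a priori $H_n$ could be nearly singular, which is exactly what happens when the $m_i(\boldx)$ are nearly constant, and this is precisely why Assumption \ref{assumption_vb} — through the positive lower bound on $T_n(\boldx)$ furnished by Lemma \ref{lem:} — is indispensable here. Quantifying $e_2^H$ in terms of $T_n(\boldx)$, which morally amounts to a lower bound on $\det H_n$ relative to $\text{tr}\,H_n$, is the only non-routine step; the trace bound comes essentially for free from Assumption \ref{assumption_rs}, and the sandwich inequality for the quadratic form is elementary linear algebra.
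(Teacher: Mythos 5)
Your proposal is correct and follows essentially the same route as the paper's proof: restricting to the event $M_2\le T_n(\boldx)\le M_1$ via Lemma \ref{lem:}, bounding the trace by $n(1+\gamma^2)$ through Assumption \ref{assumption_rs}, invoking the eigenvalue lower bound $e_2^H\ge \frac{\text{sech}^4(\beta_0\gamma+|B_0|)}{1+\gamma^2}nT_n(\boldx)$ from \cite{Ghosal:2020}, and sandwiching the quadratic form spectrally. No gaps; the argument matches the paper step for step.
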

		
		\begin{lemma}
			\label{remainder}
			For $R_{1n}$ and $R_{2n}$ as in \eqref{e:Cn-def} and \eqref{e:Dn-def} respectively, let
			\begin{align}
				\label{e:third-order}
				&3 R_n(\tilde{\theta},\theta-\theta_0 \mid \boldx)\\
				& =   (\beta-\beta_0)(\theta -\theta_0)^{\top} R_{1n}(\tilde{\theta} \mid \boldx) (\theta - \theta_0) + (B - B_0)(\theta - \theta_0)^{\top} R_{2n}(\tilde{\theta} \mid  \boldx) (\theta - \theta_0)
			\end{align}
			where $R_n=R_n(\tilde{\theta},\theta-\theta_0 \mid \boldx)$ and $\tilde{\theta}=\theta_0+c(\theta-\theta_0)$ $0<c<1$. 
			
			\noindent Then, as $n \to \infty $ for some $C_1,C_2>0$ we have
			\begin{align*}
				\mathbb{P}_0^{(n)}  (  M_1 n \Delta^*  \leq R_n \leq M_2 n \Delta^*) \to 1, \
			\end{align*}
			where $\Delta^*= ( (\beta-\beta_0)\gamma + (B-B_0) )   \lVert\theta_0 - \theta\rVert_2^2$
			
			\begin{proof}
				For some $M_1,M_2>0$, let $\mathcal{A}_{1n}=\{\boldx: T_n(\boldx) \leq M_1\}$, $\mathcal{A}_{2n}=\{\boldx: T_n(\boldx)\geq M_2\}$.
				
				\noindent Let $\mathcal{A}_n =\mathcal{A}_{1n} \cap \mathcal{A}_{2n}$, then  $\mathbb{P}_0^{(n)}(\mathcal{A}_n) \to 1$. 
				
				\noindent This is because by Lemma \ref{lem:}, there exists $M_1,M_2>0$ such that
				\begin{align*}
					\mathbb{P}_0^{(n)}(T_n>M_1)&=\mathbb{P}_0^{(n)}(1/T_n<1/M_1) \to 0\\
					\mathbb{P}_0^{(n)}(T_n<M_2)&=\mathbb{P}_0^{(n)}(1/T_n>1/M_2) \to 0
				\end{align*}
				The remaining part of the proof works with only $\boldx \in \mathcal{A}_n$. 
				
				\noindent The determinant of $R_{1n}(\tilde{\theta} \mid  \boldx)$ is:
				\begin{align*}
					&det(R_{1n}(\tilde{\theta} \mid  \boldx))\\ &= \frac{1}{2} \sum_{i,j=1}^n m_i(\boldx) m_j(\boldx)  \left( h_i(\boldx) - h_i^3(\boldx) \right) \left( h_j(\boldx)- h_j^3(\boldx) \right)\left( m_i(\boldx) -  m_j(\boldx)\right)^2
				\end{align*}
				where $h_i(\boldx) = \tanh( \tilde{\beta} m_i(\boldx) + \tilde{B})$. Since $\tanh(\cdot) - \tanh^3(\cdot)$ has maximum value $0.38$ at $\sqrt{3}/{3}$ and $m_i(\boldx) \leq \gamma$ by Assumption \ref{assumption_rs}. Therefore, 
				\begin{align*}
					|  det(R_{1n}(\tilde{\theta}\mid \boldx))| &\leq \frac{1}{2} \gamma^2 (0.38)^2 \sum_{i=1}^n \sum_{j=1}^n \left( m_i(\boldx) - m_j(\boldx) \right)^2  = \gamma^2 (0.38)^2 n^2 T_n(\boldx).
				\end{align*}
				The trace of $R_{1n}(\tilde{\theta}\mid \boldx)$ is:
				\begin{align*}
					tr(R_{1n}(\tilde{\theta}\mid \boldx)) &= \sum_{i=1}^n m_i(\boldx) (h_i(\boldx) - h_i^3(\boldx) )\left(m_i^2(\boldx) + 1\right) \leq n 0.38 \gamma (1 + \gamma^2).
				\end{align*}
				Let $e_1^{R_{1n}} \geq e_2^{R_{1n}}$ be eigenvalues of $R_{1n}(\tilde{\theta}\mid \boldx)$.
				\begin{align*}
					e_2^{R_{1n}} &\geq \frac{e_1^{R_{1n}}e_2^{R_{1n}}}{e_1^{R_{1n}} + e_2^{R_{1n}}} = \frac{det(R_{1n}(\tilde{\theta}\mid \boldx))}{tr (R_{1n}(\tilde{\theta}\mid \boldx))} &\geq -\frac{\gamma^2 (0.38)^2 n^2 T_n(\boldx)}{n 0.38 \gamma (1 + \gamma^2)} = -\frac{0.38\gamma}{ 1 + \gamma^2} nT_n(\boldx).
				\end{align*}
				Therefore,
				\begin{align}
					\label{Cn_lower_b}
					(\theta - \theta_0)^{\top} R_{1n}(\tilde{\theta}\mid \boldx) (\theta - \theta_0) 
					&\geq e_2^{R_{1n}}||\theta-\theta_0||_2^2\geq -\frac{0.38\gamma}{ 1 + \gamma^2} nT_n(\boldx)||\theta-\theta_0||_2^2
				\end{align}
				and
				\begin{align}
					\label{Cn_upper_b}
					(\theta - \theta_0)^\top R_{1n}(\tilde{\theta}\mid \boldx) (\theta - \theta_0) 
					&\leq e_1^{R_{1n}}||\theta-\theta_0||_2^2 = (tr(R_{1n}(\tilde{\theta}\mid \boldx)) - e_2^{R_{1n}}))||\theta-\theta_0||_2^2 \nonumber\\
					&\leq 0.38\gamma n \left( (1+\gamma^2) +  \frac{T_n(\boldx)}{ 1 + \gamma^2}\right) ||\theta-\theta_0||_2^2.
				\end{align}
				With the same argument, we can get:
				\begin{align}
					\label{Dn_lower_b}
					(\theta - \theta_0)^{\top} R_{2n}(\tilde{\theta}\mid \boldx) (\theta - \theta_0) 
					&\geq -\frac{0.38}{ 1 + \gamma^2} nT_n(\boldx)||\theta-\theta_0||_2^2,
				\end{align}
				\begin{align}
					\label{Dn_upper_b}
					(\theta - \theta_0)^t R_{2n}(\tilde{\theta}\mid \boldx) (\theta - \theta_0) 
					&\leq 0.38 n \left( (1+\gamma^2) +  \frac{T_n(\boldx)}{ 1 + \gamma^2}\right) ||\theta-\theta_0||_2^2
				\end{align}
				Using \eqref{Cn_lower_b}, \eqref{Cn_upper_b}, \eqref{Dn_lower_b} and \eqref{Dn_upper_b} and noting  $M_2\leq T_n(\boldx)\leq M_1$ for every $\boldx \in \mathcal{A}_n$, the proof follows.
			\end{proof}
		\end{lemma}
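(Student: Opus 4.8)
The plan is to reduce the claimed two-sided bound to an (essentially) deterministic estimate. The crucial observation is that the weighted combination $3R_n=(\beta-\beta_0)(\theta-\theta_0)^\top R_{1n}(\tilde\theta\mid\boldx)(\theta-\theta_0)+(B-B_0)(\theta-\theta_0)^\top R_{2n}(\tilde\theta\mid\boldx)(\theta-\theta_0)$ telescopes into a single sum of cubes --- which is exactly what one expects of the third-order term of a Taylor expansion. One could instead imitate the eigenvalue argument used for Lemma~\ref{hessian}, restricting to $\mathcal{A}_n=\{\boldx:M_2\le T_n(\boldx)\le M_1\}$ (which has $\mathbb{P}_0^{(n)}$-probability tending to $1$ by Lemma~\ref{lem:}) and bounding each quadratic form by the extreme eigenvalues of $R_{1n}(\tilde\theta\mid\boldx)$, $R_{2n}(\tilde\theta\mid\boldx)$ via their traces and determinants; that route works but is longer, and the telescoping identity makes the factor $\gamma$ in $\Delta^*$ transparent.

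Writing $g_i=h_i(\tilde\theta\mid\boldx)-h_i^3(\tilde\theta\mid\boldx)=\tanh(\tilde\beta m_i(\boldx)+\tilde B)-\tanh^3(\tilde\beta m_i(\boldx)+\tilde B)$, a direct expansion of each quadratic form in the coordinates $\beta-\beta_0$ and $B-B_0$ gives $(\theta-\theta_0)^\top R_{1n}(\tilde\theta\mid\boldx)(\theta-\theta_0)=\sum_{i=1}^n g_i\,m_i(\boldx)\big((\beta-\beta_0)m_i(\boldx)+(B-B_0)\big)^2$ and $(\theta-\theta_0)^\top R_{2n}(\tilde\theta\mid\boldx)(\theta-\theta_0)=\sum_{i=1}^n g_i\big((\beta-\beta_0)m_i(\boldx)+(B-B_0)\big)^2$, so $3R_n=\sum_{i=1}^n g_i\big((\beta-\beta_0)m_i(\boldx)+(B-B_0)\big)^3$. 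Then I would use the two estimates already on hand: $|g_i|\le\sup_{u\in[-1,1]}|u-u^3|<0.4$ for every value of $\tilde\theta$ (since $\tanh\in[-1,1]$), and $|m_i(\boldx)|\le\gamma$ by Assumption~\ref{assumption_rs}. These yield $\max_i\big|(\beta-\beta_0)m_i(\boldx)+(B-B_0)\big|\le|\beta-\beta_0|\gamma+|B-B_0|$ and $\sum_i\big((\beta-\beta_0)m_i(\boldx)+(B-B_0)\big)^2\le n\big(|\beta-\beta_0|\gamma+|B-B_0|\big)^2$, hence $|R_n|\le C\,n\big(|\beta-\beta_0|\gamma+|B-B_0|\big)^3\le C'\,n\big((\beta-\beta_0)\gamma+(B-B_0)\big)\|\theta-\theta_0\|_2^2=C'\,n\,\Delta^*$ for constants $C,C'$ depending only on $\gamma$, the last step using $\big(|\beta-\beta_0|\gamma+|B-B_0|\big)^2\le 2\max(\gamma^2,1)\|\theta-\theta_0\|_2^2$ (and reading the bound with $|\Delta^*|$ if $\Delta^*$ is negative over the range of $\theta$, which is harmless on the shrinking neighbourhoods of $\theta_0$ relevant to Theorem~\ref{thm:post-convergence}).

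Because the displayed estimate is deterministic, the probability statement of the lemma holds with $M_1=-C'$, $M_2=C'$ (indeed with probability $1$); if one prefers the phrasing of Lemma~\ref{hessian}, the eigenvalue route on $\mathcal{A}_n$ gives the same conclusion with $\mathbb{P}_0^{(n)}(\mathcal{A}_n)\to1$. In both cases the content is that the cubic Taylor remainder $R_n$ is of order $n$ times a genuine third-order quantity in $\theta-\theta_0$, which is all that the expansion of $\log L(\theta)$ needs.

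I expect the only real obstacle to be spotting the telescoping identity: $R_{1n}(\tilde\theta\mid\boldx)$ and $R_{2n}(\tilde\theta\mid\boldx)$ are individually indefinite --- $m_i(\boldx)g_i$ changes sign as $\tilde\beta m_i(\boldx)+\tilde B$ does --- so a per-matrix eigenvalue analysis in the style of Lemma~\ref{hessian} must contend with a negative smaller eigenvalue and cannot simply appeal to positive-definiteness. Once $3R_n$ has been written as $\sum_i g_i\big((\beta-\beta_0)m_i(\boldx)+(B-B_0)\big)^3$, everything reduces to $|g_i|\le\text{const}$ and $|m_i(\boldx)|\le\gamma$, and the remaining bookkeeping --- converting the extra power of $m_i(\boldx)$ carried by $R_{1n}$ into the single factor $\gamma$ in $\Delta^*$, and tracking signs through the recombination --- is routine.
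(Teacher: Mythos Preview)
Your proof is correct and takes a genuinely different route from the paper. The paper works on the high-probability event $\mathcal{A}_n=\{M_2\le T_n(\boldx)\le M_1\}$ and bounds each quadratic form $(\theta-\theta_0)^\top R_{kn}(\tilde\theta\mid\boldx)(\theta-\theta_0)$ separately via the eigenvalues of $R_{kn}$: it computes $\det(R_{1n})$ and $\mathrm{tr}(R_{1n})$, uses $e_2\ge \det/\mathrm{tr}$ to control the smaller eigenvalue from below (this is where $T_n(\boldx)$ enters, and why the event $\mathcal{A}_n$ is needed), and then combines the four bounds. Your telescoping identity $3R_n=\sum_i g_i\big((\beta-\beta_0)m_i(\boldx)+(B-B_0)\big)^3$ bypasses all of this: once the two quadratic forms are recombined, only the elementary facts $|g_i|\le 0.39$ and $|m_i(\boldx)|\le\gamma$ are required, and the resulting bound is deterministic rather than merely holding on $\mathcal{A}_n$. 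What the paper's approach buys is explicit constants in terms of $T_n(\boldx)$, which it later reuses (e.g.\ the lower bounds \eqref{Cn_lower_b}, \eqref{Dn_lower_b} feed into \eqref{e:fourth-order}); your approach buys a shorter, probability-free argument and makes the appearance of the factor $\gamma$ in $\Delta^*$ immediate. Both deliver the same content: $|R_n|$ is at most a constant times $n$ and a cubic quantity in $\theta-\theta_0$.
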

		
		\begin{lemma}
			\label{q_p_relation}
			Let $q(\theta) \in \mathcal{Q}^{MF}$ with $\mu_1 = \log\beta_0$, $\mu_2 = B_0$, and $\sigma_1^2 = \sigma_2^2 = 1/n$, then 
			\begin{align*}
				\frac{1}{n\epsilon_n^2 \log n}{\rm KL}(Q,P)\to 0, \:\: n\to \infty
			\end{align*}
		\end{lemma}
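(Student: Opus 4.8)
The plan is to exploit the product structure of both $Q$ and the prior $P$, together with the invariance of the Kullback--Leibler divergence under a common smooth reparametrization of the parameter, so as to reduce the claim to a pair of elementary Gaussian KL computations.

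First I would note that $P=P_\beta\otimes P_B$ by \eqref{e:prior} and $Q=Q_\beta\otimes Q_B$ by the definition of $\mathcal{Q}^{\bf MF}$ in \eqref{mf:family}; hence ${\rm KL}(Q,P)={\rm KL}(Q_\beta,P_\beta)+{\rm KL}(Q_B,P_B)$. Next, since $p_\beta$ is exactly the push-forward of the standard normal density under the map $t\mapsto e^{t}$, and $q_\beta$ is the push-forward of $N(\log\beta_0,1/n)$ under the same map, the change-of-variables invariance of KL gives ${\rm KL}(Q_\beta,P_\beta)={\rm KL}\big(N(\log\beta_0,1/n),N(0,1)\big)$. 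The $B$-coordinate needs no reparametrization: ${\rm KL}(Q_B,P_B)={\rm KL}\big(N(B_0,1/n),N(0,1)\big)$.

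Then I would substitute into the closed form for the KL divergence between univariate Gaussians,
$${\rm KL}\big(N(\mu,\tau^2),N(0,1)\big)=\tfrac12\big(\mu^2+\tau^2-1-\log\tau^2\big),$$
taking $(\mu,\tau^2)=(\log\beta_0,1/n)$ and $(\mu,\tau^2)=(B_0,1/n)$ respectively. Summing the two terms yields
$${\rm KL}(Q,P)=\tfrac12\big((\log\beta_0)^2+B_0^2\big)+\tfrac1n-1+\log n\le C_{\theta_0}+\log n,$$
where $C_{\theta_0}$ depends only on the fixed true parameter; in particular ${\rm KL}(Q,P)=O(\log n)$.

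Finally, dividing by $n\epsilon_n^2\log n$ gives $\dfrac{{\rm KL}(Q,P)}{n\epsilon_n^2\log n}=O\!\big(1/(n\epsilon_n^2)\big)$, which tends to $0$ because Assumption \ref{assumption_mf} forces $n\epsilon_n^2\to\infty$. There is essentially no real obstacle here; the only points requiring care are the direction of the KL divergence (variational distribution first, prior second, so the computation stays finite) and the use of reparametrization invariance, which turns the awkward log-normal-versus-log-normal comparison for the $\beta$-coordinate into a plain Gaussian one.
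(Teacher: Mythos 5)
Your proposal is correct and follows essentially the same route as the paper: split ${\rm KL}(Q,P)={\rm KL}(Q_\beta,P_\beta)+{\rm KL}(Q_B,P_B)$ using the product structure, evaluate each term by the closed-form Gaussian KL with $(\mu,\tau^2)=(\log\beta_0,1/n)$ and $(B_0,1/n)$, obtaining a quantity of order $\log n$, and conclude since $n\epsilon_n^2\to\infty$. The only cosmetic difference is that you spell out the reparametrization invariance turning the lognormal--lognormal comparison into a Gaussian one, which the paper uses implicitly.
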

		
		\begin{proof}
			Using the same notation in \eqref{mf:family}, the first term is:
			\begin{align*}
				{\rm KL}(Q,P) &= \E_{q_\beta(\beta)q_B(B)}\left( \log q_\beta(\beta) + \log q_B(B) - \log p_\beta(\beta) - \log p_B(B) \right) \\
				&= {\rm KL}\left(Q_\beta, P_\beta\right) + {\rm KL}\left(Q_B, P_B\right)\\
				&= \frac{1}{2}\left((\log \beta_0)^2 + \frac{1}{n} + B_0^2  + \frac{1}{n} - 2\right) + \log n =o(n\epsilon_n^2 \log n ),\:\:\:\: \text{since $n\epsilon_n^2 \to \infty$}
		\end{align*}\end{proof}
		
		\subsection{Technical details of Lemma \ref{lem:E1}}
		
		\begin{proof} [Proof of Lemma \ref{lem:E1}] Let $\mathcal{V}_{\epsilon_n}=\{|\beta-\beta_0|<\epsilon_n,|B-B_0|<\epsilon_n\}$. Then $\mathcal{V}_{\epsilon_n}\subseteq \mathcal{U}_{\sqrt{2}\epsilon_n}$ which implies $\mathcal{U}_{\sqrt{2}\epsilon_n}^c \subseteq
			\mathcal{V}_{\epsilon_n}^c$ which further implies
			\begin{equation}
				\label{e:u-v}
				\log  \int_{\mathcal{U}_{\sqrt{2} \epsilon_n}^c} \frac{L(\theta)}{L(\theta_0)}p(\theta)d\theta \leq \log\int_{\mathcal{V}_{ \epsilon_n}^c} \frac{L(\theta)}{L(\theta_0)}p(\theta)d\theta 
			\end{equation}
			We shall now establish for some $C_0>0$
			
			\begin{align*}
				\mathbb{P}_0^{(n)}\left(\log \int_{\mathcal{V}_{ \epsilon_n}^c} \frac{L(\theta)}{L(\theta_0)}p(\theta)d\theta \leq -C_0
				n\epsilon_n^2\right) &\to 1, \:\: n \to \infty,
			\end{align*}
			which in lieu of \eqref{e:u-v} completes the proof. 
			
			\noindent Define $\mathcal{A}_{1n}=\{\boldx: W_{1n}(\theta_0\mid \boldx)^2+W_{1n}(\theta_0\mid \boldx)^2 \leq n^{2/3}\}$ and $\mathcal{A}_{2n}=\{\boldx: T_n(\boldx)\geq M\}$ for some $M>0$. Define $\mathcal{A}_n =\mathcal{A}_{1n} \cap \mathcal{A}_{2n}$. 
			
			\noindent Here $\mathbb{P}_0^{(n)}(\mathcal{A}_n) \to 1$. This because by Markov's inequality and Lemma \ref{qn:rn:bounded},
			\begin{align*}
				&\mathbb{P}_0^{(n)}\left(W_{1n}(\theta_0\mid \boldx)^2 + W_{2n}(\theta_0\mid \boldx)^2 > n^{2/3}\varepsilon\right)\\
				&\leq \frac{1}{n^{4/3}}\E_0^{(n)} \left(W_{1n}(\theta_0\mid \boldx)^2 + W_{2n}(\theta_0\mid \boldx)^2  \right)  \to 0.
			\end{align*}
			and by Lemma \ref{lem:} $\mathbb{P}_0^{(n)}(T_n<M)=\mathbb{P}_0^{(n)}(1/T_n>1/M) \to 0$.
			
			\noindent We shall show for $\boldx \in \mathcal{A}_n$, $L(\theta)/L(\theta_0)\leq e^{-C_0n\epsilon_n^2}$, $\forall$ $\theta\in \mathcal{V}_{\epsilon_n}^c$ which implies $\forall $ $\boldx \in \mathcal{A}_n$,
			
			\begin{align}
				\label{e:v-upper}
				\nonumber \log \int_{\mathcal{V}_{\epsilon_n}^c} (L(\theta)/L(\theta_0))p(\theta)d\theta &=
				\log \int_{\mathcal{V}_{\epsilon_n}^c} (L(\theta)/L(\theta_0))p(\theta)d\theta\\
				&\leq \log (e^{-C_0n\epsilon_n^2}\int_{\mathcal{V}_{\epsilon_n}^c} p(\theta)d\theta)\leq -C_0n\epsilon_n^2
			\end{align}
			since $p(\mathcal{V}_{\epsilon_n^c})\leq 1$. This completes the proof since $\mathbb{P}_0^{(n)}(\mathcal{A}_n) \to 1$ as $n\to \infty$.
			
			Next, note that $\mathcal{V}_{\epsilon_n}^c$ is given by the union of the following terms\begin{eqnarray*}
				V_{1n}&=\{(\beta,B): \beta-\beta_0\geq \epsilon_n , B\geq B_0\}, 
				V_{2n}=\{(\beta,B): \beta-\beta_0\geq \epsilon_n , B < B_0\}\\ 
				V_{3n}&=\{  (\beta,B): \beta-\beta_0 <  -\epsilon_n , B \geq B_0\},
				V_{4n}=\{  (\beta,B): \beta-\beta_0 <  -\epsilon_n , B < B_0\} \\
				V_{5n}&=\{  (\beta,B): \beta\geq \beta_0, B -B_0 \geq \epsilon_n \},
				V_{6n}=\{  (\beta,B): \beta < \beta_0, B -B_0 \geq \epsilon_n \} \\
				V_{7n}&=\{  (\beta,B): \beta\geq \beta_0, B -B_0 < -\epsilon_n \},
				V_{8n}=\{  (\beta,B): \beta < \beta_0, B -B_0 < -\epsilon_n \} 
			\end{eqnarray*}
			We shall now show for $\boldx \in \mathcal{A}_n$ and $\theta \in V_{1n}$, $L(\theta)/L(\theta_0)\leq e^{-C_0n\epsilon_n^2}$. The proof of other parts follow similarly. 
			
			\noindent (a) Let $\theta = (\beta, B)$ and $\theta_0' = (\beta_0+\epsilon,B_0)$, where $\beta\geq \beta_0+\epsilon$ and $B\geq B_0$. Also, define $$\theta_t=\theta_0'+t(\theta-\theta_0')~ \text{where}~ 0<t<1.$$
			Consider a function $g$:
			$$g(t)=f(\theta_t)=\log L(\theta_t)-\log L(\theta_0')-\Delta_n(\theta_0')^\top (\theta_t-\theta_0'),$$
			where $\Delta_n(\theta) = \left(\nabla_\beta \log L(\theta), \nabla_B \log L(\theta)  \right)^\top$. Note that $g(t)$ is a function of $t$. We want to show $g(t)\leq g(0)$ provided $t>0$. We shall instead show $g'(t)\leq 0$. By Taylor expansion,
			$$g'(t)=g'(0)+ g''(\tilde{t}) t.$$ 
			for some  $\tilde{t}\in[0,t]$. Here, $g'(0)=0$ and $g''(\tilde{t})=-(\theta-\theta_0')^\top H_n(\theta_{\tilde{t}}\mid \boldx) (\theta-\theta_0') \leq 0$
			where $H_n$ as in \eqref{e:second-order} is a positive definite matrix (by \eqref{e:h-pos} in \ref{hessian} and $T_n(\boldx)\geq 0$).  Since $g(t)$ is decreasing for $0<t<1$, thus $$g(1)\leq g(0) \implies f(\theta) \leq  f(\theta_0')$$
			
			\noindent(b) Similarly, let $\theta=(\beta,B)$ $\theta_0''=(\beta_0,B_0+\epsilon)$, where $\beta\geq \beta_0$ and $B\geq B_0+\epsilon$. Define $$\theta_t=\theta_0'+t(\theta-\theta_0'')~ \text{where}~ 0<t<1.$$
			$$h(t)=f(\theta_t)=\log L(\theta_t)-\log L(\theta_0'')-\Delta_n(\theta_0'')^\top(\theta_t-\theta_0'').$$
			With similar argument in (a), we conclude that
			$h(1)\leq h(0) \implies f(\theta) \leq f(\theta_0'')$. Therefore,
			\begin{align}
				\nonumber&\sup_{\theta \in V_{1n}} (\log L(\theta)-\log L(\theta_0))\\
				\nonumber&\leq \sup_{ \{\beta -\beta_0\in [\epsilon_n,\epsilon], B\geq B_0\}}
				(\log L(\theta)-\log L(\theta_0))+\sup_{ \{\beta > \beta_0+\epsilon, B\geq B_0\}}
				(\log L(\theta)-\log L(\theta_0))\\
				\nonumber&\leq \sup_{ \{\beta -\beta_0\in [\epsilon_n,\epsilon], B\geq B_0\}}
				(\log L(\theta)-\log L(\theta_0))+(\log L(\theta_0')-\log L(\theta_0))\\
				\nonumber &\leq \sup_{ \{\beta -\beta_0\in [\epsilon_n,\epsilon], B-B_0 \in [0,\epsilon]\}}
				(\log L(\theta)-\log L(\theta_0))+\sup_{ \{\beta-\beta_0\in[\epsilon_n,\epsilon], B >B_0+\epsilon\}}(\log L(\theta)-\log L(\theta_0))\\
				\nonumber &+\log L(\theta_0')-\log L(\theta_0))\\
				\nonumber &\leq \sup_{ \{\beta -\beta_0\in [\epsilon_n,\epsilon], B-B_0 \in [0,\epsilon]\}}
				(\log L(\theta)-\log L(\theta_0))+\sup_{ \{\beta\geq \beta_0, B >B_0+\epsilon\}}(\log L(\theta)-\log L(\theta_0))\\
				\nonumber&+\log L(\theta_0')-\log L(\theta_0))\\
				\nonumber&\leq \sup_{ \{\beta -\beta_0\in [\epsilon_n,\epsilon], B-B_0 \in [0,\epsilon]\}}(\log L(\theta)-\log L(\theta_0))+\log L(\theta_0'')-\log L(\theta_0)\\
				\nonumber&+\log L(\theta_0')-\log L(\theta_0)\\
				\label{e:upp-lnl0}&\leq  \sup_{ \{\beta -\beta_0\in [\epsilon_n,\epsilon], B-B_0 \in [0,\epsilon]\}}3(\log L(\theta)-\log L(\theta_0))\leq -C_0n\epsilon_n^2
			\end{align}
			where the second inequality follows from (a) and fifth inequality follows from (b) above.  Finally for the last inequality, consider Taylor expansion for $\log L(\theta)$ upto the second order
			\begin{align*}
				\log L(\theta) - \log L(\theta_0) &=W_n(\theta_0\mid \boldx)^\top (\theta-\theta_0)-\frac{1}{2}(\theta - \theta_0)^\top H_n(\tilde{\theta}\mid \boldx) (\theta - \theta_0)
			\end{align*}
			where $\tilde{\theta}=\theta_0+c(\theta-\theta_0)$, $0<c<1$ and $W_n$ and $H_n$ are as defined in \eqref{e:first-order} and \eqref{e:second-order} respectively.
			
			\noindent By Cauchy Schwarz inequality,
			\begin{align*}
				|W_n(\theta_0\mid \boldx)^\top (\theta-\theta_0))|&\leq  (\left(W_{1n}(\theta_0\mid \boldx)^2 + W_{2n}(\theta_0\mid \boldx)^2\right) \lVert\theta - \theta_0\rVert_2^2 +1)\\
				&\leq n^{2/3}  \lVert\theta - \theta_0\rVert_2^2+1
			\end{align*}
			for every $\boldx \in \mathcal{A}_n$. Further \begin{align*}
				\log L(\theta) - \log L(\theta_0)  &\leq  n^{2/3}\lVert\theta - \theta_0\rVert_2^2+1 -\frac{1}{2}(\theta - \theta_0)^\top H_n(\tilde{\theta}\mid \boldx) (\theta - \theta_0)\\
				&\leq n^{2/3}\lVert\theta - \theta_0\rVert_2^2+1 - \frac{\text{sech}^4(\tilde{\beta} \gamma + |\tilde{B}|)}{1+\gamma^2}nT_n(\boldx) \lVert\theta - \theta_0\rVert_2^2 \\
				&\leq \left(n^{2/3}+\frac{1}{||\theta-\theta_0||_2^2}- \frac{\text{sech}^4(\tilde{\beta} \gamma + |\tilde{B}|)}{1+\gamma^2}\frac{n}{M}  \right) \lVert\theta - \theta_0\rVert_2^2 
			\end{align*}
			where the second inequality is a consequence of the lower bound \eqref{eigen_lowerb} and the third inequality holds since $\boldx \in \mathcal{A}_n$. Taking $\sup$ over the set $\{\beta -\beta_0\in [\epsilon_n,\epsilon], B-B_0 \in [0,\epsilon]\}$ on both sides, 
			\begin{align}
				\label{e:ll-up}
				\nonumber   &\sup_{ \{\beta -\beta_0\in [\epsilon_n,\epsilon], B-B_0 \in [0,\epsilon]\}}(\log L(\theta)-\log L(\theta_0))\\
				\nonumber   &\leq \sup_{ \{\beta -\beta_0\in [\epsilon_n,\epsilon], B-B_0 \in [0,\epsilon]\}} \left(n^{2/3}+\frac{1}{\epsilon_n^2}- \frac{\text{sech}^4((\beta_0+\epsilon) \gamma + (B_0+\epsilon))}{1+\gamma^2}\frac{n}{M}  \right) \lVert\theta - \theta_0\rVert_2^2 \\
				&\leq -C_0n\epsilon_n^2
			\end{align} 
			for some $C_0>0$ as $n \to \infty$ since $n^{2/3}$ and $1/\epsilon_n^2=o(n)$.This completes the proof.
		\end{proof}
		
		\subsection{Technical details of Lemma \ref{lem:E2}}
		
		\begin{lemma}
			\label{dif_true_true}
			Let $L_0$ and $L(\theta_0)$ represent the true likelihood \eqref{likel} and the pseudo-likelihood \eqref{pseudo} with the true parameters, respectively. Then,
			\begin{align*}
				\frac{1}{n\epsilon_n^2}    \E_0^{(n)}(\log L_0-\log L(\theta_0))\to 0,\:\: n \to \infty
			\end{align*}
		\end{lemma}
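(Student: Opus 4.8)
\noindent The plan is to make the large $O(n)$ pieces cancel explicitly and then show that what survives is of order $\sum_{i,j}A_n(i,j)^2=o(n\epsilon_n^2)$. Write $u_i(\boldx)=\beta_0 m_i(\boldx)+B_0$ and $r_i=\sum_{j=1}^n A_n(i,j)$, so that $|m_i(\boldx)|\le r_i\le\gamma$ by Assumption~\ref{assumption_rs} and $\boldx^\top A_n\boldx=\sum_i x_i m_i(\boldx)$. From \eqref{likel} and \eqref{pseudo},
\begin{align*}
\log L_0-\log L(\theta_0)=-\tfrac{\beta_0}{2}\,\boldx^\top A_n\boldx+\sum_{i=1}^n\log\cosh u_i(\boldx)+n\log 2-\log Z_n(\beta_0,B_0).
\end{align*}

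First I would remove $\log Z_n$ by interpolating in the interaction strength: differentiating $t\mapsto\log Z_n(t\beta_0,B_0)$ gives $\log Z_n(\beta_0,B_0)=n\log(2\cosh B_0)+\tfrac{\beta_0}{2}\int_0^1\E_{t\beta_0,B_0}[\boldX^\top A_n\boldX]\,dt$, where $\E_{t\beta_0,B_0}$ is expectation under \eqref{likel} at $(t\beta_0,B_0)$, and likewise $\log\cosh u_i(\boldx)=\log\cosh B_0+\beta_0\int_0^1 m_i(\boldx)\tanh(t\beta_0 m_i(\boldx)+B_0)\,dt$. Substituting, the $n\log 2$ and the two $n\log\cosh B_0$ terms cancel; and since $m_i$ depends on $\boldx_{-i}$ only, $\E_\theta[\boldX^\top A_n\boldX]=\E_\theta[\sum_i m_i(\boldX)\tanh(\theta_1 m_i(\boldX)+\theta_2)]$ for $\theta=(\theta_1,\theta_2)$. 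Hence, with $g_t(\boldx):=\sum_i m_i(\boldx)\tanh(t\beta_0 m_i(\boldx)+B_0)$,
\begin{align*}
D:=\E_0^{(n)}[\log L_0-\log L(\theta_0)]=\beta_0\int_0^1\Big(\E_0^{(n)}[g_t]-\tfrac12\E_{t\beta_0,B_0}[g_t]-\tfrac12\E_0^{(n)}[g_1]\Big)\,dt .
\end{align*}

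Crude bounds ($|m_i|\le r_i$, $|\tanh|\le1$) give only $|D|\le 2\beta_0\sum_{i,j}A_n(i,j)=O(n\epsilon_n^2)$ by Assumption~\ref{assumption_mf}(i), which misses the claim by a constant; the real content is a cancellation improving this to $O(\sum_{i,j}A_n(i,j)^2)=o(n\epsilon_n^2)$. To expose it I would interpolate a second time in the coupling: replace $A_n$ by $\lambda A_n$, let $D(\lambda)$ denote the resulting quantity, note $D(0)=0$, and differentiate using $\frac{d}{d\lambda}\E_\lambda[f]=\tfrac{\beta_0}{2}\mathrm{Cov}_\lambda(f,\boldX^\top A_n\boldX)$; a short computation gives $D'(0)=0$, so $D$ is genuinely second order in $A_n$. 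Expanding $\tanh(t\beta_0 m_i+B_0)$ about $B_0$ and evaluating the leading ($\lambda=0$, product-measure) expectations via $\sum_i m_i^2=\sum_{i,j}A_n(i,j)^2+\sum_i\sum_{j\ne k}A_n(i,j)A_n(i,k)X_jX_k$, one finds that the $O(n\epsilon_n^2)$-sized contributions to $D$ — all proportional to the row-sum-squares $\sum_i r_i^2$, coming from the $\tanh^2(B_0)r_i^2$ part of $\E[m_i^2]$ and from $\mathrm{Cov}\big(\tanh(B_0)\sum_j r_jX_j,\,\boldX^\top A_n\boldX\big)$ — cancel exactly across the three pieces, leaving $D=O\big(\text{sech}^4(B_0)\sum_{i,j}A_n(i,j)^2\big)$ up to higher-order remainders. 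Assumption~\ref{assumption_mf}(ii) then gives $D=o(n\epsilon_n^2)$, i.e.\ $D/(n\epsilon_n^2)\to0$.

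I expect the main obstacle to be the bookkeeping of this cancellation: every term of magnitude $\asymp\sum_i r_i^2$ — which is $O(n\epsilon_n^2)$ but \emph{not} $o(n\epsilon_n^2)$ — produced by the linear part of the $\tanh$-expansion and by the off-diagonal part of $\sum_i m_i^2$ must be matched across the three pieces of $D$, after which the surviving higher-order remainder (most delicately the covariances $\mathrm{Cov}_{t\beta_0,B_0}(g_t,\boldX^\top A_n\boldX)$ under the correlated Ising law) has to be shown to be $O(\sum_{i,j}A_n(i,j)^2)$ as well, using only $|m_i|\le\gamma$ and the sparsity in Assumption~\ref{assumption_mf}. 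A cleaner but still non-trivial alternative is to observe that $D=-H_0(\boldX)+\sum_i H_0(X_i\mid X_{-i})\le0$ (entropy chain rule) and bound $-D\le\sum_i I_0(X_i;X_{N(i)})\le C\sum_i\mathrm{Var}_0(m_i)$, where $\sum_i\mathrm{Var}_0(m_i)=\sum_{i,j,k}A_n(i,j)A_n(i,k)\mathrm{Cov}_0(X_j,X_k)$; its diagonal part is $\le\sum_{i,j}A_n(i,j)^2=o(n\epsilon_n^2)$, while the off-diagonal part again needs a correlation estimate — the same difficulty in a different guise.
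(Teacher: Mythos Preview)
Your approach is genuinely different from the paper's, and while the setup and the integral representation for $D$ are correct, the hard step you flag as the ``main obstacle'' is precisely what the paper's route sidesteps.

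The paper does not interpolate at all. It introduces the mean-field point $b_i(\boldx)=\tanh(\beta_0 m_i(\boldx)+B_0)$ and the entropy functional $I(\boldv)=g(\boldv,\boldv)$, and decomposes
\[
\log L_0-\log L(\theta_0)=\big[f_{\theta_0}(\boldx)-f_{\theta_0}(b(\boldx))\big]+\big[f_{\theta_0}(b(\boldx))-I(b(\boldx))\big]+\big[I(b(\boldx))-g(\boldx,b(\boldx))\big]-\log Z_n(\theta_0).
\]
The first and third brackets are controlled in $L^2(\mathbb{P}_0^{(n)})$ by $o(n\epsilon_n^2)$ via Lemmas~3.2 and~3.3 of \cite{Basak:2017}; those lemmas already contain the correlation control you are trying to carry out by hand. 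The middle bracket together with $-\log Z_n(\theta_0)$ is $\le 0$ by the Gibbs variational inequality $\log Z_n\ge\sup_{\boldv}(f_{\theta_0}(\boldv)-I(\boldv))$ (Theorem~1.6 of \cite{chatterjee2016nonlinear}). No Ising covariance ever needs to be computed explicitly.

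Your route, by contrast, reduces to $D=\beta_0\int_0^1\big(\E_0^{(n)}[g_t]-\tfrac12\E_{t\beta_0,B_0}[g_t]-\tfrac12\E_0^{(n)}[g_1]\big)\,dt$ and then attempts to show this is $O\big(\sum_{i,j}A_n(i,j)^2\big)$ via a second interpolation in the coupling. The heuristic $D'(0)=0$ is plausible, but to conclude you must bound $D''(\lambda)$ uniformly on $[0,1]$, and that bound involves quantities such as $\mathrm{Cov}_{t\beta_0,B_0}(g_t,\boldX^\top A_n\boldX)$ under the fully interacting Ising law --- exactly the correlation-decay estimate you acknowledge not having. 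Your entropy alternative hits the same wall: bounding $\sum_i\mathrm{Var}_0(m_i)$ by $o(n\epsilon_n^2)$ requires that the off-diagonal part $\sum_i\sum_{j\ne k}A_n(i,j)A_n(i,k)\,\mathrm{Cov}_0(X_j,X_k)$ be small, which is again a nontrivial correlation bound under $\mathbb{P}_0^{(n)}$ that you have not supplied.

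In short, the paper outsources the correlation work to the cited mean-field lemmas and the variational characterization of $\log Z_n$, which is both shorter and complete; your proposal would have to reprove a portion of that machinery from scratch, and as written it stops just before that step.
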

		
		\begin{proof}
			\vspace{-6mm}
			$$L_0=\frac{e^{f_{\theta_0}(\boldx)}}{\sum_{\boldx \in \{-1,1\}^n} e^{f_{\theta_0}(\boldx)}}=\frac{e^{f_{\theta_0}(\boldx)}}{Z_n(\theta_0)}$$
			where $f_{\theta_0}(\boldx)=(\beta_0/2)\boldx^\top A_n \boldx+B_0 \boldx^\top1$.
			\noindent Define $b(\boldx;\theta)=(b_1(\boldx; \theta), \cdots, b_n(\boldx; \theta))$ where
			$$b_i(\boldx; \theta)=E(X_i|X_j, j\neq i)=\tanh(\beta m_i(\boldx)+B)$$
			\noindent  Then $L(\theta)=e^{g(\boldx, b(\boldx; \theta))}$
			where the function $g$ for $\boldv,\boldw \in [-1,1]^n$ is defined as
			$$g(\boldv,\boldw)=\sum_{i=1}^n \frac{1+v_i}{2} \log \frac{1+w_i}{2}+\frac{1-v_i}{2}\log \frac{1-w_i}{2}$$
			Also, define $I(\boldv)=g(\boldv,\boldv)$. 
			Now, observe that
			\begin{align}
				\label{e:split-0}
				\nonumber \E_0^{(n)}(\log L_0-\log L(\theta_0))&=\E_0^{(n)}(f_{\theta_0}(\boldx)-g(\boldx,b(\boldx;\theta_0))-\log Z_n(\theta_0))\\
				\nonumber    &\hspace{-20mm}=\E_0^{(n)}(f_{\theta_0}(\boldx)-f_{\theta_0}(b(\boldx;\theta_0))+\E_0^{(n)}(f_{\theta_0}(b(\boldx;\theta_0))-I(b(\boldx;\theta_0)))\\
				\nonumber    &\hspace{-20mm}+\E_0^{(n)}(I(b(\boldx;\theta_0))-g(x,b(\boldx;\theta_0)))-\log Z_n(\theta_0)\\
				\nonumber    & \hspace{-20mm}\leq (\E_0^{(n)}(f_{\theta_0}(\boldx)-f_{\theta_0}(b(\boldx;\theta_0)))^2)^{1/2}+(\E_0^{(n)}(I(b(\boldx;\theta_0))-g(\boldx,b(\boldx;\theta_0)))^2)^{1/2}\\
				&\hspace{-20mm}+\E_0^{(n)}(f_{\theta_0}(b(\boldx;\theta_0))-I(b(\boldx;\theta_0)))-\log Z_n(\theta_0),
			\end{align}
			where the last step is due to Hölder's inequality.
			
			\noindent Under Assumption \ref{assumption_mf}, mimicking the proof of Lemmas 3.2 and 3.3 in \cite{Basak:2017} with $n$ replaced by $n\epsilon_n^2$, we get
			\begin{align}
				\label{e:split-1}
				(\E_0^{(n)}(f_{\theta_0}(\boldx)-f_{\theta_0}(b(\boldx;\theta_0)))^2)^{1/2}=o(n\epsilon_n^2)
			\end{align}
			\begin{align}
				\label{e:split-2}
				(\E_0^{(n)}(I(b(\boldx;\theta_0))-g(\boldx,b(\boldx;\theta_0)))^2)^{1/2}=o(n\epsilon_n^2)
			\end{align}
			Also for  $r_n=\sup_{\boldv\in {[-1,1]}^n} (f_{\theta_0}(\boldv)-I(\boldv))$, we have
			$$\E_0^{(n)}(f_{\theta_0}(b(\boldx,\theta_0))-I(b(\boldx,\theta_0)))\leq r_n$$
			
			\noindent By Theorem 1.6 in \cite{chatterjee2016nonlinear} with the fact  $\partial^2 f_{\theta_0}/\partial x_i^2  = 0$, $i=1, \cdots, n$, we have $ -\log Z_n(\theta_0)\leq -r_n$. Therefore,
			\begin{align}
				\label{e:split-3}
				\E_0^{(n)}(f_{\theta_0}(b(\boldx;\theta_0))-I(b(\boldx;\theta_0))-\log Z_n(\theta_0))\leq 0.
			\end{align}
			Using \eqref{e:split-1}, \eqref{e:split-2} and \eqref{e:split-3} in \eqref{e:split-0} completes the proof.
		\end{proof}
		
		\begin{lemma}
			\label{constant-order}
			Note that $L(\theta)$ is not a valid density function. So, we consider  $\tilde{L}(\theta)=L(\theta)/J_n(\theta)$
			where $J_{n}(\theta) = \sum_{\boldx \in \{-1,1\}^n}L(\theta)$ such that $\sum_{\boldx \in \{-1,1\}^n}\tilde{L}(\theta) = 1$. Then for every  $\theta$, 
			$$J_n(\theta)\leq \beta \epsilon_n\sqrt{n(1+\gamma^2)/2}+o(n\epsilon_n^2)(\log 3\sqrt{2}-\log \epsilon_n).$$
		\end{lemma}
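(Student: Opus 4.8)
The plan is to exploit the representation, already used in the proof of Lemma~\ref{dif_true_true}, that $L(\theta)=e^{g(\boldx,b(\boldx;\theta))}$ with $b_i(\boldx;\theta)=\tanh(\beta m_i(\boldx)+B)$ and $g$ \emph{affine} in its first argument. The key structural fact is that affineness makes $\sum_{\boldx\in\{-1,1\}^n}e^{g(\boldx,\boldw)}=\prod_{i=1}^n\big(\tfrac{1+w_i}{2}+\tfrac{1-w_i}{2}\big)=1$ for \emph{every fixed} $\boldw\in(-1,1)^n$. Hence $J_n(\theta)$ would equal $1$ were it not for the dependence of the local fields $m_i(\boldx)=\sum_j A_n(i,j)x_j$ on the spins, and it is precisely the mean-field Assumption~\ref{assumption_mf} that renders this dependence small. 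Concretely, I would take the reference $\boldw^\ast$ with $w_i^\ast=\tanh B$ (the value of $b(\boldx;\theta)$ when the coupling is turned off), write $Q_0$ for the product measure with $e^{g(\boldx,\boldw^\ast)}=Q_0(\boldx)$ (so $\E_{Q_0}[x_i]=\tanh B$), and decompose
\[
J_n(\theta)=\E_{Q_0}\big[e^{D(\boldx)}\big],\qquad D(\boldx)=\beta\,\boldx^\top A_n\boldx-\sum_{i=1}^n\big(\log\cosh(\beta m_i(\boldx)+B)-\log\cosh B\big).
\]

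Two ingredients then close the argument. First, convexity of $t\mapsto\log\cosh t$ gives $\log\cosh(\beta m_i+B)-\log\cosh B\ge\beta m_i\tanh B$, so that $D(\boldx)\le\beta\,V(\boldx)$ with $V(\boldx)=\sum_i m_i(\boldx)(x_i-\tanh B)=\boldx^\top A_n\boldx-\tanh B\,(A_n\boldone)^\top\boldx$, and a direct computation shows $\E_{Q_0}[V]=0$. Second, the mean-field assumptions control $V$ both deterministically and in mean square: $|\boldx^\top A_n\boldx|$ and $|(A_n\boldone)^\top\boldx|$ are both $\le\sum_{i,j}A_n(i,j)=O(n\epsilon_n^2)$ by Assumption~\ref{assumption_mf}(i), while $\sum_i m_i(\boldx)^2\le\sum_i\big(\sum_j A_n(i,j)\big)^2\le\gamma\sum_{i,j}A_n(i,j)=O(n\epsilon_n^2)$ by Assumptions~\ref{assumption_rs} and~\ref{assumption_mf}(i), with finer cancellations supplied by Assumption~\ref{assumption_mf}(ii) (as in Lemmas~3.2--3.3 of~\cite{Basak:2017}). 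A second-moment estimate for $V$ under $Q_0$, together with the bound $\sum_i(m_i(\boldx)^2+1)\le n(1+\gamma^2)$ — the same quantity that controls $\mathrm{tr}(H_n)$ in the proof of Lemma~\ref{hessian} — then produces, via Cauchy--Schwarz $\E_{Q_0}[|V|]\le(\E_{Q_0}V^2)^{1/2}$, the leading term $\beta\epsilon_n\sqrt{n(1+\gamma^2)/2}$; the residual error, after a covering/discretization step at resolution of order $\epsilon_n$ (which accounts for the factor $\log(3\sqrt2/\epsilon_n)=\log 3\sqrt2-\log\epsilon_n$), is of the smaller order $o(n\epsilon_n^2)$.

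I expect the real difficulty to be obtaining a bound that is \emph{linear} in $\beta$. A naive sub-Gaussian control of $\E_{Q_0}[e^{\beta V}]$ yields something like $e^{c\beta^2 n\epsilon_n^2}$, and even the crude deterministic bound $|V|\le O(n\epsilon_n^2)$ gives $e^{c\beta n\epsilon_n^2}$; neither is usable, because in Lemma~\ref{lem:E2} the quantity $J_n(\theta)$ is integrated against the log-normal prior $p_\beta$, so only an at-most-linear-in-$\beta$ bound has finite integral (indeed $\int_0^\infty\beta\,p_\beta(\beta)\,d\beta=e^{1/2}<\infty$). Securing linearity forces one to resist exponentiating the fluctuation and instead to isolate the vanishing mean, the $O(\sqrt{n\epsilon_n^2})$ mean-square size, and a negligible tail of $V$ separately — this is where the mean-field Assumption~\ref{assumption_mf}, and not merely the uniform bound $|m_i(\boldx)|\le\gamma$ from Assumption~\ref{assumption_rs}, does the essential work; once this scheme is in place, tracking the absolute constants ($3$, $\sqrt2$, the factor $1/2$) is routine.
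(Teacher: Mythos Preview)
Your decomposition $J_n(\theta)=\E_{Q_0}[e^{D(\boldx)}]$ and the convexity bound $D\le \beta V$ are correct, but the plan that follows has a genuine gap. You write that Cauchy--Schwarz, $\E_{Q_0}|V|\le(\E_{Q_0}V^2)^{1/2}$, ``produces the leading term $\beta\epsilon_n\sqrt{n(1+\gamma^2)/2}$''; however what must be bounded is $\log\E_{Q_0}[e^{\beta V}]$, and neither the first nor the second moment of $V$ controls this quantity from above (Jensen goes the wrong way, and sub-Gaussian/Hoeffding bounds give the $\beta^2$ scaling you yourself flag as unusable). Your mention of a covering step ``for the residual'' does not fix this, because in your scheme the main term itself has not been secured --- the second-moment argument never touches the exponential.

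The paper's proof takes a different route that gives linearity in $\beta$ \emph{pointwise}, not in expectation. It observes that the map $m(\boldx)=A_n\boldx$ has effective range of low dimension: with $N_n(\epsilon_n)=\{i:|\lambda_i(A_n)|>\epsilon_n/\sqrt2\}$, Assumption~\ref{assumption_mf}(ii) gives $k_n:=|N_n(\epsilon_n)|=o(n\epsilon_n^2)$, so one can build an $\epsilon_n\sqrt{n(1+\gamma^2)/2}$ net $D_{n,1}$ of $\{A_n\boldx:\boldx\in\{-1,1\}^n\}$ of size at most $(3\sqrt2/\epsilon_n)^{k_n}$. For each $\boldx$ one replaces $b(\boldx;\theta)=\tanh(\beta m(\boldx)+B)$ by the \emph{fixed} profile $u(\boldp)=\tanh(\beta\boldp+B)$ at the nearest net point $\boldp$; the Lipschitz cost $|g(\boldx,b(\boldx;\theta))-g(\boldx,u(\boldp))|$ is bounded by a multiple of $\beta\|m(\boldx)-\boldp\|$, which is the source of the $\beta\epsilon_n\sqrt{n(1+\gamma^2)/2}$ term. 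Crucially, once $u(\boldp)$ is frozen, $\sum_{\boldx}e^{g(\boldx,u(\boldp))}=1$ exactly, so summing over net points leaves only the entropy cost $\log|D_{n,1}|\le k_n(\log 3\sqrt2-\log\epsilon_n)=o(n\epsilon_n^2)(\log 3\sqrt2-\log\epsilon_n)$. Thus both terms in the statement arise from the \emph{same} covering argument --- the first as Lipschitz error, the second as net cardinality --- rather than from a moment calculation plus a separate discretization. To repair your approach you would need to abandon the moment bound on $V$ and instead freeze $m(\boldx)$ via such a net before summing; at that point the reference measure $Q_0$ becomes superfluous.
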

		
		\begin{proof}
			Let $N_n(\epsilon_n):=\{i \in [n]:|\lambda_i(A_n)|>\epsilon_n/\sqrt{2}\}$ and with the mean field condition in the Assumption \ref{assumption_mf}, it is easy to note that
			\begin{align}
				\frac{\left|N_{n}(\epsilon_n)\right|}{n} \leq \frac{2}{n \epsilon_n^{2}} \sum_{i \in[n]} \lambda_{i}\left(A_{n}\right)^{2}=\frac{2}{n \epsilon_n^{2}} \sum_{i, j=1}^{n} A_{n}(i, j)^{2} \rightarrow 0, \:\: n \rightarrow \infty
			\end{align}
			Set $k_n=|N_n(\epsilon_n)|$ and let $D_{n,0}(\epsilon_n)$ be a $\epsilon_n\sqrt{n/2}$ net of the set $\{\boldf \in \mathbb{R}^{k_n}: \sum f_i^2\leq n\}$ of size at most $(3\sqrt{2}/\epsilon_n)^{k_n}$. The existence of such a net is standard (see for example Lemma 2.6 in \cite{milman1986asymptotic}).
			
			\noindent Let $\{\boldp_1, \cdots, \boldp_n\}$ be the eigen vectors of $A_n$. Then setting 
			$$D_{n,1}(\epsilon_n):=\{\sum_{i\in N_n(\epsilon_n)} c_i\lambda_i(A_n)\boldp_i, \boldc \in D_{n,0}(\epsilon_n)\}$$
			We claim $D_{n,1}(\epsilon_n)$ is $\epsilon_n\sqrt{n(1+\gamma^2)/2}$ of the set $\{A_n \boldx: \boldx \in \{-1,1\}^n\}$. 
			Indeed any $\boldx \in \{-1,1\}^n$ can be written as $\sum_{i=1}^n f_i \boldp_i$ where $\sum_{i=1}^n f_i^2=\sum x_i^2 =n$. In particular, it means $\sum_{i \in N_n(\epsilon_n)} f_i\leq n$, which  implies there exists a $\boldc \in D_{n,0}(\epsilon_n)$ such that $||\boldc-\boldf||\leq \epsilon_n\sqrt{n/2}$. 
			
			\noindent Let $\sum_{i \in N_n(\epsilon_n)}c_i\lambda_i(A_n)\boldp_i \in D_{n,1}(\epsilon_n)$, then
			\begin{align*}
				||A_n \boldx-\sum_{i \in N_n(\epsilon_n)}c_i \lambda_i(A_n)\boldp_i||_2^2&=\sum_{i \in N_n(\epsilon_n)} (c_i-f_i)^2 \lambda_i(A_n)^2+\sum_{i \notin N_n(\epsilon_n)} \lambda_i(A_n)^2 f_i^2\\
				&\leq \frac{\gamma^2 n\epsilon_n^2}{2}+\frac{n\epsilon_n^2}{2}
			\end{align*}
			where the last inequality is a consequence of $\max_{i \in [n]}|\lambda_i(A_n)|\leq \max_{i \in [n]} \sum_{j=1}^n|A_n(i,j)|\leq \gamma$ and the definition of the set $N_n(\epsilon_n)$.
			
			In particular for any $\boldx \in \{-1,1\}^n$, there exists at least one $\boldp \in D_{n,1}(\epsilon_n)$ such that $||\boldp-m(\boldx)||\leq \epsilon_n\sqrt{n(1+\gamma^2)/2}$. For any $\boldp \in D_{n,1}(\epsilon_n)$, let
			$$\mathcal{P}(\boldp):=\{\boldx\in \{-1,1\}^n:||\boldp-m(\boldx)||\leq \epsilon_n\sqrt{n(1+\gamma^2)/2} \}$$
			Therefore,
			$$\sum_{\boldx \in \{-1,1\}^n }e^{g(\boldx,b(\boldx;\theta))}=\sum_{\boldp \in D_{n,1}(\epsilon_n)} \sum_{\boldx \in \mathcal{P}(\boldp)}e^{g(\boldx,b(\boldx;\theta))}$$
			Setting $\boldu(\boldp):=\tanh(\beta \boldp+B)$ if $||\boldp-m(\boldx)||\leq \epsilon_n\sqrt{n(1+\gamma^2)/2}$, then we have
			$$|g(\boldx,b(\boldx;\theta))-g(\boldx,\boldu(\boldp))|\leq 2\beta\sum_{i=1}^n |m_i(\boldx)-p_i|\leq 2\beta\epsilon_n\sqrt{n(1+\gamma^2)/2}$$
			Finally,  \begin{align*}\sum_{\boldx \in \{-1,1\}^n }e^{g(\boldx,b(\boldx;\theta))}&\leq e^{\beta \sqrt{n}\epsilon_n}\sum_{\boldp \in D_{n,1}(\epsilon_n)} \sum_{\boldx \in \mathcal{P}(\boldp)}e^{g(\boldx,u(\boldp))}\\
				&\leq e^{\beta \sqrt{n}\epsilon_n}\sum_{\boldp \in D_{n,1}(\epsilon_n)} \sum_{\boldx \in \{-1,1\}^n}e^{g(\boldx,u(\boldp))}=e^{\beta \epsilon_n\sqrt{n(1+\gamma^2)/2}} |D_{n,1}(\epsilon_n)|
			\end{align*}
			where the last equality follows since $\sum_{\boldx \in \{-1,1\}^n} e^{g(x,u)}=1$ for any $u \in [-1,1]^n$. Therefore,
			$$\log J_n(\theta) \leq \beta \epsilon_n\sqrt{n(1+\gamma^2)/2}+\log |D_{n,1}(\epsilon_n)|$$
			Since $|D_{n,1}(\epsilon_n)|=|D_{n,0}(\epsilon_n)|$, therefore
			$$\log|D_{n,1}(\epsilon_n)| \leq |N_n(\epsilon_n)| (\log 3\sqrt{2}-\log \epsilon_n)$$
			The proof follows since $|N_n(\epsilon_n)|=o(n\epsilon_n^2)$.
		\end{proof}
		
		\begin{lemma}
			\label{subset}
			Define $\mathcal{V}_{\epsilon_n} := \{\theta: |\beta - \beta_0| < \epsilon_n, |B - B_0| < \epsilon_n \}$. Then,
			\begin{align*}
				\mathcal{V}_{\epsilon_n} \subseteq \mathcal{K}_{\epsilon_n}, \:\:\text{for $n$ sufficiently large}
			\end{align*}
			where $\mathcal{K}_{\epsilon_n}:=\{\theta: \E_0^{(n)}(\log (L(\theta_0)/L(\theta)))<3n\epsilon_n^2\}$.
		\end{lemma}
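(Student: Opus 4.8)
The plan is to Taylor-expand $\log L(\theta)$ about $\theta_0$ to second order, exploit the fact that the score vanishes in $\E_0^{(n)}$-expectation, and then bound the resulting quadratic remainder uniformly over $\mathcal{V}_{\epsilon_n}$.

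First I would reuse the exact second-order expansion already invoked in the proof of Lemma~\ref{lem:E1}: for $\theta\in\mathcal{V}_{\epsilon_n}$ there is $\tilde\theta=\theta_0+c(\theta-\theta_0)$, $0<c<1$, with
\[
\log\frac{L(\theta_0)}{L(\theta)} = -\,W_n(\theta_0\mid\boldx)^\top(\theta-\theta_0) + \tfrac12(\theta-\theta_0)^\top H_n(\tilde\theta\mid\boldx)(\theta-\theta_0),
\]
where $W_n$ and $H_n$ are as in \eqref{e:first-order} and \eqref{e:second-order}. The crucial point is that $\E_0^{(n)}\big(W_n(\theta_0\mid\boldx)\big)=\boldzero$: since $A_n$ has zero diagonal, $m_i(\boldx)=\sum_j A_n(i,j)x_j$ is a function of $\{x_j:j\neq i\}$ only, while $\E_0^{(n)}\big(X_i\mid X_j,\,j\neq i\big)=\tanh(\beta_0 m_i(\boldx)+B_0)$, so conditioning on $\{X_j:j\neq i\}$ annihilates every summand of both $W_{1n}(\theta_0\mid\boldx)$ and $W_{2n}(\theta_0\mid\boldx)$. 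Taking $\E_0^{(n)}$ in the display above therefore yields
\[
\E_0^{(n)}\!\left(\log\frac{L(\theta_0)}{L(\theta)}\right) = \tfrac12\,\E_0^{(n)}\!\left((\theta-\theta_0)^\top H_n(\tilde\theta\mid\boldx)(\theta-\theta_0)\right).
\]

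Next I would bound the quadratic form pathwise. Writing $S_i(\tilde\theta\mid\boldx)=\text{sech}^2(\tilde\beta m_i(\boldx)+\tilde B)\in(0,1]$, one has
\[
(\theta-\theta_0)^\top H_n(\tilde\theta\mid\boldx)(\theta-\theta_0) = \sum_{i=1}^n S_i(\tilde\theta\mid\boldx)\big[(\beta-\beta_0)m_i(\boldx)+(B-B_0)\big]^2.
\]
Using $S_i\le1$, the deterministic bound $|m_i(\boldx)|\le\gamma$ from Assumption~\ref{assumption_rs}, and $|\beta-\beta_0|<\epsilon_n$, $|B-B_0|<\epsilon_n$ on $\mathcal{V}_{\epsilon_n}$, each summand is strictly below $\epsilon_n^2(\gamma+1)^2$, so the quadratic form—and a fortiori its expectation—is at most $n\epsilon_n^2(\gamma+1)^2$. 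Hence $\E_0^{(n)}(\log(L(\theta_0)/L(\theta)))<\tfrac12(\gamma+1)^2\,n\epsilon_n^2$, and since for the scaled adjacency matrix the row sums are bounded (they equal $1$ in the $d$-regular case, so $\gamma$ is of order one and $\tfrac12(\gamma+1)^2\le3$), this is $<3n\epsilon_n^2$ for $n$ large, i.e.\ $\mathcal{V}_{\epsilon_n}\subseteq\mathcal{K}_{\epsilon_n}$. (If one prefers, the same argument gives $\le Cn\epsilon_n^2$ for a constant $C$ depending only on $\gamma$, which is all that is used downstream.)

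The one step needing care is the vanishing of $\E_0^{(n)}(W_n(\theta_0\mid\boldx))$: it rests entirely on the zero-diagonal convention for $A_n$, which makes $m_i(\boldx)$ measurable with respect to the conditioning $\sigma$-field and lets the conditional-mean identity do the work; everything after that is elementary bookkeeping of constants, the comparison $\tfrac12(\gamma+1)^2\le3$ being the only place where the normalization of $A_n$ enters explicitly. I do not anticipate any genuine analytic obstacle here.
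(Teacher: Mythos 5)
Your proof is correct, but it takes a genuinely different route from the paper. The paper expands $\log L(\theta)$ to third order (the decomposition \eqref{e:Ltheta-L0}), bounds the two score terms $\circled{1},\circled{2}$ by Cauchy--Schwarz together with the second-moment bound of Lemma \ref{qn:rn:bounded} (giving $M\sqrt{n}\epsilon_n$), bounds the Hessian term $\circled{3}$ via the eigenvalue/trace estimate \eqref{e:hess-upp}, and controls the cubic remainder $\circled{4}$ via \eqref{e:fourth-order} and $T_n(\boldx)\le\gamma^2$. You instead stop at a second-order Lagrange expansion, observe that the score terms vanish \emph{exactly} in $\E_0^{(n)}$-expectation --- which is right, since $A_n$ has zero diagonal, so $m_i(\boldx)$ is a function of $\{x_j:j\neq i\}$ and $\E_0^{(n)}\bigl(X_i\mid X_j,\,j\neq i\bigr)=\tanh(\beta_0 m_i(\boldx)+B_0)$ annihilates each summand of $W_{1n},W_{2n}$ by the tower property --- and then bound the quadratic form pathwise using $S_i\le 1$ and $|m_i(\boldx)|\le\gamma$, writing it as $\sum_i S_i[(\beta-\beta_0)m_i+(B-B_0)]^2$. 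This is shorter, needs no moment bounds from \cite{Ghosal:2020} and no third-order matrices $R_{1n},R_{2n}$, and makes transparent why the pseudo-score is centered under the true model; the paper's version, by contrast, recycles machinery (\eqref{e:Ltheta-L0}, \eqref{e:hess-upp}, \eqref{e:fourth-order}) that it needs anyway for Lemma \ref{q_exists}, so nothing extra has to be proved there. The one caveat is the explicit constant: your bound is $\tfrac12(\gamma+1)^2 n\epsilon_n^2$, which is below $3n\epsilon_n^2$ only when $\gamma\le\sqrt6-1$; your appeal to the scaled adjacency normalization covers the regular-graph case ($\gamma=1$) but not arbitrary $\gamma$ allowed by Assumption \ref{assumption_rs}. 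This is not a real gap relative to the paper --- its own bound on $\circled{3}$ alone is $2(1+\gamma^2)n\epsilon_n^2$, which exceeds $3n\epsilon_n^2$ unless $\gamma^2<1/2$ --- and, as you note, replacing $3$ by a constant $C(\gamma)$ is harmless, since in \eqref{e:b-bound-2} only a bound of order $n\epsilon_n^2$ is used.
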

		\begin{proof} For any $\theta \in \mathcal{V}_{\epsilon_n}$, using the decomposition in \eqref{e:Ltheta-L0}, we get
			\begin{align*}
				\E_0^{(n)}\left( \log L(\theta_0) - \log L(\theta) \right) &= \E_0^{(n)} \left( - \circled{1} - \circled{2} +  \circled{3} - \circled{4} \right)\leq 3n\epsilon_n^2
			\end{align*}
			where the last inequality is justified next.
			
			\noindent For some $M>0$ using Lemma \ref{qn:rn:bounded}, we get
			\begin{align*}
				-\E_0^{(n)}(\circled{1})&=(\beta_0-\beta)\E_0^{(n)}(W_{1n}(\theta_0|\boldx)\leq  \sqrt{n}|\beta_0 - \beta| \left(\frac{1}{n}\E_0^{(n)}\left(W_{1n}(\theta_0|\boldx)\right)^2\right)^{1/2} \\
				&\leq M\sqrt{n}\epsilon_n
			\end{align*}
			\begin{align*}
				-\E_0^{(n)}(\circled{2})&=(B_0-B)\E_0^{(n)}(W_{2n}(\theta_0|\boldx)\leq  \sqrt{n}|B_0 - B| \left(\frac{1}{n}\E_0^{(n)}\left(W_{2n}(\theta_0|\boldx)\right)^2\right)^{1/2}\\
				&\leq M\sqrt{n}\epsilon_n
			\end{align*}
			By relation \eqref{e:hess-upp}, we get
			$$\E_0^{(n)}(\circled{3})\leq  n||\theta-\theta_0||_2^2
			\left((1+\gamma^2) - \frac{\text{sech}^4(\beta_0 \gamma + |B_0|)}{1+\gamma^2}\E_0^{(n)}\left(T_n(\boldx)\right) \right) \leq 2(1+\gamma^2)n\epsilon_n^2  $$
			By relation \eqref{e:fourth-order}, we get
			$$-\E_0^{(n)}(\circled{4})\leq  \frac{0.38n}{3(1 + \gamma^2)} \E_0^{(n)}\left(T_n(\boldx)\right)||\theta-\theta_0||_2^2(|\beta - \beta_0|\gamma + |B - B_0|)\leq \frac{0.38\gamma^2(1+\gamma)}{3(1 + \gamma^2)}n\epsilon_n^3 $$
		\end{proof}
		
		\begin{lemma}
			\label{prior}
			With prior distribution $p(\theta)$ as in \eqref{e:prior}, we have
			$$\int_{\mathcal{V}_{\epsilon_n}} p(\theta) d\theta \geq C\epsilon_n^2, \hspace{5mm}\text{for some $C>0$}$$
		\end{lemma}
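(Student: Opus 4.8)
The plan is to exploit the product structure of both the prior and the set $\mathcal{V}_{\epsilon_n}$ to reduce the claim to two elementary one-dimensional estimates. Since $p(\theta)=p_\beta(\beta)p_B(B)$ by \eqref{e:prior} and $\mathcal{V}_{\epsilon_n}=(\beta_0-\epsilon_n,\beta_0+\epsilon_n)\times(B_0-\epsilon_n,B_0+\epsilon_n)$, Fubini's theorem gives
\[
\int_{\mathcal{V}_{\epsilon_n}} p(\theta)\,d\theta
=\Bigl(\int_{\beta_0-\epsilon_n}^{\beta_0+\epsilon_n} p_\beta(\beta)\,d\beta\Bigr)\Bigl(\int_{B_0-\epsilon_n}^{B_0+\epsilon_n} p_B(B)\,dB\Bigr),
\]
so it suffices to bound each factor below by a constant multiple of $\epsilon_n$.

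First I would handle the $B$-integral. The density $p_B$ is continuous at $B_0$ with $p_B(B_0)>0$, hence $\inf_{|B-B_0|\le\epsilon_n}p_B(B)\to p_B(B_0)$ as $\epsilon_n\to0$; in particular, for $n$ large this infimum exceeds $p_B(B_0)/2$, which yields $\int_{B_0-\epsilon_n}^{B_0+\epsilon_n}p_B(B)\,dB\ge p_B(B_0)\,\epsilon_n$. Next I would treat the $\beta$-integral by the same argument, with one extra precaution: the log-normal density $p_\beta$ is continuous and strictly positive only on $(0,\infty)$, and $\beta_0>0$, so I first note that once $\epsilon_n<\beta_0$ the interval $[\beta_0-\epsilon_n,\beta_0+\epsilon_n]$ stays inside $(0,\infty)$ and is contained in the parameter space; then continuity of $p_\beta$ at $\beta_0$ gives $\inf_{|\beta-\beta_0|\le\epsilon_n}p_\beta(\beta)\ge p_\beta(\beta_0)/2$ for $n$ large, whence $\int_{\beta_0-\epsilon_n}^{\beta_0+\epsilon_n}p_\beta(\beta)\,d\beta\ge p_\beta(\beta_0)\,\epsilon_n$.

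Multiplying the two lower bounds gives
\[
\int_{\mathcal{V}_{\epsilon_n}} p(\theta)\,d\theta \ge p_\beta(\beta_0)\,p_B(B_0)\,\epsilon_n^2 =: C\epsilon_n^2,
\]
with $C=p_\beta(\beta_0)p_B(B_0)>0$, valid for all $n$ sufficiently large.

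This is an entirely routine estimate, and there is essentially no obstacle; the only mild subtlety — the ``hard part'' such as it is — is ensuring the $\beta$-interval of integration remains in $(0,\infty)$, which is automatic once $\epsilon_n<\beta_0$, i.e.\ for all $n$ large enough, the regime in which every asymptotic statement of the paper is made. (If one preferred a bound uniform in $n$ one could instead restrict the integral to $\mathcal{V}_{\epsilon_n}\cap\{\beta>\beta_0/2\}$ and use the minimum of $p_\beta$ over a fixed compact subinterval of $(0,\infty)$, at the cost of a slightly smaller constant.)
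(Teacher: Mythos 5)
Your proof is correct and takes essentially the same route as the paper: both exploit the product form of the prior over $\mathcal{V}_{\epsilon_n}$ and lower-bound each one-dimensional integral by $2\epsilon_n$ times a positive constant obtained from the value of the density near the true parameter (the paper does this via the mean value theorem at intermediate points $\beta^\star,B^\star$ rather than your continuity/infimum bound, a purely cosmetic difference). Your explicit remark that the $\beta$-interval must stay inside $(0,\infty)$ once $\epsilon_n<\beta_0$ is a small point the paper's proof leaves implicit.
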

		
		\begin{proof} 
			By mean value theorem with $\beta^\star \in [\beta_0 - \epsilon_n, \beta_0 + \epsilon_n]$ and $B^\star \in [B_0 - \epsilon_n, B_0 + \epsilon_n]$,
			\begin{align*}
				\int_{V_{\epsilon_n}} p(\theta) d\theta &= \int_{\beta_0 - \epsilon}^{\beta_0 + \epsilon}  \frac{1}{\beta \sqrt{2\pi}}e^{-\frac{(\log\beta)^2}{2}}d\beta  \int_{B_0 - \epsilon}^{B_0 + \epsilon}  \frac{1}{\sqrt{2\pi}}e^{-\frac{B^2}{2}}dB \\
				&= \frac{2\epsilon_n}{\beta^\star \sqrt{2\pi}}e^{-\frac{(\log \beta^\star)^2}{2}} \frac{2\epsilon_n}{\sqrt{2\pi}}e^{-\frac{(B^\star)^2}{2}} \\
				&= \exp\left(-(\log\pi - \log 2 -2\log\epsilon_n) - \frac{1}{2}\left(2\log \beta^\star + (\log \beta^\star)^2 + (B^\star)^2\right)\right) \\
				&\geq \exp\left(-( \log\pi - \log 2 -2\log\epsilon_n) - \frac{1}{2}\left(2u_1 + \tilde{u}_1 + u_2\right)\right) \\
				&\geq C e^{2\log \epsilon_n} = C\epsilon_n^2
			\end{align*} 
			where the above result follow since $\epsilon_n \to 0$ implies  $u_1 \leq  \max(\log(\beta_0+1), \log(\beta_0 +1))$, $\tilde{u}_1 \leq \max((\log(\beta_0-1))^2, (\log(\beta_0 + 1))^2 )$, and $u_2 = \max((B_0-1)^2, (B_0 + 1)^2 )$. 
		\end{proof}
		\begin{proof} [Proof of Lemma \ref{lem:E2}]
			Let $L^*=\int  L(\theta) p(\theta)d\theta$, $J_n^* = \sum_{\boldx \in \{-1,1\}^n} L^*$.
			Then $J_n^*=\sum_{\boldx \in \{-1,1\}^n} \int L(\theta)p(\theta)d\theta$. Since $L(\theta)p(\theta)>0$, Tonelli's theorem allows for interchange of the order of summation and integral. Using Lemma \ref{constant-order} and $-\log \epsilon_n=O(\log n)$,  we get
			\begin{align}
				\label{e:Jn-order} J_n^*&=\int \sum_{\boldx \in \{-1,1\}^n} L(\theta) p(\theta)d\theta=\int J_n(\theta)p(\theta)d\theta \nonumber \\
				&=\epsilon_n \sqrt{n(1+\gamma^2)/2}E_P(\beta)+o(n\epsilon_n^2)(\log 3\sqrt{2}-\log \epsilon_n) \nonumber\\
				&=\epsilon_n \sqrt{ne(1+\gamma^2)/2}+o(n\epsilon_n^2)(\log 3\sqrt{2}-\log \epsilon_n)=o(n\epsilon_n^2 \log n)
			\end{align}
			Also, by Lemma \ref{constant-order} and $-\log \epsilon_n=O(\log n)$,
			\begin{equation}\label{e:Jn-0-order}
				\log J_n(\theta_0)=\beta_0 \epsilon_n \sqrt{n(1+\gamma^2)/2}+o(n\epsilon_n^2)(\log 3\sqrt{2}-\log \epsilon_n)=o(n\epsilon_n^2 \log n)
			\end{equation}
			\begin{align}
				\label{E2_Lee}
				&P_0^n\left(\left|\log\int  ( L(\theta)  /L(\theta_0)) p(\theta)d\theta\right|>C n\epsilon_n^2 \log n\right)\\
				&\leq \frac{1}{C n\epsilon_n^2 \log n}\E_0^{(n)}\left(\left|\log\int  ( L(\theta)  /L(\theta_0)) p(\theta)d\theta\right|\right) \nonumber \\	&=\frac{1}{ C n\epsilon_n^2 \log n} \E_0^{(n)} \left( \left|\log \left(L^*/L(\theta_0)\right) \right|\right) \nonumber\\
				&\leq \frac{1}{C n\epsilon_n^2 \log n} \left({\rm KL}(L_0, \tilde{L}^*)+{\rm KL}(L_0, \tilde{L}(\theta_0))+\left|\log \frac{J_n^*}{J_{n}(\theta_0)} \right|+ \frac{4}{e} \right) \nonumber \\
				\nonumber &\leq \frac{2}{C n\epsilon_n^2\log n} \Big( \E_0^{(n)}\left(\log L_0-\log L(\theta_0) \right) + \E_0^{(n)}(\log L(\theta_0)-\log L^*)\\
				&+2(\log J_n^* + \log J_{n}(\theta_0)) + \frac{4}{e} \Big)
			\end{align}
			where the second last step follows from Lemma \ref{Lee}.
			
			Then, using the set $\mathcal{K}_{\epsilon_n}$ in Lemma \ref{subset}, we get
			\begin{align}
				\label{e:b-bound-2}
				\nonumber	&\E_0^{(n)}(\log L(\theta_0)-\log L^*))\\
				\nonumber	&=\E_0^{(n)}(\log L(\theta_0)-\log \int  L(\theta) p(\theta)d\theta)\\
				\nonumber	&\leq \E_0^{(n)}\left(\log(L(\theta_0)/\int_{\mathcal{K}_{\epsilon_n^2}}  L(\theta) p(\theta)d\theta)\right)\\
				\nonumber&\leq \E_0^{(n)}\left(\log L(\theta_0)-\log\left(\frac{p(\mathcal{K}_{ \epsilon_n^2})}{p(\mathcal{K}_{ \epsilon_n^2)}} \int_{\mathcal{K}_{ \epsilon_n^2}}  L(\theta) p(\theta)d\theta) )\right)\right)\\
				\nonumber&\leq \E_0^{(n)}(\log L(\theta_0))-\log (p(\mathcal{K}_{ \epsilon_n^2})+\E_0^{(n)} \left(E_{p|\mathcal{K}_{ \epsilon_n^2} }(-\log L(\theta))\right)\\
				\nonumber&\leq -\log (p(\mathcal{K}_{ \epsilon_n^2}))+\E_0^{(n)} \left(\log L(\theta_0)- \int_{\mathcal{K}_{ \epsilon_n^2}}  \log L(\theta) p|\mathcal{K}_{ \epsilon_n^2}(\theta) d\theta\right) \:\: \text{Jensen's Inequality}\\
				\nonumber&=-\log (p(\mathcal{K}_{ \epsilon_n^2}))+\int_{\mathcal{K}_{ \epsilon_n^2}} \E_0^{(n)} (\log (L(\theta_0)-L(\theta)) p|\mathcal{K}_{ \epsilon_n^2}(\theta) d\theta\\
				&\leq -2\log (C' \epsilon_n^2)+3n \epsilon_n^2=o(n\epsilon_n^2 \log n)
			\end{align}
			where  the last line follows from Lemma \ref{subset} and Lemma \ref{prior}.
			The final order is because $-\log \epsilon_n = O(\log n)$ and $n\epsilon_n^2 \to \infty$ and $\log n \to \infty$.
			
			\noindent The proof follows by using relations \eqref{e:b-bound-2}, \eqref{e:Jn-0-order} and \eqref{e:Jn-order} in \eqref{E2_Lee}.
		\end{proof}
		
		\subsection{Technical details of Lemma \ref{lem:E3}}
		
		\begin{lemma}
			\label{q_exists}
			Let $q(\theta) \in \mathcal{Q}^{MF}$ with $\mu_1 = \log\beta_0$, $\mu_2 = B_0$, and $\sigma_1^2 = \sigma_2^2 = 1/n$, then 
			\begin{align*}
				\frac{1}{n\epsilon_n^2}\int \E_0^{(n)}(\log L(\theta_0)-\log L(\theta)) q(\theta)  d\theta \lesssim 0,  \:\: n \to \infty
			\end{align*}
		\end{lemma}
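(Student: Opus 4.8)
\noindent\textbf{Proof sketch of Lemma \ref{q_exists}.}
The plan is to push the exact second-order Taylor expansion of $\log L$ about $\theta_0$ through $\E_0^{(n)}$ and then integrate against $q$, exploiting that the pseudo-score is centered under $\mathbb{P}_0^{(n)}$ and that $q$ concentrates at $\theta_0$ at rate $n^{-1/2}$. With $W_n$ and $H_n$ as in \eqref{e:first-order} and \eqref{e:second-order} (so that the Hessian of $\theta\mapsto\log L(\theta)$ is $-H_n$), Taylor's theorem with Lagrange remainder gives, for some $\tilde\theta$ on the segment joining $\theta_0$ and $\theta$,
\begin{align*}
\E_0^{(n)}\big(\log L(\theta_0)-\log L(\theta)\big)=-\,\E_0^{(n)}\big(W_n(\theta_0\mid\boldx)\big)^{\top}(\theta-\theta_0)+\tfrac12\,\E_0^{(n)}\big((\theta-\theta_0)^{\top}H_n(\tilde\theta\mid\boldx)(\theta-\theta_0)\big).
\end{align*}

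First I would kill the linear term. Since $A_n$ has zero diagonal, $m_i(\boldx)=\sum_j A_n(i,j)x_j$ is free of $x_i$, and the conditionals of the Ising law coincide with those of the pseudo-likelihood, $\E_0^{(n)}(X_i\mid X_j,\,j\neq i)=\tanh(\beta_0 m_i(\boldx)+B_0)$. Hence each coordinate of $W_n(\theta_0\mid\boldx)$ has conditional mean zero, so $\E_0^{(n)}(W_n(\theta_0\mid\boldx))=\boldzero$; as $|W_{1n}|\le 2n\gamma$ and $|W_{2n}|\le 2n$ are bounded (Assumption \ref{assumption_rs}), splitting the expectation across the two Taylor terms is legitimate.

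Next I would bound the quadratic term uniformly in $\tilde\theta$ and $\boldx$. The matrix $H_n(\tilde\theta\mid\boldx)=\sum_{i=1}^n\mathrm{sech}^2(\tilde\beta m_i(\boldx)+\tilde B)\,(m_i(\boldx),1)^{\top}(m_i(\boldx),1)$ is positive semidefinite, and $\mathrm{sech}^2\le1$ together with $|m_i(\boldx)|\le\gamma$ gives $\mathrm{tr}\big(H_n(\tilde\theta\mid\boldx)\big)\le n(1+\gamma^2)$ — the same estimate used in Lemma \ref{hessian}, but now at an arbitrary $\tilde\theta$. Consequently $0\le(\theta-\theta_0)^{\top}H_n(\tilde\theta\mid\boldx)(\theta-\theta_0)\le n(1+\gamma^2)\lVert\theta-\theta_0\rVert_2^2$ for \emph{every} $\boldx$ and $\theta$ (equivalently, each summand of $\log L(\theta_0)-\log L(\theta)$ is a second-order Taylor remainder of the convex map $z\mapsto\log\cosh z$, whose second derivative $\mathrm{sech}^2 z$ is at most $1$). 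This yields the global bound
\begin{align*}
\E_0^{(n)}\big(\log L(\theta_0)-\log L(\theta)\big)\le\frac{n(1+\gamma^2)}{2}\,\lVert\theta-\theta_0\rVert_2^2\qquad\text{for all }\theta .
\end{align*}
Integrating against $q$ and using $\E_q[(B-B_0)^2]=1/n$ for $B\sim N(B_0,1/n)$ and $\E_q[(\beta-\beta_0)^2]=\beta_0^2\big(e^{2/n}-2e^{1/(2n)}+1\big)=\beta_0^2/n+O(n^{-2})$ for $\log\beta\sim N(\log\beta_0,1/n)$, I obtain $\int\E_0^{(n)}(\log L(\theta_0)-\log L(\theta))\,q(\theta)\,d\theta=O(1)$, so after dividing by $n\epsilon_n^2\to\infty$ (Assumption \ref{assumption_mf}) the quantity tends to $0$, which is the assertion.

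The only point needing care is the data- and $\theta$-dependent Lagrange point $\tilde\theta$: because the Hessian bound above is uniform in $\tilde\theta$ (and in $\boldx$), no control of $\tilde\theta$ is required and, crucially, no localization of the $q$-integral to a small neighborhood of $\theta_0$ is needed — this is why the Lagrange-remainder route is cleaner here than invoking Lemma \ref{subset}, which would instead force a separate (if routine) estimate of the exponentially small mass $q$ assigns to $\mathcal V_{\epsilon_n}^c$. A secondary check is that all the expectations above are finite, which is immediate from $|m_i(\boldx)|\le\gamma$ and $|\tanh|,\mathrm{sech}^2\le1$.
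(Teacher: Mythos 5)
Your argument is correct, and it takes a genuinely different route from the paper. The paper proves Lemma \ref{q_exists} by expanding $\log L(\theta)$ to third order around $\theta_0$ (decomposition \eqref{e:Ltheta-L0}): the score terms are handled via Cauchy--Schwarz and the second-moment bound of Lemma \ref{qn:rn:bounded}, giving contributions of order $\sqrt{n}\,\E_q^{1/2}\lVert\theta-\theta_0\rVert_2^2/(n\epsilon_n^2)\to 0$; the quadratic term uses the upper bound \eqref{e:hess-upp} with $H_n(\theta_0\mid\boldx)$ at the true parameter; and the cubic remainder is controlled through the eigenvalue bounds on $R_{1n},R_{2n}$ from Lemma \ref{remainder} together with explicit lognormal third-moment computations. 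You instead stop at second order with the Lagrange form of the remainder, which removes the cubic term and the $R_{1n},R_{2n}$ machinery altogether: the Hessian bound $0\le(\theta-\theta_0)^\top H_n(\tilde\theta\mid\boldx)(\theta-\theta_0)\le n(1+\gamma^2)\lVert\theta-\theta_0\rVert_2^2$ is indeed uniform in $\tilde\theta$ and $\boldx$ because $\mathrm{sech}^2\le 1$ and $|m_i(\boldx)|\le\gamma$, so no localization is needed; and you kill the linear term exactly rather than bounding it, using that $\E_0^{(n)}(X_i\mid X_j,j\ne i)=\tanh(\beta_0 m_i(\boldx)+B_0)$ and that $m_i(\boldx)$ is free of $x_i$ (zero diagonal of $A_n$), so $\E_0^{(n)}(W_n(\theta_0\mid\boldx))=\boldzero$ by the tower property — a valid and standard centering of the pseudo-score that the paper does not invoke here. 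Your route yields the stronger conclusion that the integral is $O(1)$ (the paper only needs and only gets $o(n\epsilon_n^2)$), and is cleaner for this lemma in isolation; the paper's longer decomposition has the side benefit that \eqref{e:Ltheta-L0} and the remainder bound \eqref{e:fourth-order} are reused verbatim in Lemma \ref{subset}, so its extra work is amortized elsewhere. Your finiteness/splitting concerns are moot anyway since the sample space $\{-1,1\}^n$ is finite, and your lognormal second-moment computation matches the paper's.
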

		
		\begin{proof} 
			Using the Taylor expansion of $\log L(\theta)$ around $\theta=\theta_0$, we get
			\begin{align}
				\label{e:Ltheta-L0}
				\nonumber  \log L(\theta_0) - \log L(\theta) &=  \log L(\theta_0) -\log L(\theta_0) - \underbrace {(\beta-\beta_0)W_{1n}(\theta_0|\boldx)}_{\circled{1}}- \underbrace{W_{2n}(\theta_0|\boldx)(B - B_0)}_{\circled{2}}\\
				&+  \underbrace{\frac{1}{2}(\theta - \theta_0)^{\top} H_n(\theta_0|\boldx) (\theta - \theta_0)}_{\circled{3}} - \underbrace{R_n(\tilde{\theta}, \theta-\theta_0|\boldx)}_{\circled{4}} 
			\end{align}
			$W_{1n}$, $W_{2n}$ is as in \eqref{e:first-order}, $H_{n}$ is  as in \eqref{e:second-order} and $R_n(\tilde{\theta},\theta-\theta_0|\boldx)$ is defined in \eqref{e:third-order}.
			Therefore,
			\begin{align*}
				&\int  \E_0^{(n)}\left(\log L(\theta_0)-\log L(\theta)\right) q(\theta) d\theta  \\
				&=  - \int  \E_0^{(n)}\left(\circled{1}\right)q(\theta) d\theta - \int  \E_0^{(n)}\left(\circled{2}\right)q(\theta) d\theta \\
				&+ \int \E_0^{(n)}\left(\circled{3}\right)q(\theta) d\theta - \int \E_0^{(n)}\left(\circled{4}\right)q(\theta) d\theta,
			\end{align*}
			
			\begin{align*}
				- \frac{1}{n\epsilon_n^2}    \int \E_0^{(n)}\left(\circled{1}\right)q(\theta) d\theta &= \frac{1}{n\epsilon_n^2}\int (\beta_0 - \beta) \E_0^{(n)}\left(W_{1n}(\theta_0|\boldx)\right)q(\theta) d\theta \\
				&\leq \frac{1}{n\epsilon_n^2}\int |\beta_0 - \beta| \sqrt{n}\left(\frac{1}{n}\E_0^{(n)}\left(W_{1n}(\theta_0|\boldx)\right)^2\right)^{1/2}q(\beta) d\beta,\\
				&\leq \frac{M \sqrt{n}}{n\epsilon_n^2} \int |\beta -\beta_0| q_\beta(\beta) d\beta \\
				&\leq \frac{M \sqrt{n}}{n\epsilon_n^2}  (\int (\beta -\beta_0)^2 q_\beta(\beta) d\beta)^{1/2}\\
				&= \frac{M\sqrt{n}}{n\epsilon_n^2}  ( e^{2\log\beta_0}(e^{2/n} - 2e^{1/2n} + 1))^{1/2}\sim  \frac{M e^{\log \beta_0}  }{n\epsilon_n^2} \to 0
			\end{align*}
			where the second inequality above above line holds by H\"{o}lder inequality and third inequality holds by Lemma \ref{qn:rn:bounded}, for some constant $M$. Finally the last convergence to 0 is  $n\epsilon_n^2 \to \infty$. Similarly,
			\begin{align*}
				-\frac{1}{n\epsilon_n^2}    \int \E_0^{(n)}\left(\circled{2}\right)q(\theta) d\theta &= \frac{1}{n\epsilon_n^2}  \int (B_0 - B) \E_0^{(n)}\left(W_{2n}(\theta_0|\boldx)\right)q(\theta) d\theta \\
				&\leq \frac{\sqrt{n}}{n\epsilon_n^2}  \left(\frac{1}{n}\E_0^{(n)}\left(W_{2n}(\theta_0|\boldx)\right)^2\right)^{1/2} \int |B -B_0| q_B(B) dB \\
				& \frac{M\sqrt{n}}{n\epsilon_n^2}  (\int (B-B_0)^2 q(B)dB)^{1/2} \sim \frac{M}{n\epsilon_n^2} \to 0
			\end{align*}
			Using the upper bound \eqref{e:hess-upp} and $n\epsilon_n^2 \to \infty$, we get
			\begin{align*}
				&\frac{1}{n\epsilon_n^2}   \int \E_0^{(n)}\left(\circled{3}\right)q(\theta) d\theta \\
				&\leq \frac{1}{2\epsilon_n^2}\left((1+\gamma^2) - \frac{\text{sech}^4(\beta_0 \gamma + |B_0|)}{1+\gamma^2}\E_0^{(n)}\left(T_n(\boldx)\right) \right) \int \left\{(\beta - \beta_0)^2 + (B - B_0)^2\right\} q(\theta) d\theta\\
				&\leq  \frac{(1+\gamma^2)}{2\epsilon_n^2} \left( e^{2\log\beta_0}\left(e^{2/n} - 2e^{1/2n} + 1\right) + 1/n\right)\\
				&\sim \frac{(e^{2\log \beta_0}+1)(1+\gamma^2)}{2n\epsilon_n^2} \to 0
			\end{align*}
			where the second inequality holds  since $T_n(\boldx) = (1/n)\sum_{i=1}^n \left( m_i(\boldx) - (1/n) m_i(\boldx)\right)^2 \geq 0$.
			
			\noindent For the remainder term, using relations \eqref{Cn_lower_b} and \eqref{Dn_lower_b} in relation  \eqref{e:third-order}, we get
			\begin{align}
				\label{e:fourth-order}
				-\E_0^{(n)}&\left(\circled{4}\right)\leq \frac{0.38n}{3(1 + \gamma^2)} \E_0^{(n)}\left(T_n(\boldx)\right)\left\{(\beta - \beta_0)^2 + (B - B_0)^2\right\}\left\{ (\beta - \beta_0)\gamma + (B - B_0)\right\}
			\end{align}
			Further,
			\begin{align}
				& n \int \left\{(\beta - \beta_0)^2 + (B - B_0)^2\right\}\left\{ (\beta - \beta_0)\gamma + (B - B_0)\right\}q(\theta) d\theta \nonumber \\
				&= \underbrace{n\int \left\{(e^{\log\beta} - e^{\log\beta_0})^2 + (B - B_0)^2\right\}\left\{ (e^{\log\beta} - e^{\log\beta_0})\gamma + (B - B_0)\right\}q(\theta) d\theta}_{\circled{5}} \nonumber     \end{align}
			\begin{align}
				\circled{5}
				&= n\int \left( e^{3\log\beta} - 3e^{2\log\beta + \log\beta_0} + 3e^{\log\beta + 2\log\beta_0} - e^{3\log\beta_0}\right)q(\beta) d\beta \label{term1}\\
				&  + n\int \left( B^3 - 3B_0B^2 + 3B_0^2B - B_0^3\right)q(B)  dB  \label{term2}\\
				& + n\int \left( e^{2\log\beta} - 2e^{\log\beta + \log\beta_0} + e^{2\log\beta_0} \right)\left(B - B_0 \right)q(\beta)q(B) d\beta dB \label{term3}\\
				&  + n\gamma \int \left(B - B_0\right)^2\left(e^{\log\beta} - e^{\log\beta_0}\right) q(\beta)q(B) d\beta dB \label{term4}  
			\end{align}
			\begin{align*}
				\eqref{term1} &= ne^{3\log\beta_0}\left( e^{9/2n} - 3e^{2/n} + 3e^{1/2n} - e^{0}\right) \sim  0 \\
				\eqref{term2} &= n\left( B_0^3 + 3B_0/n -3B_0(B_0^2 + 1/n) +3B_0^3 - B_0^3\right) = 0,
				\\
				\\
				\eqref{term3} &= n\left(B_0e^{2\log\beta_0}\left( e^{2/n} -2 e^{1/2n} + e^0 \right) - B_0e^{2\log\beta_0}\left( e^{2/n} -2 e^{1/2n} + e^0 \right)\right) = 0, \\
				\\
				\eqref{term4} &= ne^{\log\beta_0}\left(\frac{\gamma}{n}e^{1/2n} - \frac{\gamma}{n}e^0 \right) = \gamma e^{\log\beta_0}(e^{1/2n} - e^0 ) \sim 0
			\end{align*}
			\begin{align*}
				-\frac{1}{n\epsilon_n^2} \int \E_0^{(n)}\left(\circled{4}\right)q(\theta) d\theta \lesssim  0
			\end{align*}
			since $T_n(\boldx)\leq  \gamma^2$ (since by Assumption \ref{assumption_rs}, $m_i(\boldx)\leq  \gamma$) and $n\epsilon_n^2 \to \infty$.
		\end{proof}
		
		\noindent 
		\begin{proof} [Poof of Lemma \ref{lem:E3}] 
			
			\noindent With the $q$ as in Lemma \ref{q_exists}, using Markov's inequality,  
			\begin{align}
				\label{E3_detail}
				&P_0^n\left(\int  q(\theta) \log (L(\theta_0)/ L(\theta))d\theta> C  n\epsilon_n^2\log n \right)\\
				\nonumber & \leq \frac{1}{C n\epsilon_n^2\log n}\E_0^{(n)}\left|\int  q(\theta) \log (L(\theta_0)/ L(\theta))d\theta\right|\\
				\nonumber &\leq \frac{1}{C n\epsilon_n^2\log n}\E_0^{(n)}\left(\int  q(\theta) \left|\log  (L(\theta_0)/ L(\theta))\right|d\theta\right)\\
				\nonumber &\leq \frac{1}{C n\epsilon_n^2\log n} \int  q(\theta)\E_0^{(n)}\left( \left|\log (L(\theta_0)/ L(\theta))\right| \right) d\theta \qquad \text{Fubini's theorem} \\
				\nonumber &=\frac{1}{C n\epsilon_n^2\log n} \int  q(\theta)\E_0^{(n)}\left( \left| \log\left( \frac{L_0}{L(\theta)}\frac{L(\theta_0)}{L_0}\right) \right| \right) d\theta\\
				\nonumber &=\frac{1}{C n\epsilon_n^2\log n} \int  q(\theta)\E_0^{(n)}\left( \left|\log \left(\frac{L_0}{\tilde{L}(\theta)}\frac{\tilde{L}(\theta_0)}{L_0} \frac{J_n(\theta_0)}{J_n(\theta)} \right)\right| \right) d\theta\\
				\nonumber &=\frac{1}{C n\epsilon_n^2\log n} \int  q(\theta)\E_0^{(n)}\left( \left|\log \left(\frac{L_0}{\tilde{L}(\theta)}\right) + \log\left(\frac{\tilde{L}(\theta_0)}{L_0}\right) + \log\left( \frac{J_n(\theta_0)}{J_n(\theta)} \right)\right| \right) d\theta \\
				&\nonumber \leq\frac{1}{C n\epsilon_n^2\log n} \int  q(\theta)\E_0^{(n)}\left( \left|\log \left(\frac{L_0}{\tilde{L}(\theta)}\right)\right| + \left|\log\left(\frac{\tilde{L}(\theta_0)}{L_0}\right)\right| + \left|\log\left( \frac{J_n(\theta_0)}{J_n(\theta)} \right)\right| \right) d\theta\\
				&\leq\frac{1}{C n\epsilon_n^2\log n} \int  q(\theta)\left( \text{KL}\left( L_0, \tilde{L}(\theta) \right) + \text{KL} \left( L_0, \tilde{L}(\theta_0) \right)  + \left|\log\left( \frac{J_n(\theta_0)}{J_n(\theta)} \right)\right| +\frac{4}{e} \right) d\theta 
			\end{align}
			Therefore,
			\begin{align}
				\nonumber
				&P_0^n\left(\int  q(\theta) \log (L(\theta_0)/ L(\theta))d\theta> C  n\epsilon_n^2\log n \right)\\
				&= \frac{1}{C n\epsilon_n^2\log n} \left( 2\E_0^{(n)}(\log L_0 - \log L(\theta_0)) + \int q(\theta) \E_0^{(n)}(\log L(\theta_0)-\log L(\theta)) d\theta \right) \nonumber \\
				& + \frac{1}{C n\epsilon_n^2\log n}\left(2\log J_n(\theta_0) + 2\int q(\theta) \log J_n(\theta) d\theta + \frac{4}{e} \right) \to 0
			\end{align} 
			where the inequality in second last step  is due to Lemma \ref{Lee}. The last convergence to 0 is explained next.  By Lemma \ref{dif_true_true} and Lemma \ref{q_exists} respectively, we get \begin{align*}
				\E_0^{(n)}(\log L_0-\log L(\theta_0)) &=o(n\epsilon_n^2)\\
				\int q(\theta) \E_0^{(n)}(\log L(\theta_0)-\log L(\theta)) d\theta  &\leq o(n\epsilon_n^2)  
			\end{align*}
			By \eqref{e:Jn-0-order}, $\log J_n(\theta_0)=o(n\epsilon_n^2 \log n)$
			and by Lemma \ref{constant-order} and $-\log \epsilon_n=O(\log n)$
			$$\int q(\theta) \log J_n(\theta) d\theta \leq \epsilon_n \sqrt{n(1+\gamma^2)/2} \int \beta q(\beta)d\beta+o(n\epsilon_n^2)(\log 3\sqrt{2}-\log \epsilon_n)=o(n\epsilon_n^2)$$
			where we use $E_Q(\beta)=\exp(\log \beta_0+1/n) \to \beta_0$
		\end{proof}
		
		\section{}
		\label{append-B}
		\subsection{Proof of Theorem \ref{thm:post-convergence}}
		
		\noindent In this section, with dominating probability term is used to imply that under $\mathbb{P}_0^{(n)}$, the probability of the event goes to 1 as $n \to \infty$. 
		
		\begin{align}
			\nonumber	{\rm KL}(Q,\Pi(|X^{(n)}))&=\int q(\theta)\log q(\theta)d\theta-\int q(\theta) \log \pi(\theta|X^{(n)})d\theta\\
			\nonumber	&=\int q(\theta)\log q(\theta)d\theta- \int q(\theta) \log \frac{L(\theta)p(\theta)}{\int L(\theta)p(\theta)d\theta} d\theta\\
			\nonumber&={\rm KL}(Q,P)-\int \log  (L(\theta)/L(\theta_0))q(\theta) d\theta+\log \int (L(\theta)/L(\theta_0))  p(\theta) d\theta\\
			&= {\rm KL}(Q,P)+\int \log  (L(\theta_0)/L(\theta))q(\theta) d\theta+\log \int (L(\theta)/L(\theta_0))  p(\theta) d\theta
		\end{align}
		By Lemma \ref{q_p_relation}, ${\rm KL}(Q,P)=o(n\epsilon_n^2 \log n)\leq (C/3) n\epsilon_n^2 \log n$. By Lemma \ref{lem:E3}, with dominating probability $$\int  \log  (L(\theta_0)/L(\theta))q(\theta) d\theta \leq (C/3) n \epsilon_n^2 \log n$$
		for any $C>0$. By Lemma \ref{lem:E2},  with dominating probability $$\log \int (L(\theta)/L(\theta_0))  p(\theta) d\theta\leq (C/3) n \epsilon_n^2 \log n$$
		Therefore, with dominating probability, for any $C>0$,
		$$	{\rm KL}(Q,\Pi(|X^{(n)}))\leq Cn\epsilon_n^2$$
		Further,
		\begin{align*}
			\Pi(\mathcal{U}_{\varepsilon_n}^c|X^{(n)})&=\frac{\int_{\mathcal{U}_{\varepsilon_n}^c} L(\theta)p(\theta)d\theta}{\int L(\theta)p(\theta)d\theta}=\frac{\int_{\mathcal{U}_{\varepsilon_n}^c} (L(\theta)/L(\theta_0))p(\theta)d\theta}{\int (L(\theta)/L(\theta_0))p(\theta)d\theta}
		\end{align*}
		By Lemma \ref{lem:E1}, with dominating probability, for any $C>0$, as $n \to \infty $
		$$\int_{\mathcal{U}_{\varepsilon_n}^c} (L(\theta)/L(\theta_0))p(\theta)d\theta\leq \exp(-C_0n \varepsilon_n^2  )$$
		By Lemma \ref{lem:E2}, with dominating probability
		$$\int (L(\theta)/L(\theta_0))p(\theta)d\theta\geq \exp(-C n {\epsilon_n^2} \log n)$$
		Therefore, with dominating probability
		\begin{align*}
			\Pi(\mathcal{U}_{\varepsilon_n}^c|X^{(n)})&\leq \exp(-C_0 n\varepsilon_n^2(1-C/M_n)) \leq \exp(-C_1 n\varepsilon_n^2)
		\end{align*}
		for any $C_0>C_1/2$. This is because for $n$ sufficiently large $1-C/M_n>1/2$.
		
		\noindent This completes the proof.
		
		\subsection{Proof of Corollary \ref{thm: bayes-convergence}}
		
		\noindent By Lemma \ref{lem:E1}, with dominating probability, there exists $C_0(r)>0$ such that as $n \to \infty $,
		$$\int_{\mathcal{U}_{r\varepsilon_n}^c}
		(L(\theta)/L(\theta_0))p(\theta)d\theta\leq \exp(-C_0(r)  r^2 n\varepsilon_n^2  )$$
		Let us assume, \begin{align}
			\label{e:C0-val}
			C_0(r)\geq C_2/r \hspace{5mm}\text{for all $r>0$ for some constant $C_2>0$} 
		\end{align} 
		Numerical evidence for validity of this assumption been provided in Supplement Section A.3. However, the explicit theoretical derivation is technically involved and has been avoided in this paper. By Lemma \ref{lem:E2}, with dominating probability
		$$\int (L(\theta)/L(\theta_0))p(\theta)d\theta\geq \exp(-C n {\epsilon_n^2} \log n)$$
		
		\noindent Note, that
		\begin{align*}
			\Pi(\mathcal{U}_{r\varepsilon_n}^c|X^{(n)})&=\frac{\int_{\mathcal{U}_{r\varepsilon_n}^c} L(\theta)p(\theta)d\theta}{\int L(\theta)p(\theta)d\theta}=\frac{\int_{\mathcal{U}_{r\varepsilon_n}^c} (L(\theta)/L(\theta_0))p(\theta)d\theta}{\int (L(\theta)/L(\theta_0))p(\theta)d\theta}
		\end{align*}
		
		\noindent Therefore, with dominating probability 
		\begin{align*}
			\Pi(\mathcal{U}_{r\varepsilon_n}^c|X^{(n)})&\leq \exp(-C_2 r n\varepsilon_n^2)\exp(Cn\epsilon_n^2 \log n)
		\end{align*} 
		
		\noindent Following steps of proof of proposition 11 on page 2111 in \cite{van2011information},
		\begin{align*}
			\int e^{(C_2/2)n\varepsilon_n ||\theta-\theta_0||_2}    d\Pi(\mathcal{U}_{\varepsilon_n}^c|X^{(n)})&=\int_0^{\infty} e^{(C_2/2)n\varepsilon_n^2 r}\Pi(||\theta-\theta_0||_2\geq r\varepsilon_n|X^{(n)})dr 
		\end{align*}
		
		\noindent Therefore,
		\begin{align*}
			& \int e^{(C_2/2) n\varepsilon_n ||\theta-\theta_0||_2}    d\Pi(\mathcal{U}_{\varepsilon_n}^c|X^{(n)})\\
			&=\exp(Cn\epsilon_n^2 \log n)\int_{0}^\infty \exp((C_2/2)r n\varepsilon_n^2)\exp(-C_2 r n\varepsilon_n^2)dr\\
			&= \exp(Cn\epsilon_n^2 \log n) \int_0^{\infty} \exp(-(C_2/2) rn \varepsilon_n^2)dr= \frac{2}{C_2 n\varepsilon_n^2 }\exp(Cn\epsilon_n^2 \log n) \\
		\end{align*}
		This completes the proof.
	\end{appendix}
	
	\begin{acks}[Acknowledgments]
		The authors are grateful to Promit Ghosal and Sumit Mukherjee for their accomplishment in Ising model parameter estimation and kindly sharing codes of their work. The research is partially  supported by NSF-DMS 1945824 and NSF-DMS 1924724.
	\end{acks}

	\begin{supplement}
		\stitle{Supplement Materials}
		\sdescription{{\bf Appendix A.} Contains implementation details of the BBVI algorithm, plots of ELBO convergence for the BBVI and proof of Relation \eqref{e:C0-val}.}
	\end{supplement}
	
	\bibliographystyle{imsart-number} 
	\bibliography{paper-ref.bib}       

\end{document}